&pdflatex
\documentclass[5p]{elsarticle}

\usepackage{url}
\usepackage{amssymb}
\usepackage{amsmath}
\usepackage{amsthm}
\usepackage{nicefrac}
\usepackage{tabu}
\usepackage{algpseudocode}
\usepackage[section]{algorithm}

\journal{Computer-Aided Design}

\begin{document}

\let\originalleft\left
\let\originalright\right
\renewcommand{\left}{\mathopen{}\mathclose\bgroup\originalleft}
\renewcommand{\right}{\aftergroup\egroup\originalright}

\newcommand{\tablestrutsize}{2.80cm}

\newcommand{\eqnspace}{2ex}

\newcommand{\cpp}{{\nolinebreak C\texttt{++} }}

\newcommand{\BigO}[1]{\mathop{}\!O{\left(#1\right)}}

\newcommand{\Del}[1]{\operatorname{Del}\left(#1\right)}
\newcommand{\Vor}[1]{\operatorname{Vor}\left(#1\right)}
\newcommand{\Conv}[1]{\operatorname{Conv}\left(#1\right)}
\newcommand{\DelS}[1]{\operatorname{Del}|_{\Sigma}\left(#1\right)}
\newcommand{\DelV}[1]{\operatorname{Del}|_{\Omega}\left(#1\right)}
\newcommand{\TS}{\mathcal{T}|_{\Sigma}}
\newcommand{\TV}{\mathcal{T}|_{\Omega}}
\newcommand{\reledge}{h_{r}}
\newcommand{\MAD}{\operatorname{MAD}\left(\theta_{f}\right)}

\newtheorem{proposition}{Proposition}
\newtheorem{corollary}{Corollary}
\newtheorem{lemma}{Lemma}
\newdefinition{definition}{Definition}
\newproof{pf}{Proof}

\begin{frontmatter}

\title{Off-centre Steiner points for Delaunay-refinement on curved surfaces\tnoteref{tnote1}}
\tnotetext[tnote1]{A short version of this paper appears in the proceedings of the 23rd International Meshing Roundtable \cite{engwirda2014face}.}

\author[a]{Darren Engwirda\corref{cor1}\corref{cor2}}
\ead{d.engwirda@maths.usyd.edu.au; engwirda@mit.edu}

\author[a]{David Ivers}
\ead{david.ivers@sydney.edu.au}

\address[a]{School of Mathematics and Statistics F07, University of Sydney, NSW 2006, Australia}

\cortext[cor1]{Corresponding author. Tel.: +1-212-678-5573.}
\cortext[cor2]{Current address: Department of Earth, Atmospheric and Planetary Sciences, Room 54-1517, Massachusetts Institute of Technology, 77 Massachusetts Avenue, Cambridge, MA 02139-4307.}

\begin{abstract}
An extension of the restricted Delaunay-refinement algorithm for surface mesh generation is described, where a new point-placement scheme is introduced to improve element quality in the presence of mesh size constraints. Specifically, it is shown that the use of \textit{off-centre} Steiner points, positioned on the faces of the associated Voronoi diagram, typically leads to significant improvements in the shape- and size-quality of the resulting surface tessellations. The new algorithm can be viewed as a Frontal-Delaunay approach -- a hybridisation of conventional Delaunay-refinement and advancing-front techniques in which new vertices are positioned to satisfy both element size and shape constraints. The performance of the new scheme is investigated experimentally via a series of comparative studies that contrast its performance with that of a typical Delaunay-refinement technique. It is shown that the new method inherits many of the best features of classical Delaunay-refinement and advancing-front type methods, leading to the construction of smooth, high quality surface triangulations with bounded radius-edge ratios and convergence guarantees. Experiments are conducted using a range of complex benchmarks, verifying the robustness and practical performance of the proposed scheme.
\end{abstract}

\begin{keyword}
Surface mesh generation \sep Delaunay-refinement \sep Advancing-front \sep Frontal-Delaunay \sep Off-centre Steiner points \sep Element quality
\end{keyword}
\end{frontmatter}

\section{Introduction}
\label{section_introduction}

Surface mesh generation is a key component in a variety of computational modelling and simulation tasks, including many forms of computational engineering and numerical modelling, a range of problems in computer graphics and animation, and various data visualisation applications. Given a geometric domain described by a bounding surface $\Sigma$ embedded in $\mathbb{R}^{3}$, the surface \textit{triangulation} problem consists of tessellating $\Sigma$ into a \textit{mesh} of non-overlapping triangular elements, such that all geometrical, topological and user-defined constraints are satisfied. While each use-case contributes its own set of specific considerations, it is typical to require that surface tessellations: (i) consist of elements of high shape-quality, (ii) provide good geometrical and topological approximations to the underlying surface $\Sigma$, and (iii) satisfy a set of user-specified element sizing constraints. While various strategies have been developed to solve the surface meshing problem in the past, a new algorithm is developed in this study with the aim of improving the \textit{quality} of the resulting triangulations.

\subsection{Related Work}

Mesh generation is a broad and evolving area of research. A number of algorithms have been developed to tackle various meshing tasks, including \textit{grid-overlay} methods \cite{Bern90ProvablyGood,Mitchell92Octree} in which a mesh is generated by warping and refining a semi-structured background grid, \textit{advancing-front} techniques \cite{Peraire99Afront,NETGEN,Rypl03HabThesis,Schreiner06Afront} which grow a mesh incrementally in a layer-wise fashion, and \textit{Delaunay-based} strategies \cite{Chew89Provable,Ruppert93Provable,Ruppert95Provable,Shewchuk97PhD,Shewchuk98Tetra,
Cheng03WeightedDelaunay,Cheng10PiecewiseSmoothMeshing,boissonnat03ProvablyGoodSurface,
boissonnat05ProvablyGoodMeshing,jamin2013cgalmesh} which incrementally refine an initially coarse Delaunay triangulation. 

The Delaunay triangulation \cite{Delaunay34Sphere} is imbued with a number of attractive theoretical properties. In the context of mesh generation, for triangulations in $\mathbb{R}^{2}$ and surface tessellations embedded in $\mathbb{R}^{3}$, it is known that such structures are topologically optimal \cite{ChengDeyShewchuk}, maximising the worst-case element quality in the resulting mesh. The reader is referred to \cite{ChengDeyShewchuk} for additional details. As the name suggests, Delaunay-refinement schemes are based on the incremental \textit{refinement} of an initially coarse bounding Delaunay tessellation. At each step of the algorithm, elements that violate a set of geometrical, topological or user-defined constraints are identified and the worst offending element is \textit{eliminated}. This is achieved through the insertion of an additional \textit{Steiner} vertex at the refinement point of the element -- either the circumcentre of the element itself, or a point on an adjacent segment of the bounding geometry. The original planar Delaunay-refinement methods of Chew \cite{Chew89Provable} and Ruppert \cite{Ruppert93Provable,Ruppert95Provable} have since been extended to support volumetric tessellations \cite{Shewchuk98Tetra} and surface triangulations \cite{boissonnat03ProvablyGoodSurface,boissonnat05ProvablyGoodMeshing,
Cheng10PiecewiseSmoothMeshing}.

Delaunay-based methods have previously been applied to the surface meshing problem, via the so-called \textit{restricted} Delaunay paradigm originally introduced by Edelsbrunner and Shah \cite{Edelsbrunner97Restricted}. Given a surface $\Sigma$ embedded in $\mathbb{R}^{3}$ and a set of points $X\in\Sigma$, the restricted Delaunay surface triangulation $\DelS{X}$ is a sub-complex of the full-dimensional Delaunay tessellation $\Del{X}$, containing the set of 2-faces $f\in\Del{X}$ that approximate the underlying surface. Boissonnat and Oudot \cite{boissonnat03ProvablyGoodSurface,boissonnat05ProvablyGoodMeshing} have shown that when the point-wise sampling $X$ is sufficiently \textit{dense} and \textit{well-distributed}, the restricted Delaunay triangulation $\DelS{X}$ is an accurate piecewise approximation of $\Sigma$, incorporating several theoretical guarantees of fidelity. Given such behaviour, a Delaunay-refinement algorithm for surface mesh generation proceeds as per the planar case, with an initially coarse restricted Delaunay triangulation, induced by a sparse sampling $X\in\Sigma$, being incrementally refined via the introduction of new Steiner vertices positioned on the surface $\Sigma$. This process is discussed in detail in Section~\ref{section_delaunay_refinement}. 

Advancing-front techniques are an alternative approach to mesh generation, in which elements are created one-by-one, starting from a set of \textit{frontal} facets initialised on the bounding geometry. The mesh is \textit{marched} inwards layer-by-layer, with new elements created by carefully positioning vertices adjacent to facets in the frontal set. Following the placement of new elements, the set of frontal facets is updated to incorporate the new mesh topology. Elements are added incrementally in this fashion until a complete tessellation of the domain is obtained. Such methods are applicable to surface meshing problems \cite{NETGEN,Rypl03HabThesis,Schreiner06Afront}, although several non-trivial issues of robustness are known to exist \cite{Schreiner06Afront}. Fundamentally, advancing-front techniques are \textit{heuristic} in nature, and do not typically offer theoretical guarantees of convergence or bounds on element quality. Nonetheless, when such methods are successful, they are known to produce very high-quality tessellations.

Due to their theoretical robustness, Delaunay-refine\-ment schemes are a popular choice for practical meshing algorithms. On the other hand, it is well known that optimised advancing-front type methods can often produce meshes with superior element quality characteristics, though their heuristics can sometimes break-down in practice. In this study, a new \textit{Frontal-Delaunay} algorithm is presented which seeks to combine the benefits of classical Delaunay-refine\-ment and advancing-front type approaches. While such strategies have been pursued previously in the case of planar and/or volumetric refinement algorithms, it is believed that the present study is the first application of such ideas to the restricted surface meshing problem directly. The new algorithm is designed to produce smooth, high-quality triangulations consistent with an advancing-front scheme, whilst also inheriting the theoretical robustness of Delaunay-based techniques. It is expected that this new algorithm may be of interest to users who place a high premium on mesh quality, including those operating in the areas of computational engineering and numerical simulation.

A brief review of the traditional surface Delaunay-refine\-ment scheme is presented in Section~\ref{section_delaunay_refinement}, along with a discussion of the underlying theoretical constructs, including the \textit{restricted} Delaunay tessellation. The new Frontal-Delaunay algorithm is presented in Section~\ref{section_frontal_delaunay}, focusing in-detail on the \textit{off-centre} point-placement scheme proposed in this work. In Section~\ref{section_results}, a detailed comparison between the conventional Delaunay-refinement and proposed Frontal-Delaunay schemes is presented, contrasting output quality and computational performance.

\section{Restricted Delaunay-refinement Techniques}
\label{section_delaunay_refinement}

Delaunay-refinement algorithms operate by incrementally adding new Steiner vertices to an initially coarse restricted Delaunay triangulation of the underlying surface. Contrary to typical planar algorithms \cite{Chew89Provable,Ruppert93Provable,Ruppert95Provable}, refinement schemes for surface meshing are designed not only to ensure that the resulting mesh satisfies element shape and size constraints, but that the geometry and topology of the mesh itself is an accurate piecewise approximation of the underlying surface. 

\begin{figure*}[t]
\centering
\caption{Restricted tessellations for a curved domain in $\mathbb{R}^{2}$, showing (i) the curve $\Sigma$ and enclosed area $\Omega$, (ii) the Delaunay tessellation $\Del{X}$ and Voronoi diagram $\Vor{X}$, and (iii) the restricted curve and area tessellations $\DelS{X}$ and $\DelV{X}$.}

\label{figure_restricted_delaunay}

{
\footnotesize
\tabulinesep=0pt

\medskip

\begin{tabu}{ccc}

\begin{minipage}[c]{0.25\textwidth}
\centering
\includegraphics[width=4.25cm]{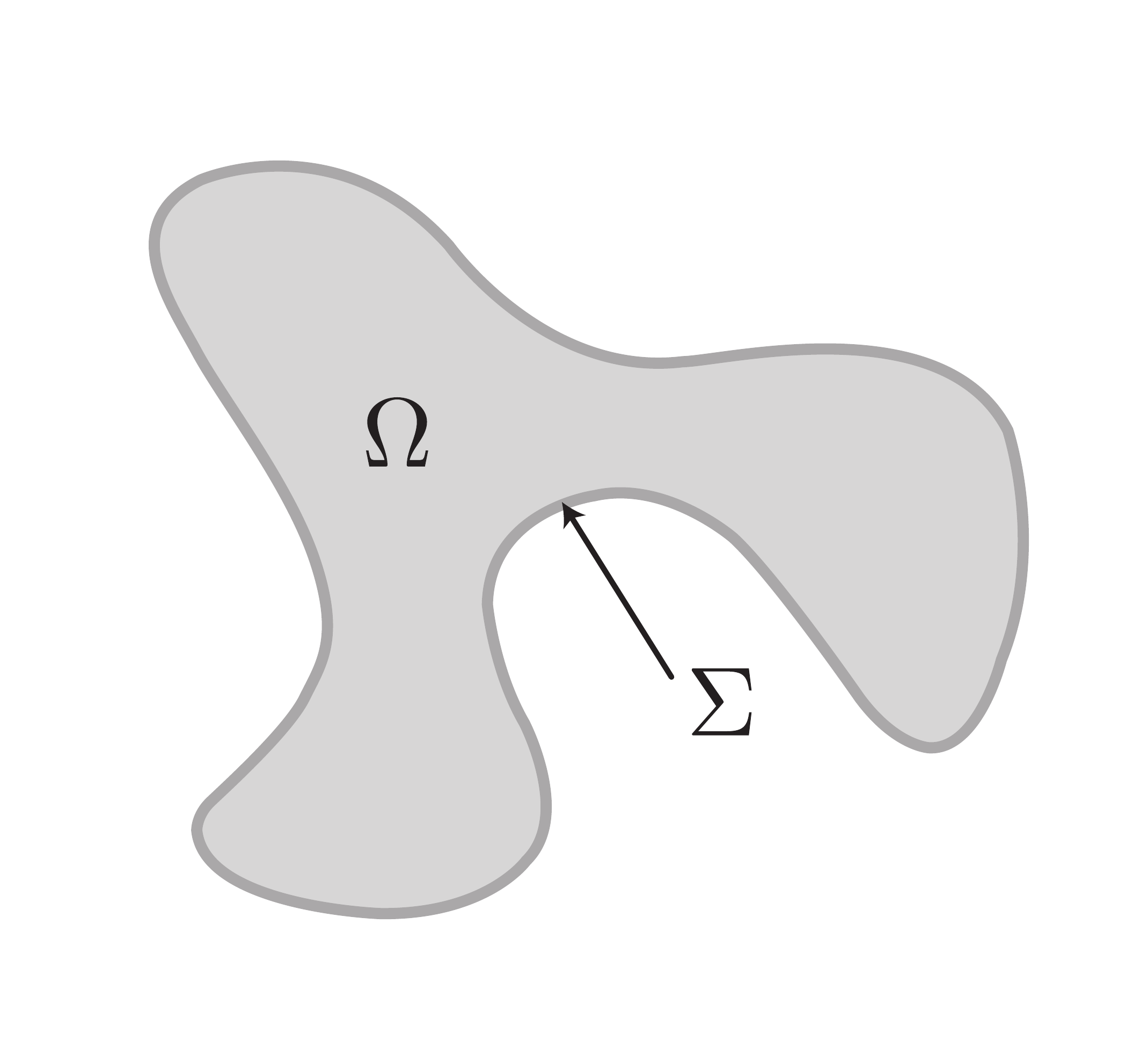} 
\end{minipage} &
\begin{minipage}[c]{0.25\textwidth}
\centering
\includegraphics[width=4.25cm]{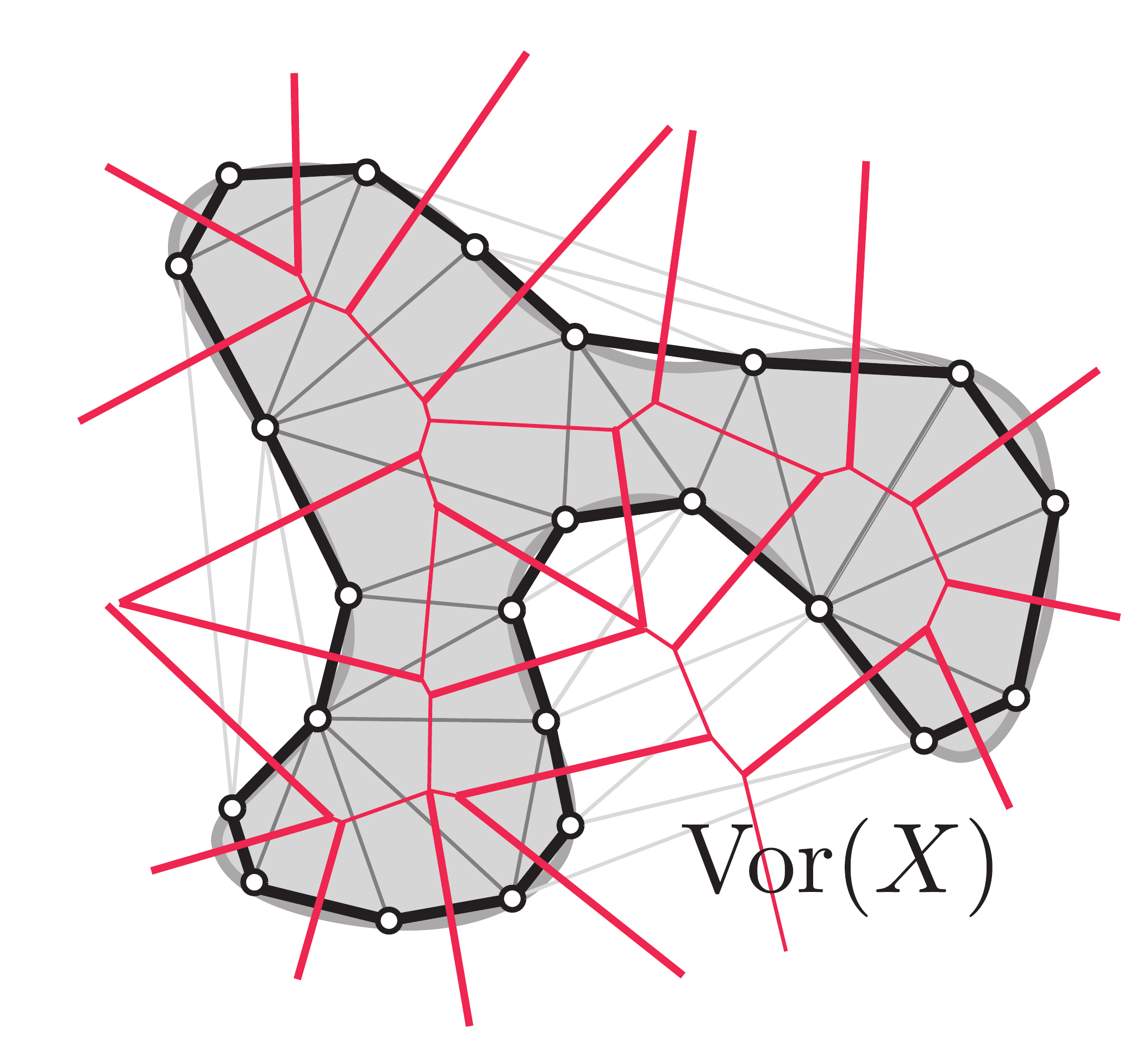} 
\end{minipage} &
\begin{minipage}[c]{0.25\textwidth}
\centering
\includegraphics[width=4.25cm]{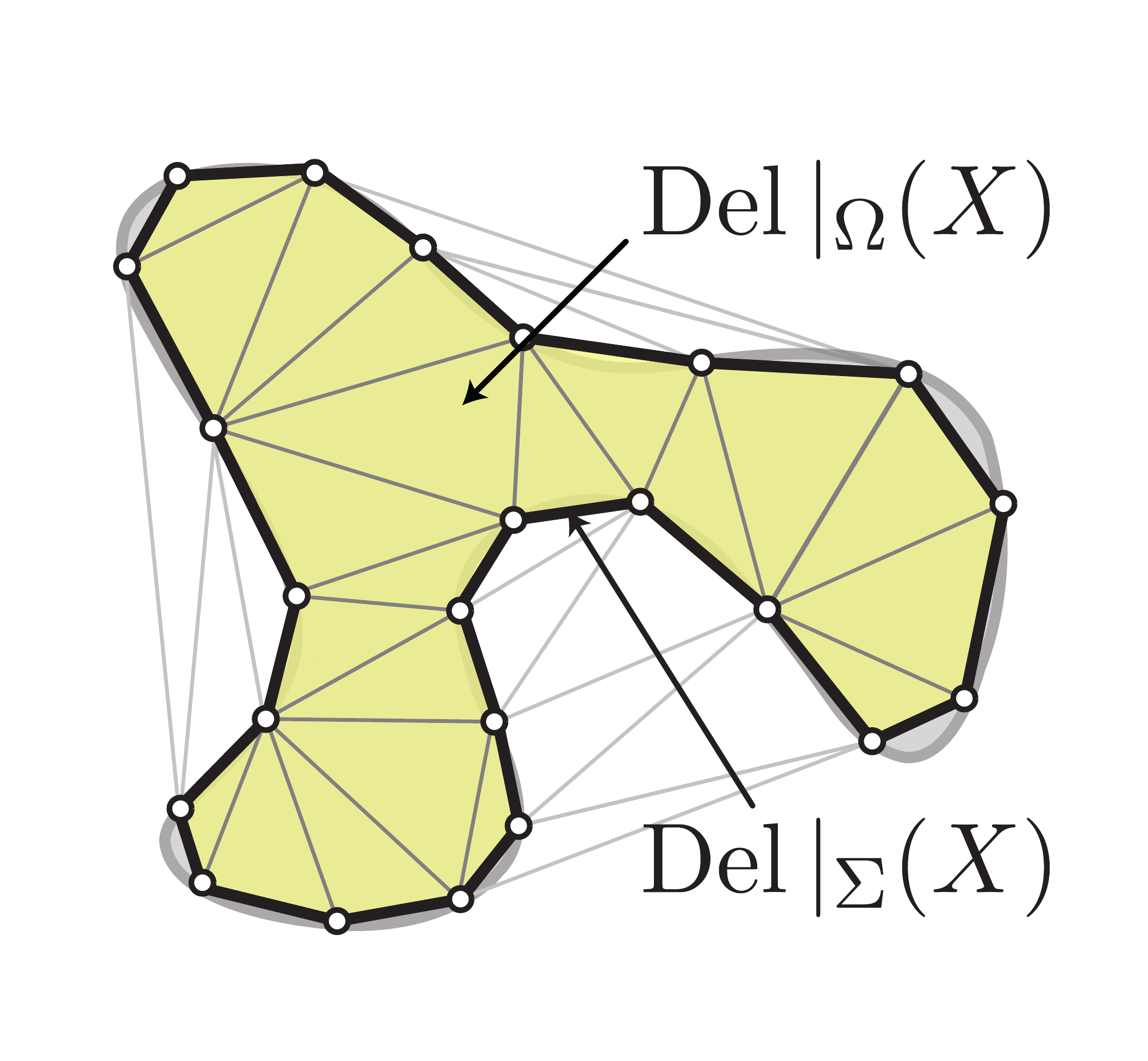} 
\end{minipage} \\

(i) & (ii) & (iii) 

\end{tabu}}
\end{figure*}

Before discussing the details of Delaunay-refinement algorithms for surface meshing problems, a number of important theoretical concepts are introduced:

\begin{definition}
\label{definition_restricted_tessellation}
Let $\Sigma$ be a smooth surface embedded in $\mathbb{R}^3$, enclosing a volume $\Omega\subset\mathbb{R}^3$. Let $\Del{X}$ be a full-dimensional Delaunay tessellation of a point-wise sample $X\subseteq\Sigma$ and $\Vor{X}$ be the Voronoi diagram associated with $\Del{X}$. The \textit{restricted Delaunay surface triangulation} $\DelS{X}$ is a sub-complex of $\Del{X}$ including any 2-face $f\in\Del{X}$ associated with an edge $\mathbf{v}_{f}\in\Vor{X}$ such that $\mathbf{v}_{f}\cap\Sigma\neq\emptyset$. The \textit{restricted Delaunay volume triangulation} $\DelV{X}$ is a sub-complex of $\Del{X}$ including any 3-simplex $\tau\in\Del{X}$ associated with an \textit{internal} circumcentre $\mathbf{c}\in\Omega$.
\end{definition}

It has been shown by a number of authors, including Amenta and Bern \cite{amenta1999surface}, Cheng, Dey, Levine \cite{Cheng08PiecewiseSmoothMeshing},  Cheng, Dey and Ramos \cite{Cheng10PiecewiseSmoothMeshing}, Cheng, Dey, Ramos and Ray \cite{cheng2007sampling}, Boissonnat and Oudot \cite{boissonnat03ProvablyGoodSurface,boissonnat05ProvablyGoodMeshing} and Cheng, Dey and Shewchuk \cite{ChengDeyShewchuk}, that given a \textit{sufficiently dense} point-wise sampling of the surface, $X\in\Sigma$, the restricted Delaunay tessellation $\DelS{X}$ is guaranteed to be both geometrically and topologically representative of the underlying surface $\Sigma$. Specifically, it is known that, when the restricted tessellation is a so-called \textit{loose $\epsilon$-sample} \cite{boissonnat03ProvablyGoodSurface,boissonnat05ProvablyGoodMeshing}, $\DelS{X}$ is homeomorphic to the surface $\Sigma$, the Hausdorff distance $\operatorname{H}\left(\DelS{X},\Sigma\right)$ is small and that $\DelS{X}$ provides a good piecewise approximation of the geometrical properties of $\Sigma$, including its normals, area and curvature. Similar theoretical guarantees extend to the associated restricted volume tessellation $\DelV{X}$. The properties of restricted Delaunay tessellations are well documented in the literature, and a full account is not given here. The reader is referred to the detailed expositions presented in \cite{boissonnat03ProvablyGoodSurface,boissonnat05ProvablyGoodMeshing} and \cite{ChengDeyShewchuk} for further details and proofs.

\begin{definition} 
Let $\DelS{X}$ be a restricted Delaunay triangulation of a smooth surface $\Sigma$. Given a 2-simplex $f\in\DelS{X}$, any circumball of $f$ centred on the surface $\Sigma$ is known as a \textit{surface Delaunay ball} of $f$, denoted $\operatorname{SDB}\left(f\right)$.
\end{definition}

Surface Delaunay balls are centred at the intersection of the Voronoi diagram $\Vor{X}$ with the surface $\Sigma$. Specifically, each 2-face $f\in\DelS{X}$ is associated with an orthogonal edge in the Voronoi diagram $\mathbf{v}_{f}\in\Vor{X}$, which is guaranteed, by definition, to intersect with the surface $\Sigma$ \textit{at least} once. In some cases, especially when the sampling $X\in\Sigma$ is relatively coarse, there may be multiple surface intersections associated with a given 2-face $f\in\DelS{X}$. In such cases, it is typical to consider only the largest associated surface ball. See Figure~\ref{figure_surface_delaunay_ball} for an illustration of the surface Delaunay ball for a general facet $f$.

\begin{definition} 
Let $\DelS{X}$ be a restricted Delaunay surface triangulation of a smooth surface $\Sigma$. Given a 2-simplex $f\in\DelS{X}$, the \textit{surface discretisation error} $\epsilon(f)$ is the Euclidean distance between the centres of the largest surface Delaunay ball of $f$ and its diametric ball.
\end{definition}

The surface discretisation error is a measure of how well the restricted triangulation $\DelS{X}$ approximates the underlying surface $\Sigma$ geometrically. Considering a 2-face $f\in\DelS{X}$, the surface discretisation error $\epsilon(f)$ is a measure of the distance between the face $f$ and the furthest adjacent point on the surface $\Sigma$. This measure can be thought of as a one-sided discrete Hausdorff distance, $\epsilon\left(f\right)=\operatorname{H}\left(\DelS{X},\Sigma\right)$, defined from a set of representative points on the triangulation $\DelS{X}$ to the surface $\Sigma$. Clearly, if the triangulation $\DelS{X}$ is a good piecewise representation of the surface $\Sigma$, the surface discretisation error $\epsilon\left(f\right)$ should be small. See Figure~\ref{figure_surface_delaunay_ball} for a description of the surface discretisation error for a facet $f\in\DelS{X}$.

\begin{definition}
Given a $d$-simplex $\tau$, its \textit{radius-edge} ratio, $\rho\left(\tau\right)$, is given by:
\begin{gather}
\rho\left(\tau\right) = \nicefrac{R}{\|\mathbf{e}_{0}\|},
\end{gather} 
where $R$ is the radius of the diametric ball of $\tau$ and $\|\mathbf{e}_{0}\|$ is the length of its shortest edge.
\end{definition}

The radius-edge ratio is a measure of element shape-quality. It achieves a minimum, $\rho\left(\tau\right)=\nicefrac{1}{\sqrt{3}}$, for equilateral triangles and increases toward $+\infty$ as elements tend toward degeneracy. For 2-simplexes, the radius-edge ratio is \textit{robust} and can be related to the minimum plane angle $\theta_{\text{min}}$ between adjacent edges, such that $\rho\left(\tau\right)={1\over 2}\left(\sin\left(\theta_{\text{min}}\right)\right)^{{-1}}$. Due to the summation of angles in a triangle, given a minimum angle $\theta_{\text{min}}$ the largest angle $\theta_{\text{max}}$ is clearly bounded, such that $\theta_{\text{max}}\leq 180^\circ - 2\theta_{\text{min}}$.

\begin{figure*}[t]
\centering
\caption{The surface Delaunay ball for a restricted 2-face $f\in\DelS{X}$, showing (i) placement of the surface ball at the intersection of the dual edge $\mathbf{v}_{f}\in\Vor{X}$ and the surface $\Sigma$, and (ii) associated radius $r(f)$ and surface discretisation error $\epsilon(f)$.}

\label{figure_surface_delaunay_ball}

{
\footnotesize
\tabulinesep=2pt

\medskip

\begin{tabu} {cc}

\begin{minipage}[c]{0.375\textwidth}
\centering
\includegraphics[width=3.75cm]{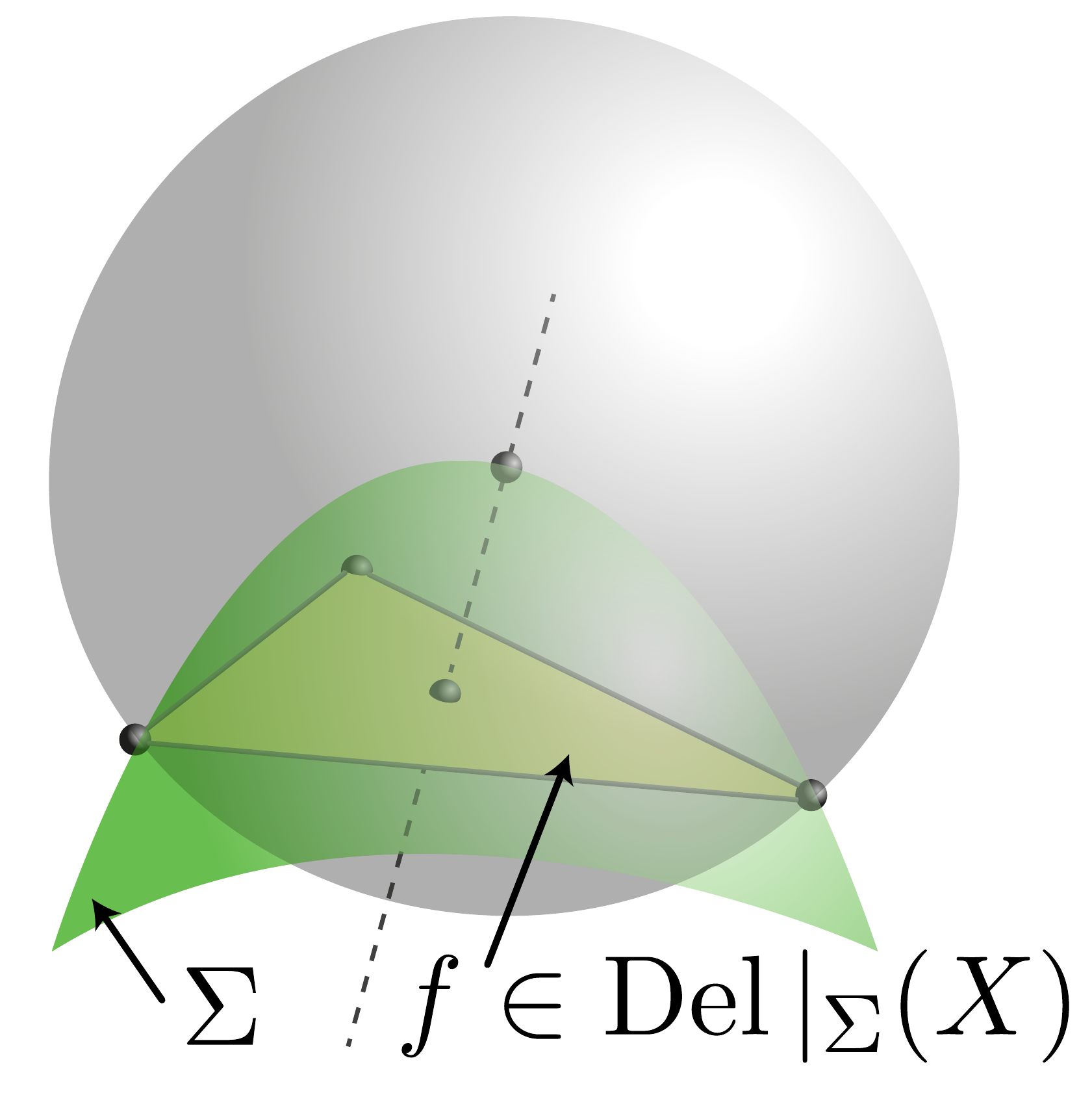} 
\end{minipage} &
\begin{minipage}[c]{0.375\textwidth}
\centering
\includegraphics[width=3.75cm]{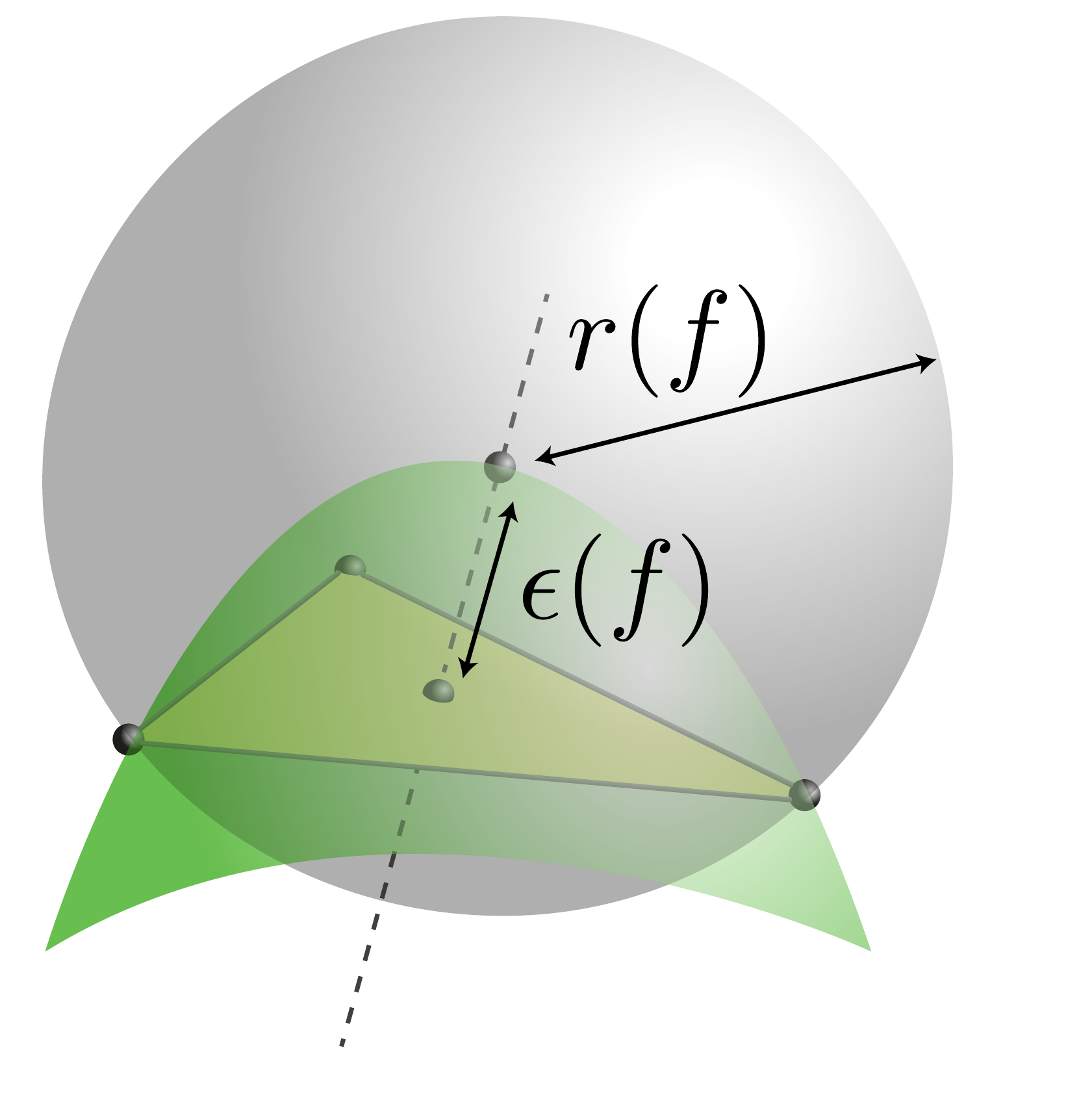} 
\end{minipage} \\

(i) & (ii) 

\end{tabu}}
\end{figure*}

\subsection{An Existing Delaunay-refinement Algorithm}

The development of \textit{provably-good} Delaunay-refine\-ment schemes for surface-based mesh generation is an ongoing area of research. An algorithm for the meshing of closed 2-manifolds embedded in $\mathbb{R}^3$, adapted largely from the methods presented by Boissonnat and Oudot in \cite{boissonnat03ProvablyGoodSurface,boissonnat05ProvablyGoodMeshing} is presented here. This method is largely equivalent to the \textsc{cgalmesh} algorithm, available as part of the \textsc{cgal} package, and summarised by Jamin, Alliez, Yvinec and Boissonnat in \cite{jamin2013cgalmesh}. A similar algorithm is also described by Cheng, Dey and Shewchuk in \cite{ChengDeyShewchuk} and Dey and Levine in \cite{dey2009delaunay}, while a variant developed for the reconstruction of medical images is presented by Foteinos, Chernikov and Chrisochoides in \cite{foteinos2014tetmed}. The algorithm presented in this section is referred to as the `conventional' Delaunay-refinement approach throughout, due to its direct use of circumcentre-based Steiner vertices. 

As per Jamin et~al.~\cite{jamin2013cgalmesh}, the development of the conventional Del\-aunay-refinement algorithm is \textit{geometry-agnostic}, being independent of the specific description used for the underlying surface $\Sigma$. It is required only that the geometry support a so-called \textit{oracle} predicate that can be used to compute the intersection of a given line segment with the surface $\Sigma$. The Frontal-Delaunay algorithm presented in Section~\ref{section_frontal_delaunay}, additionally requires that the oracle compute intersections between a disk of a given radius and the surface $\Sigma$. Such constructions will be discussed in further detail in Section~\ref{section_frontal_delaunay}. While a very general class of surface description is supported at the theoretical level, in this study, attention is restricted to the development of so-called \textit{remeshing} operations, in which the underlying surfaces $\Sigma$ are specified as existing 2-manifold triangular complexes $\mathcal{P}$. This restriction is made in the present study for convenience only, and future work is intended to focus on more general surface descriptions, including implicit and analytic functions, in addition to those that contain sharp creases. 

Following Jamin et~al.~\cite{jamin2013cgalmesh}, the Delaunay-refinement algorithm takes as input a surface domain, described by a smooth 2-manifold $\Sigma$, an upper bound on the allowable element radius-edge ratio $\bar{\rho}$, a mesh size function $\bar{h}\left(\mathbf{x}\right)$ defined at all points on the surface $\Sigma$ and an upper bound on the allowable surface discretisation error $\bar{\epsilon}\left(\mathbf{x}\right)$. The input surface $\Sigma$ encloses a bounded volume $\Omega$. The algorithm returns a triangulation $\TS$ of the surface $\Sigma$, where $\TS$ is a restricted Delaunay surface triangulation of a point-wise sampling $X\in\Sigma$, such that $\TS=\DelS{X}$. As a by-product, the algorithm also returns a coarse triangulation $\TV$ of the enclosed volume $\Omega$, where $\TV$ is a restricted Delaunay volume triangulation $\TV=\DelV{X}$. Both $\DelS{X}$ and $\DelV{X}$ are sub-complexes of the full-dimensional Delaunay tessellation $\Del{X}$. Note that $\DelS{X}$ is a triangular complex, while $\DelV{X}$ and $\Del{X}$ are tetrahedral complexes. The method is summarised in Algorithm~\ref{algorithm_restricted_delaunay_surface}.

The Delaunay-refinement algorithm guarantees, firstly, that all elements in the output surface triangulation $f\in\TS$ satisfy element shape constraints, $\rho\left(f\right)\leq\bar{\rho}$, element size constraints $h\left(f\right)\leq \bar{h}(\mathbf{x}_{f})$ and surface discretisation bounds $\epsilon\left(f\right)\leq\bar{\epsilon}(\mathbf{x}_{f})$. Furthermore, for sufficiently small mesh size functions $\bar{h}\left(\mathbf{x}\right)$, the surface triangulation $\TS$ is guaranteed to be a \textit{good} piecewise approximation of the underlying surface $\Sigma$, exhibiting both geometrical and topological convergence as $\bar{h}\left(\mathbf{x}\right)\rightarrow 0$, consistent with the characteristics of restricted Delaunay tessellations outlined in  Definition~\ref{definition_restricted_tessellation}.

The Delaunay-refinement algorithm begins by creating an initial point-wise sampling of the surface $X_{0}\in\Sigma$. Exploiting the discrete representation available for $\Sigma$, in this study the initial sampling is obtained as a \textit{well-distributed} subset of the existing vertices $Y\in\mathcal{P}$, where $\mathcal{P}$ is the polyhedral representation of the surface $\Sigma$. Care is taken to ensure that each connected component in $\mathcal{P}$ is sampled. Alternative initialisation methods making use of robust \textit{persistent-triangles} techniques are described in \cite{boissonnat05ProvablyGoodMeshing,ChengDeyShewchuk}.

In the next step, the initial triangulation objects are formed. In the present work, the full-dimensional Delaunay tessellation, $\Del{X}$, is built using an incremental Delaunay triangulation algorithm, based on the Bowyer-Watson technique \cite{bowyer81algorithm}. The restricted surface and volumetric triangulations, $\DelS{X}$ and $\DelV{X}$, are derived from $\Del{X}$ by explicitly testing for intersections between edges in the associated Voronoi diagram $\Vor{X}$ and the surface $\Sigma$. These queries are computed efficiently by storing the surface definition $\mathcal{P}$ in a spatial-tree. 

The main loop of the algorithm proceeds to incrementally refine any 2-faces $f\in\DelS{X}$ that violate either the radius-edge, element size or surface discretisation requirements. The refinement process is priority scheduled, with triangles $f\in\DelS{X}$ ordered according to their radius-edge ratios $\rho\left(f\right)$, ensuring that the element with the \textit{worst} ratio is refined at each iteration. Individual elements are refined based on their surface Delaunay balls, with a triangle $f\in\Del{X}$ eliminated by inserting the centre of the largest ball $\operatorname{B}\left(\mathbf{c},r\right)_{\text{max}}=\operatorname{SDB}\left(f\right)$ into the tessellation $\Del{X}$. This process is a direct generalisation of the circumcentre-based insertion method associated with Ruppert's algorithm for planar domains \cite{Ruppert93Provable,Ruppert95Provable}. 

As a consequence of changes to the full-dimensional tessellation following the insertion of a new Steiner vertex, corresponding updates to the restricted triangulations $\DelS{X}$ and $\DelV{X}$ are instigated, ensuring that all tessellation objects remain valid throughout the refinement process. The Delaunay-refinement algorithm terminates when all 2-faces $f\in\DelS{X}$ satisfy all radius-edge, size and surface discretisation thresholds, such that\footnote{The coefficient $\alpha = \nicefrac{4}{3}$, ensuring that the mean element size does not, on average,  undershoot the desired target size $\bar{h}(\mathbf{x}_{f})$.} $\rho\left(f\right)\leq\bar{\rho}$, $h\left(f\right)\leq \alpha \bar{h}(\mathbf{x}_{f})$ and $\epsilon\left(f\right)\leq\bar{\epsilon}(\mathbf{x}_{f})$, respectively, where the element size $h\left(f\right)$ is proportional to the radius of the associated surface ball $\operatorname{B}\left(\mathbf{c},r\right)_{\text{max}}=\operatorname{SDB}\left(f\right)$, such that\footnote{The coefficient $\sqrt{3}$ represents the mapping between the edge length and diametric ball radius for an equilateral element. Such scaling ensures that size constraints are applied with respect to mean edge length.} $h\left(f\right) = \sqrt{3}\,r$, and the target size $\bar{h}(\mathbf{x}_{f})$ is sampled at the centre of the surface ball $\mathbf{x}_{f}=\mathbf{c}$.

\begin{algorithm*}[t]
\caption{Surface mesh generation by restricted Delaunay-refinement}
\label{algorithm_restricted_delaunay_surface}
\centering

{
\footnotesize
\tabulinesep=0pt

\smallskip

\begin{tabu} {c|c}

\begin{minipage}[c]{.45\textwidth}
\begin{algorithmic}[1]

\Function{DelaunaySurface}{$\Sigma,\Omega,\bar{\rho},\bar{\epsilon}\left(\mathbf{x}\right),\bar{h}\left(\mathbf{x}\right),\TS,\TV$}

 \State \parbox[t]{.70\textwidth}{Form an initial sampling $X\in\Sigma$ such that $X$ is \textit{well-distributed} on $\Sigma$.\strut}
 \State Form Delaunay tessellation $\Del{X}$.
 \State Form restricted objects $\DelS{X}$ and $\DelV{X}$.

 \State \parbox[t]{.70\textwidth}{Enqueue all restricted 2-simplexes $Q|_{\Sigma}\gets f\in\DelS{X}$. A 2-simplex $f$ is enqueued if \Call{BadSimplex}{$f$} returns \textsc{true}.\strut}

 \While{$\left(Q|_{\Sigma}\neq\emptyset\right)$}\Comment{\{\textit{main refinement sweeps}\}}
  \State Call \Call{RefineSimplex}{$f\gets Q|_{\Sigma}$}
  \State \parbox[t]{.625\textwidth}{Update the restricted Delaunay tessellations $\DelS{X}$ and $\DelV{X}$.\strut}
  \State Update $Q|_{\Sigma}$ to reflect changes to $\DelS{X}$.

 \EndWhile
 \State\Return $\:\TS=\DelS{X}$ and $\TV=\DelV{X}$
\EndFunction
\end{algorithmic}

\end{minipage} &
\begin{minipage}[c]{.45\textwidth}

\begin{algorithmic}[1]
\Function{RefineSimplex}{$f$}\Comment{\{\textit{surface refinement}\}}
 \State Call \Call{SurfaceDelaunayBall}{$f,\operatorname{B}\left(\mathbf{c},r\right)_{\text{max}}$}. 
 \State Form new Steiner vertex $\mathbf{p}$ about $\operatorname{B}\left(\mathbf{c},r\right)_{\text{max}}$.
 \State Insert Steiner vertex $X\gets\mathbf{p}$, update $\Del{X}\gets X$.
\EndFunction
\end{algorithmic}

\begin{algorithmic}[1]
\Function{SurfaceDelaunayBall}{$f$}\Comment{\{\textit{surface ball}\}}
 \State Form Voronoi edge $\mathbf{v}_{f}$ orthogonal to 2-simplex $f$.
 \State \parbox[t]{.70\textwidth}{Form the set of associated surface Delaunay balls $\operatorname{B}\left(\mathbf{c},r\right)_{i}$ for the restricted 2-simplex $f\in\DelS{X}$. Balls are centred about the set of surface intersections $\mathbf{v}_{f}\cap\Sigma\neq\emptyset$.}
 \State\Return $\operatorname{B}\left(\mathbf{c},r\right)_{\text{max}}$, where $r_{\text{max}}$ is maximal.
\EndFunction
\end{algorithmic}

\begin{algorithmic}[1]
\Function{BadSimplex}{$f$}\Comment{\{\textit{termination criteria}\}}
 \State\Return $\:\rho\left(f\right)>\bar{\rho}$ or 
               $\epsilon\left(f\right)>\bar{\epsilon}(\mathbf{x}_{f})$ or
               $h\left(f\right)>\bar{h}(\mathbf{x}_{f})$
\EndFunction

\end{algorithmic}
\end{minipage}

\end{tabu}}

\smallskip

\end{algorithm*}

\section{Restricted Frontal-Delaunay Methods}
\label{section_frontal_delaunay}

Frontal-Delaunay algorithms are a hybridisation of adv\-ancing-front and Delaunay-refinement techniques, in which a Delaunay triangulation is used to define the topology of a mesh while
new Steiner vertices are inserted in a manner consistent with advancing-front methodologies. In practice, such techniques have been observed to produce very high-quality meshes, inheriting the smooth, semi-structured vertex placement of pure advancing-front methods and the optimal mesh topology of Delaunay-based approaches. While Frontal-Delaunay methods have previously been used by a range of authors in the context of planar, volumetric and parametric surface meshing, including, for example studies by \"Ung\"or and Erten \cite{Ungor09OptSteiner}, Rebay \cite{Rebay93FrontalDelaunay}, Mavriplis \cite{Mavriplis95FrontalDelaunay}, Frey, Borouchaki and George \cite{Frey98FrontalDelaunay}, and Remacle, Henrotte, Carrier-Baudouin, B\'echet, Marchandise,  Geuzaine and Mouton \cite{Remacle13Quads}, the authors are not aware of any previous investigations describing the application of such techniques to the surface meshing problem directly. The conventional advancing-front method, on the other hand, has been generalised to support surface meshing, as, for example, outlined in studies by Rypl \cite{Rypl98PhD,Rypl03HabThesis} and Schreiner, Scheidegger, Fleishman and Silva \cite{Schreiner06Afront,Schreiner06Afront2}. 

In these previous studies, the conventional planar adva\-ncing-front methodology is extended directly to facilitate surface operations. Meshing proceeds via the incremental introduction of a well-distributed set of vertices $X$ positioned on the surface $X\in\Sigma$. It should be noted that, contrary to the restricted Delaunay techniques introduced in previous sections, advancing-front methods typically maintain a partial, and possibly non-Delaunay manifold triangular complex $\TS$ only -- they do not construct a full-dimensional tessellation $\TV$. It should also be noted that, in addition to the usual limitations associated with advancing-front strategies, serious issues of robustness often afflict these techniques in practice due to the difficulties associated with the reliable evaluation of the requisite geometric intersection and overlap predicates for sets of non-planar elements. Additionally, for highly curved and/or poorly separated surface definitions it is known to be difficult to ensure the topological correctness of the output mesh $\TS$. Schreiner et~al.~\cite{Schreiner06Afront,Schreiner06Afront2} introduce a number of heuristic techniques in an effort to overcome these difficulties.

\subsection{Off-centres}

The new Frontal-Delaunay algorithm is based primarily on a generalisation of ideas introduced by Rebay, who, in \cite{Rebay93FrontalDelaunay}, developed a \textit{planar} Frontal-Delaunay algorithm in which new vertices are positioned along edges of the associated Voronoi diagram. Rebay showed that new vertices can be positioned on $\Vor{X}$ according to an \textit{a priori} mesh size function $\bar{h}\left(\mathbf{x}\right)$, a strategy broadly consistent with conventional advancing-front techniques. While his algorithm maintains a Delaunay triangulation $\mathcal{T}=\Del{X}$ of the current vertex set $X$, it is still fundamentally an advancing-front scheme -- bereft of guarantees on the element shape quality $\rho\left(\tau\right)$. Rebay reported that his scheme produced very high-quality output, typically outperforming conventional Delaunay-refinement in practice.

\"Ung\"or and Erten have approached the problem from the flip-side, introducing the notion of \textit{generalised} Steiner vertices for planar Delaunay-refinement methods. Their `off-centre' vertices are points lying along edges in the associated Voronoi diagram, as per Rebay. In \cite{Ungor09OffCenters}, \"Ungor showed that when refining a triangle $\tau$, its off-centre should be positioned, if possible, such that the new triangle $\sigma$ adjacent to the shortest edge in $\tau$ satisfies the desired shape-quality target, such that $\rho\left(\sigma\right)\leq\bar{\rho}$. Importantly, \"Ung\"or demonstrated that such a strategy typically leads to an improvement in the performance of Delaunay-refinement in practice, reducing the size of the output $|\mathcal{T}|$. \"Ung\"or extended the guarantees derived for Ruppert's Delaunay-refinement algorithm to his off-centre technique and showed that such bounds are typically improved upon in practice. Off-centre refinement is based on Ruppert's Delaunay-refine\-ment framework directly, involving modifications to the procedure used to refine low-quality triangles only. 

The use of \textit{generalised} Delaunay-refinement strategies has also been explored by Chernikov, Chrisochoides and Foteinos in \cite{chernikov2012generalized,foteinos2010fully}, in which Steiner points are positioned within a set of so-called \textit{selection-balls} adjacent to element circumcentres. In \cite{chernikov2012generalized}, Chernikov and Chrisochoides show that a family of `provably-good' generalised two- and three-dimensional Delaunay-refinement schemes exist, and can be realised via the specification of appropriate selection-ball radii parameters. It is unclear whether this selection-ball formalism incorporates the full set of Voronoi-type off-centres currently in use, including those introduced by Rebay \cite{Rebay93FrontalDelaunay}, and Erten and \"Ung\"or \cite{Ungor09OptSteiner}, in addition to the strategies defined in the present study. Such a determination may form the basis for future work.

\subsection{Point-placement Preliminaries}

In this study, a generalisation of the ideas introduced by Rebay and \"Ung\"or is formulated for the surface meshing problem -- using off-centre Steiner vertices to simulate the vertex placement strategy of a conventional advancing-front approach, while also preserving the framework of a restricted Delaunay-refinement technique. The aim of such a strategy is to recover the high element qualities and smooth, semi-structured point-placement generated by frontal methods, while inheriting the theoretical guarantees of Delaunay-refinement methods. Advancing-front algorithms typically incorporate a mesh size function $\bar{h}\left(\mathbf{x}\right)$, a function $f : \mathbb{R}^3\rightarrow\mathbb{R}^{+}$ defined over the domain to be meshed, where $\bar{h}\left(\mathbf{x}\right)$ represents the desired edge length $\|\mathbf{e}\|$ at any point $\mathbf{x}\in\Sigma$. This mesh size function typically incorporates size constraints dictated by both the user and the geometry of the domain to be meshed. The construction of appropriate mesh size functions will be discussed in subsequent sections, but for now, it suffices to note that $\bar{h}\left(\mathbf{x}\right)$ is a $g$-Lipschitz function defined at all points on $\Sigma$ with $0<g<1$.

The proposed Frontal-Delaunay algorithm is an extension of the restricted Delaunay-refinement algorithm presented in Section~\ref{section_delaunay_refinement}, modified to use off-centre rather than circumcentre-based refinement strategies. The basic framework is consistent with the Delaunay-refinement algorithm described previously, in which an initially coarse restricted Delaunay triangulation of a surface $\Sigma$ is refined through the introduction of additional Steiner vertices $X\in\Sigma$ until all constraints are satisfied. A restricted surface triangulation $\TS=\DelS{X}$ is constructed as a sub-complex of the full-dimensional tessellation $\Del{X}$ and a coarse restricted volumetric tessellation $\TV=\DelV{X}$ is also available as a by-product. The constraints satisfied by the Frontal-Delaunay algorithm are identical to those described previously for the restricted Delaunay-refinement scheme, with upper bounds on the radius-edge ratio $\bar{\rho}$, surface discretisation error $\bar{\epsilon}(\mathbf{x}_f)$ and element size $\bar{h}(\mathbf{x}_f)$ all required to be satisfied for convergence. See Algorithm~\ref{algorithm_restricted_delaunay_surface} for a detailed outline of the method.

\subsection{Point-placement Strategy}
\label{section_point_placement}

Given a surface facet $f\in\DelS{X}$ marked for refinement, the new Steiner vertex introduced to eliminate $f$ is an off-centre, constructed to adhere to local size and shape constraints. The off-centres introduced in this study involve the placement of two distinct kinds of Steiner vertices. Type~I vertices, $\mathbf{c}^{(1)}$, are equivalent to conventional element circumcentres, and are used to satisfy constraints on the element radius-edge ratios. Type~II vertices, $\mathbf{c}^{(2)}$, are \textit{size-optimal} points, designed to satisfy element sizing constraints in a locally optimal fashion. Adopting the \textit{generalised} off-centre framework introduced by \"Ung\"or, the `ideal' location of the size-optimal off-centre $\mathbf{c}^{(2)}$ is based on a consideration of the isosceles triangle $\sigma$ formed about the short `frontal' edge $\mathbf{e}_{0}\in f$. The point $\mathbf{c}^{(2)}$ is positioned to ensure that $\sigma$ satisfies local size constraints.

\begin{figure*}[t]
\centering
\caption{Placement of off-centre vertices, showing (i) the surface ball associated with a facet $f\in\DelS{X}$, (ii) the plane $\mathcal{V}$ aligned with the local facet of $\Vor{X}$  associated with the frontal edge $\mathbf{e}_{0}\in f$, (iii) placement of the size-optimal vertex $\mathbf{c}^{(2)}$.}

\label{figure_offcenter_surf}

{
\footnotesize
\tabulinesep=2pt

\bigskip

\begin{tabu} {ccc}

\begin{minipage}[c]{0.275\textwidth}
\centering
\includegraphics[width=3.70cm]{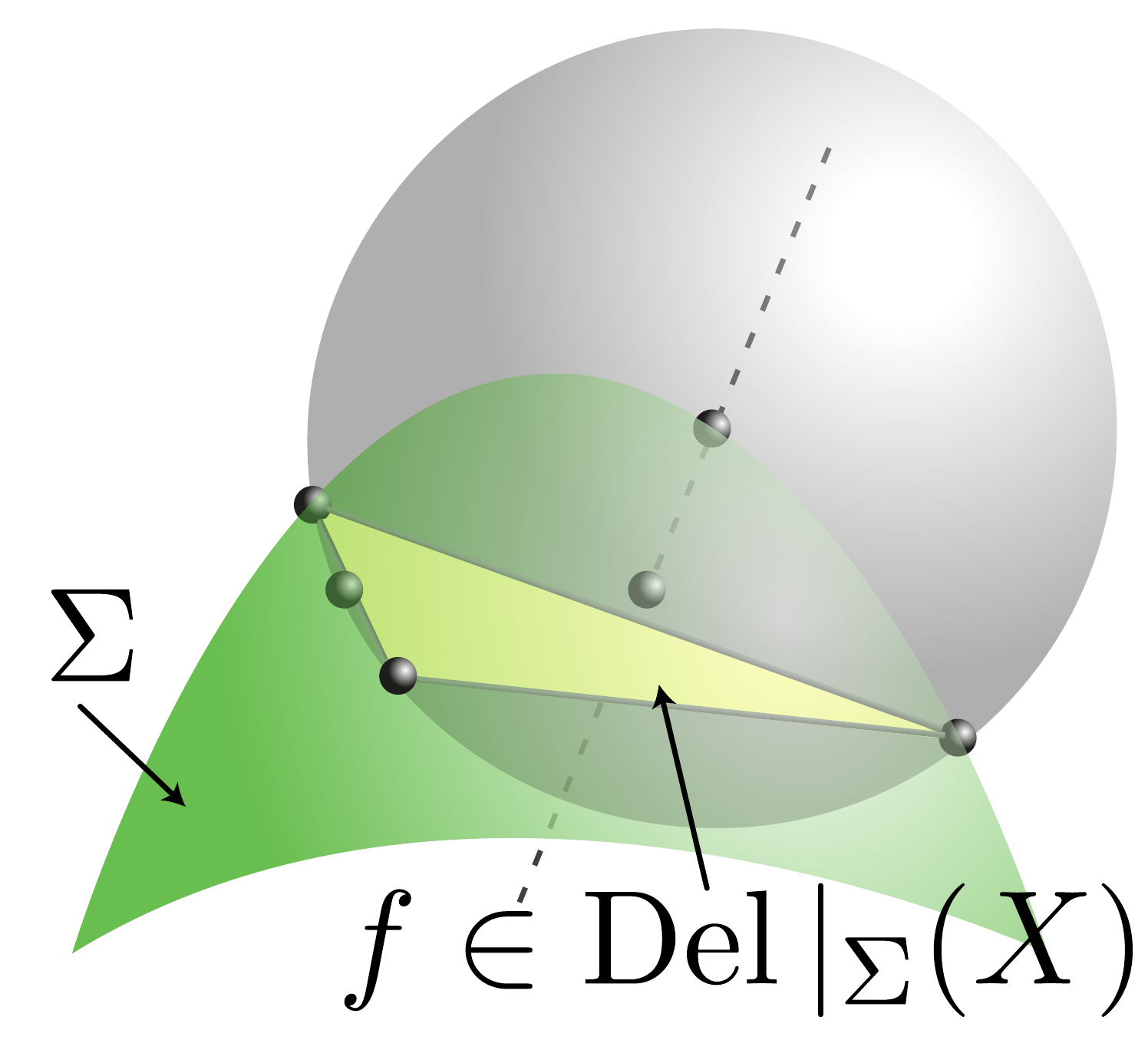} 
\end{minipage} &
\begin{minipage}[c]{0.275\textwidth}
\centering
\includegraphics[width=3.70cm]{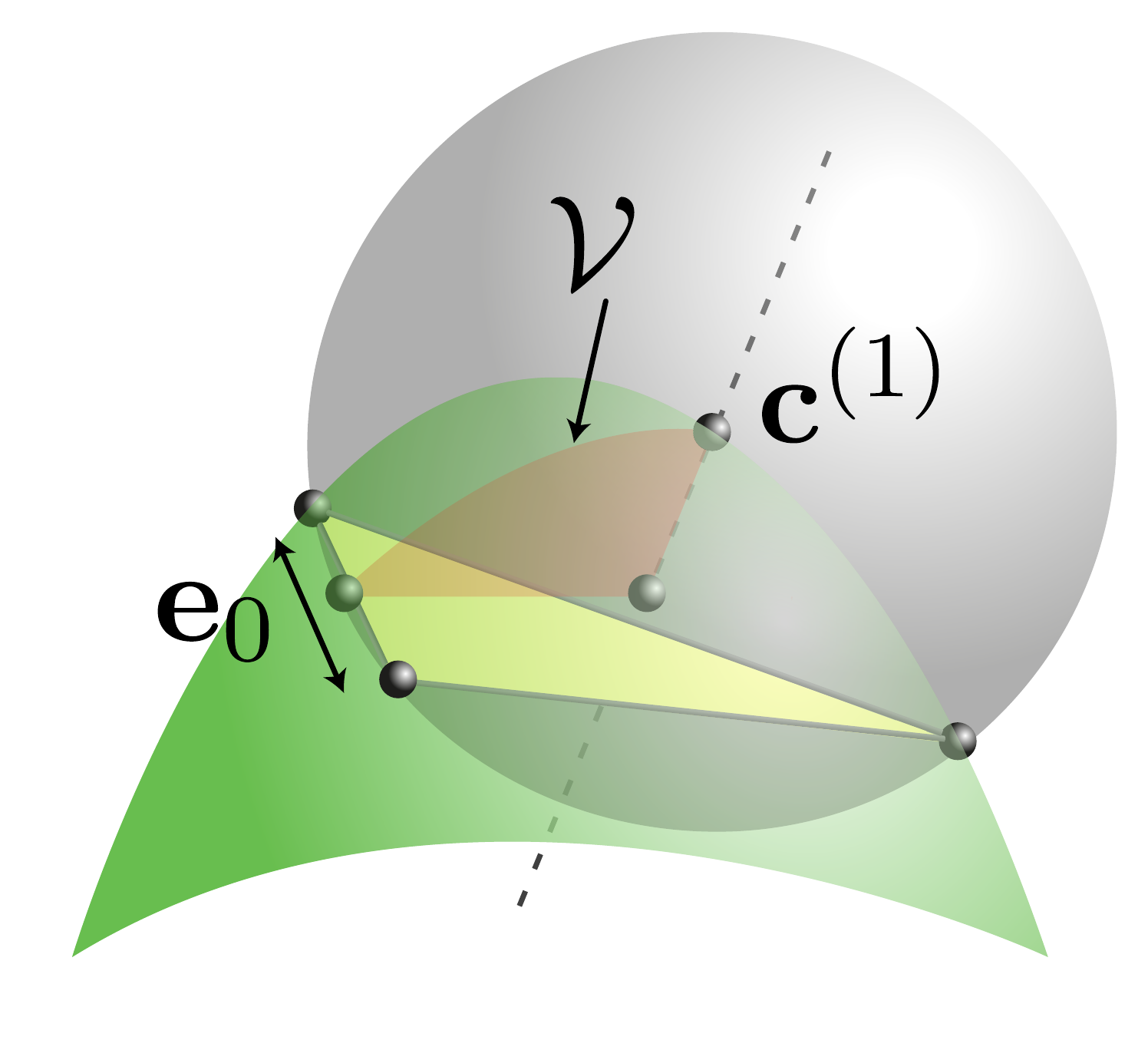} 
\end{minipage} &
\begin{minipage}[c]{0.275\textwidth}
\centering
\includegraphics[width=3.70cm]{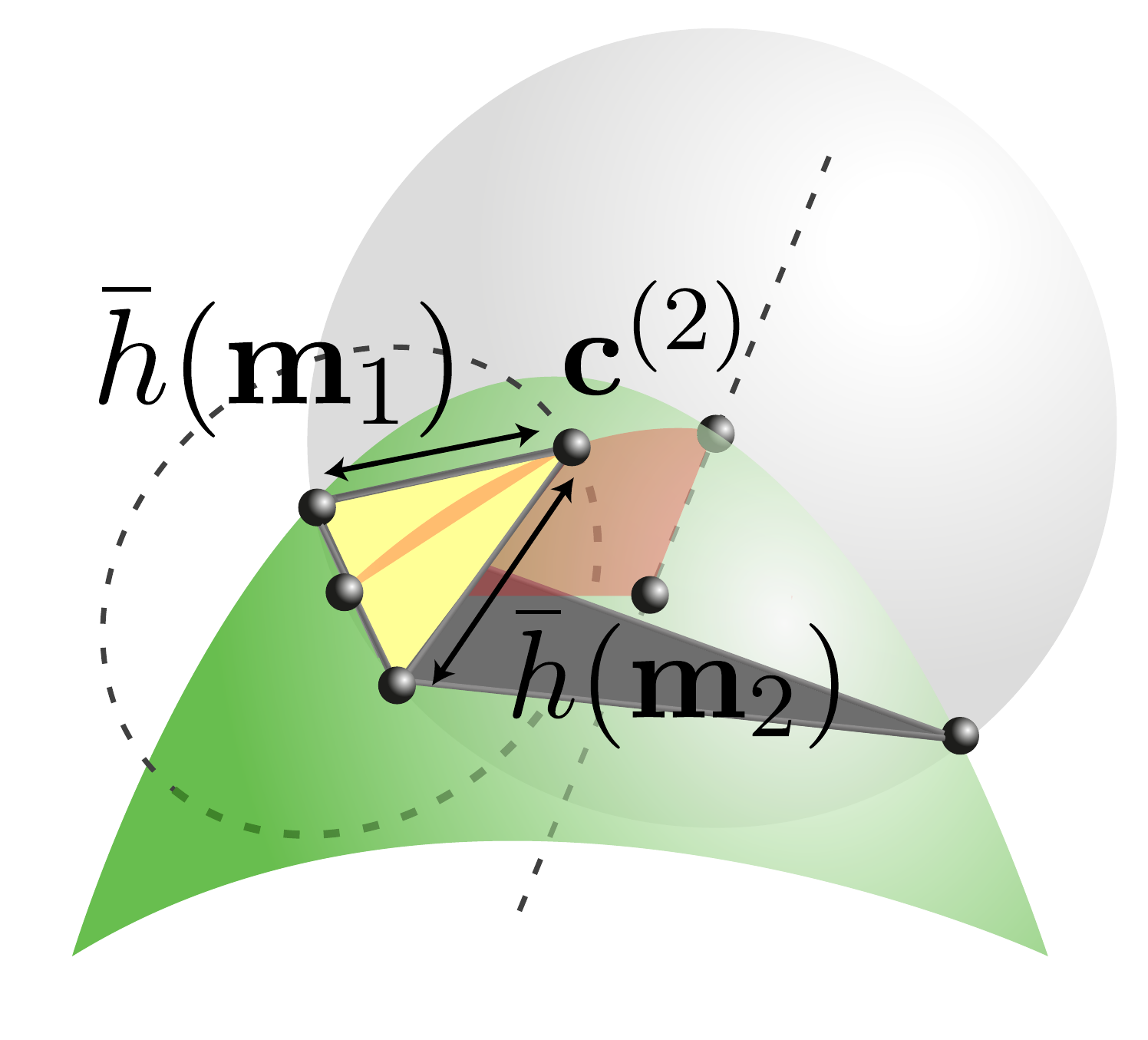} 
\end{minipage} \\

(i) & (ii) & (iii) \\

\end{tabu}}
\end{figure*}

Given a \textit{refinable} 2-simplex $f\in\DelS{X}$, the size-optimal Type~II vertex $\mathbf{c}^{(2)}$ is positioned at an intersection of the surface $\Sigma$ and a plane $\mathcal{V}$, where $\mathcal{V}$ is aligned with the local face of the Voronoi diagram $\Vor{X}$ associated with the frontal edge $\mathbf{e}_{0}\in f$. The plane is positioned such that it passes through three local points on $\Vor{X}$: the midpoint of the frontal edge $\mathbf{e}_{0}\in f$, the centre of the diametric ball of $f$ and the centre of the surface Delaunay ball $\operatorname{B}\left(\mathbf{c},r\right)_{\text{max}}=\operatorname{SDB}\left(f\right)$. The vertex $\mathbf{c}^{(2)}$ is positioned such that it forms an isosceles triangle candidate $\sigma$ about the frontal edge $\mathbf{e}_{0}$, such that its size $h\left(\sigma\right)$ satisfies local constraints. Specifically, the altitude of $\sigma$ is computed from local mesh-size information, such that
\begin{gather}
\label{equation_frontal_length1}
{a_{\sigma}} = \min\left( \left(\bar{h}_{\sigma}^{2} - \|\tfrac{1}{2}\mathbf{e}_{0}\|^2\right)^{_{1\over 2}}, \nicefrac{\sqrt{3}}{2}\,\bar{h}_{\sigma} \right) 
\\[1ex]
\label{equation_frontal_length2}
\bar{h}_{\sigma} = \tfrac{1}{2}\left({\bar{h}(\mathbf{m}_{1})}+{\bar{h}(\mathbf{m}_{2})}\right)
\end{gather}
where the $\mathbf{m}_{i}$'s are the edge midpoints and $\nicefrac{\sqrt{3}}{2}\,\bar{h}_{\sigma}$ is the altitude for an `ideal' element. 

The position of the point $\mathbf{c}^{(2)}$ is calculated by computing the intersection of the surface $\Sigma$ with a circle of radius $a_{\sigma}$, centred at the midpoint of the frontal edge $\mathbf{e}_{0}\in f$ and inscribed on the plane $\mathcal{V}$. In the case of multiple intersections, the candidate point ${\mathbf{c}_{i}}^{(2)}$ of closest alignment to the \textit{frontal} direction vector $\mathbf{d}_{f}$ is selected. Specifically, the point ${\mathbf{c}_{i}}^{(2)}$ that maximises the scalar product $\big({\mathbf{c}_{i}}^{(2)}-\mathbf{m}_{0}\big) \cdot \mathbf{d}_{f}$ is chosen, where $\mathbf{m}_{0}$ is the midpoint of the short edge $\mathbf{e}_{0}\in f$ and the frontal direction vector $\mathbf{d}_{f}$ is taken from the midpoint $\mathbf{m}_{0}$ to the centre of the surface ball $\operatorname{B}\left(\mathbf{c},r\right)_{\text{max}}=\operatorname{SDB}\left(f\right)$.

For non-uniform $\bar{h}\left(\mathbf{x}\right)$, expressions for the position of the point $\mathbf{c}^{(2)}$ are weakly non-linear, with the altitude $a_{\sigma}$ depending on an evaluation of the mesh size function at the edge midpoints $\bar{h}\left(\mathbf{m}_{i}\right)$ and visa-versa. In practice, since $\bar{h}\left(\mathbf{x}\right)$ is guaranteed to be Lipschitz smooth, a simple iterative predictor-corrector procedure is sufficient to solve these expressions approximately. The positioning of size-optimal Type~II Steiner vertices is illustrated in Figure~\ref{figure_offcenter_surf}.

Using the size-optimal Type~II point $\mathbf{c}^{(2)}$ and the Type~I point $\mathbf{c}^{(1)}$, the final position of the refinement point $\mathbf{c}$ for the facet $f$ is calculated. The point $\mathbf{c}$ is selected to satisfy the \textit{limiting} local constraints, setting
\begin{gather}
\label{equation_frontal_selection}
\mathbf{c}=\left\{
\begin{array}{ll}
\mathbf{c}^{(2)}, & 
\text{if }\big(d^{(2)} \leq d^{(1)}\big) \text{ and } \big(d^{(2)} \geq \tfrac{1}{2}\|\mathbf{e}_{0}\|\big),
\\[1ex]
\mathbf{c}^{(1)}, &
\text{otherwise}
\end{array}
\right.
\end{gather}
where $d^{(1)}=\|\mathbf{c}^{(1)}-\mathbf{m}_{0}\|$ and $d^{(2)}=\|\mathbf{c}^{(2)}-\mathbf{m}_{0}\|$ are distances from the midpoint of the frontal edge $\mathbf{e}_{0}$ to the Type~I and Type~II vertices, respectively. 

The selection criteria is designed to ensure that the refinement scheme smoothly degenerates to that of a conventional circumcentre-based Delaunay-refinement strategy in limiting cases, while using a locally shape-optimal approach where possible.

Specifically, the condition $d^{(2)} \leq d^{(1)}$ guarantees that $\mathbf{c}$ lies no further from the frontal edge $\mathbf{e}_{0}$ than the centre of the surface Delaunay ball $\operatorname{SDB}\left(f\right)$. Such behaviour ensures that a conventional circumcentre-based scheme is selected when the element size becomes sufficiently small. Additionally, the condition $d^{(2)} \geq \tfrac{1}{2}\|\mathbf{e}_{0}\|$ ensures that the size-optimal scheme is only chosen when the edge $\mathbf{e}_{0}$ is sufficiently small compared to the local mesh size function. This behaviour ensures that $\mathbf{e}_{0}$ is a good `frontal' edge candidate.

\subsection{Refinement Order}
 
In addition to the use of `off-centre' point-placement schemes, the Frontal-Delaunay algorithm also introduces changes to the order in which elements are refined. To better mimic the behaviour of an advancing-front type method, elements are refined only if they are adjacent to an existing `frontal' entity. In the case of surface facets $f\in\DelS{X}$, the short frontal edge $\mathbf{e}_{0}\in f$ must be shared by at least one adjacent facet $f_{j}\in\DelS{X}$ that has `converged' -- satisfying its associated radius-edge, mesh-size and surface-error constraints. The idea of defining `frontal' entities as a dynamic boundary between converged and un-converged elements in the mesh is a common feature of Frontal-Delaunay algorithms, with similar approaches used in \cite{Rebay93FrontalDelaunay,Mavriplis95FrontalDelaunay,
Frey98FrontalDelaunay,Remacle13Quads}.

\subsection{Theoretical Guarantees}
\label{section_discussion}

Boissonnat and Oudot \cite{boissonnat03ProvablyGoodSurface,boissonnat05ProvablyGoodMeshing} have previously shown that the conventional restricted Delaunay-refinement algorithim presented in Section~\ref{section_delaunay_refinement} is guaranteed to: (i) terminate in a finite number of steps, (ii) produce elements of bounded radius-edge ratios $\rho(f)\leq\bar{\rho}$, (iii) satisfy non-uniform user-defined mesh-sizing and surface error constraints $h(f)\leq \bar{h}(\mathbf{x}_{f})$, $\epsilon(f)\leq \bar{\epsilon}(\mathbf{x}_{f})$, and (iv) provide a \textit{good} geometrical and topological approximation to the underlying surface $\Sigma$ when the mesh size function $\bar{h}(\mathbf{x})$ is sufficiently small.

The behaviour of the new Frontal-Delaunay algorithm can be assessed using a similar framework.

\subsection{Termination \& Convergence}

Firstly, the termination and convergence of the algorithm is analysed:

\begin{proposition}
\label{proposition_surface_termination}
Given a smooth surface $\Sigma$, a radius-edge threshold $\bar{\rho}\geq 1$ and positive mesh-size and surface-error functions $\bar{h}\left(\mathbf{x}\right)\geq\bar{h}_{0}$, $\bar{\epsilon}\left(\mathbf{x}\right)\geq\bar{\epsilon}_{0}$, $\bar{h}_{0},\bar{\epsilon}_{0}\in\mathbb{R}^{+}$ the Frontal-Delaunay algorithm is guaranteed to terminate in a finite number of steps.
\end{proposition}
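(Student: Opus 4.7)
The plan is to adapt the termination argument of Ruppert, extended to the restricted surface setting by Boissonnat and Oudot and to off-centre insertions by \"Ung\"or. The key step is a uniform packing lemma: every Steiner vertex inserted by the Frontal-Delaunay algorithm lies at Euclidean distance at least $\ell_{0}>0$ from every previously inserted vertex, for some constant $\ell_{0}$ depending only on $\bar{\rho}$, $\bar{h}_{0}$, $\bar{\epsilon}_{0}$, and the geometry of $\Sigma$. Since $\Sigma$ is closed and bounded (it encloses the volume $\Omega$), compactness then yields a finite packing number, bounding the cardinality of $X$ and hence the total number of iterations of the outer loop.

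To prove the packing lemma I would partition the insertions by the type of off-centre selected. For a Type~I insertion the vertex coincides with the centre of a surface Delaunay ball $\operatorname{B}(\mathbf{c},r)_{\max}$ of the offending facet $f$; the empty-circumball property of $\Del{X}$ puts every existing vertex outside this ball, so the new point sits at distance at least $r$ from every existing vertex. A direct case analysis on the reason for refinement gives $r\geq\bar{\rho}\,\|\mathbf{e}_{0}\|$ in the shape case, $r\geq c\,\bar{h}_{0}$ in the size case, and $r\geq c'\,\bar{\epsilon}_{0}$ in the surface-error case, for absolute constants $c,c'$. Since $\bar{\rho}\geq 1$, the shape clause alone preserves the current shortest edge length, and a short induction propagates a uniform lower bound across all Type~I insertions, exactly as in the Boissonnat--Oudot analysis of the circumcentre scheme of Section~\ref{section_delaunay_refinement}.

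The delicate case is the Type~II size-optimal off-centre $\mathbf{c}^{(2)}$, for which I would exploit both clauses of the selection rule~\eqref{equation_frontal_selection} directly. The condition $d^{(2)}\leq d^{(1)}$ forces $\mathbf{c}^{(2)}$ to lie inside the surface Delaunay ball of $f$, so the empty-ball property again yields distance at least $d^{(2)}$ from every existing vertex other than the endpoints of the frontal edge $\mathbf{e}_{0}$. For the two endpoints of $\mathbf{e}_{0}$, the isosceles construction of $\sigma$ combined with~\eqref{equation_frontal_length1}--\eqref{equation_frontal_length2} yields new edges of length comparable to $\bar{h}_{\sigma}\geq\bar{h}_{0}$, while the companion clause $d^{(2)}\geq\tfrac{1}{2}\|\mathbf{e}_{0}\|$ guarantees that $\mathbf{c}^{(2)}$ is not pushed arbitrarily close to either endpoint when $\|\mathbf{e}_{0}\|$ is itself small. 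Combining these observations gives a uniform lower bound on the distance from $\mathbf{c}^{(2)}$ to every previously inserted vertex, after which the packing argument of the previous paragraph applies without modification.

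The main obstacle I anticipate is transferring the essentially planar off-centre analysis of \"Ung\"or to the curved surface setting: $\mathbf{c}^{(2)}$ is defined as the intersection of $\Sigma$ with a circle of radius $a_{\sigma}$ inscribed in the Voronoi-aligned plane $\mathcal{V}$, so one must verify both that a suitable intersection exists and that the ambient distances it induces agree with the planar altitudes up to a controllable error. Here I would invoke the smoothness of $\Sigma$, the Lipschitz regularity of $\bar{h}$, and the fact that after finitely many Type~I steps the current sampling is a loose $\epsilon$-sample in the sense of Boissonnat and Oudot, which bounds the deviation between chord and geodesic lengths by an arbitrarily small factor. Once the uniform bound $\ell_{0}>0$ is established across both insertion types and all refinement reasons, compactness of $\Sigma$ completes the proof.
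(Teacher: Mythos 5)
Your overall strategy coincides with the paper's: invoke the packing lemma (Lemma~\ref{lemma_packing}) after establishing a uniform positive lower bound on inter-vertex separation, obtained by a case analysis over insertion type (Type~I versus Type~II) and the reason for refinement (shape, size, surface error). Your Type~I analysis reproduces Proposition~\ref{proposition_surface_shape_quality} and items (ii)--(iii) of the paper's proof, and your observation that the isosceles construction~(\ref{equation_frontal_length1})--(\ref{equation_frontal_length2}) makes the new edges to the endpoints of $\mathbf{e}_{0}$ no shorter than a fixed multiple of $\bar{h}_{\sigma}\geq\bar{h}_{0}$ is the correct bound for the Type~II endpoints. The paper packages the Type~II case differently -- Proposition~\ref{proposition_offcentre_deferral} shows that the selection rule~(\ref{equation_frontal_selection}) automatically \emph{declines} Type~II insertion once the surface ball radius falls below the target length $H$, so the final shape-driven phase uses Type~I points only -- but that is an organisational difference rather than a substantive one.

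One step in your Type~II argument does not hold as written. You claim that $d^{(2)}\leq d^{(1)}$ places $\mathbf{c}^{(2)}$ inside the surface Delaunay ball of $f$, and that emptiness of that ball therefore yields a separation of at least $d^{(2)}$ from every existing vertex other than the endpoints of $\mathbf{e}_{0}$. Emptiness of $\operatorname{B}\left(\mathbf{c},r\right)$ only bounds the distance from an interior point $\mathbf{c}^{(2)}$ to an exterior vertex below by $r-\|\mathbf{c}^{(2)}-\mathbf{c}\|$, which can be arbitrarily small -- indeed the three vertices of $f$ lie \emph{on} the ball -- so no separation of order $d^{(2)}$ follows from this observation alone. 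The property the paper actually uses, in part (ii) of the proof of Proposition~\ref{proposition_surface_refinement}, is that $\mathbf{c}^{(2)}$ lies on the Voronoi facet $\mathcal{V}$ dual to $\mathbf{e}_{0}$, so by definition no vertex of $X$ is closer to $\mathbf{c}^{(2)}$ than the two endpoints of $\mathbf{e}_{0}$; combined with your endpoint bound this closes the Type~II case. Finally, the chord-versus-geodesic difficulty you flag as the main obstacle is not one here: the packing lemma, the surface-ball radii and the circle of radius $a_{\sigma}$ in $\mathcal{V}$ are all defined through ambient Euclidean distances, so no comparison with intrinsic lengths is needed, and the appeal to loose $\epsilon$-samples belongs to the approximation-quality results (Proposition~\ref{proposition_surface_correctness}) rather than to termination.
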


Before seeking to prove Proposition~\ref{proposition_surface_termination} itself, a number of intermediate results are first established.

\begin{lemma}
\label{lemma_packing}
Let $D\subset\mathbb{R}^d$ be a bounded domain. Given a subset $X\subset D$, satisfying $\|\mathbf{u}-\mathbf{v}\|\geq\gamma$ for all pairs $\mathbf{u},\mathbf{v}\in X$ and some scalar $\gamma\in\mathbb{R}^{+}$, the size of $X$ is bounded, such that $|X| \leq\mu$ for some constant $\mu\in\mathbb{Z}^{+}$.
\end{lemma}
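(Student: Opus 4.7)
The plan is to use a standard volume-packing argument: around each point of $X$, place a small open ball of radius $\gamma/2$; the separation hypothesis forces these balls to be pairwise disjoint, so their total volume is bounded by the volume of a slightly enlarged region containing $D$, and comparing volumes yields the desired bound on $|X|$.

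More concretely, I would proceed as follows. First, since $D$ is bounded, there exists some $R \in \mathbb{R}^{+}$ and a point $\mathbf{x}_{0} \in \mathbb{R}^{d}$ such that $D \subseteq B(\mathbf{x}_{0}, R)$, where $B(\mathbf{x}_{0}, R)$ denotes the closed Euclidean ball of radius $R$ centred at $\mathbf{x}_{0}$. Next, for each $\mathbf{u} \in X$, consider the open ball $B(\mathbf{u}, \gamma/2)$. If $\mathbf{u}, \mathbf{v} \in X$ are distinct, the hypothesis $\|\mathbf{u} - \mathbf{v}\| \geq \gamma$ together with the triangle inequality shows that $B(\mathbf{u}, \gamma/2) \cap B(\mathbf{v}, \gamma/2) = \emptyset$, so the balls are pairwise disjoint. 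Moreover, for any $\mathbf{u} \in X \subseteq D$, every point of $B(\mathbf{u}, \gamma/2)$ lies within distance $R + \gamma/2$ of $\mathbf{x}_{0}$, so $B(\mathbf{u}, \gamma/2) \subseteq B(\mathbf{x}_{0}, R + \gamma/2)$.

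Now I would take $d$-dimensional Lebesgue measures. Writing $V_{d}$ for the volume of the unit ball in $\mathbb{R}^{d}$, disjointness together with containment gives
\begin{gather}
|X| \cdot V_{d} \left(\tfrac{\gamma}{2}\right)^{d} \;=\; \sum_{\mathbf{u}\in X}\operatorname{vol}\!\left(B(\mathbf{u},\tfrac{\gamma}{2})\right) \;\leq\; V_{d}\left(R + \tfrac{\gamma}{2}\right)^{d}.
\end{gather}
Dividing through yields
\begin{gather}
|X| \;\leq\; \left(\frac{2R}{\gamma} + 1\right)^{d} \;=:\; \mu,
\end{gather}
which is finite, and after taking the ceiling lies in $\mathbb{Z}^{+}$, establishing the claim.

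I do not anticipate a genuine obstacle here: the entire argument is the classical Euclidean sphere-packing bound and the only mild subtlety is choosing the enlarged ball $B(\mathbf{x}_{0}, R + \gamma/2)$ rather than $B(\mathbf{x}_{0}, R)$ itself, since the small balls around points of $X$ can protrude slightly outside $D$. The constant $\mu$ depends only on $d$, the diameter of $D$, and $\gamma$, which is all that is required for the downstream use in proving Proposition~\ref{proposition_surface_termination}.
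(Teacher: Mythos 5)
Your proof is correct: the disjointness of the $\gamma/2$-balls, their containment in the enlarged ball $B(\mathbf{x}_{0}, R+\gamma/2)$, and the volume comparison are all handled properly, and this is precisely the classical packing argument. Note that the paper itself states Lemma~\ref{lemma_packing} \emph{without} proof, deferring to the reference \cite{ChengDeyShewchuk}, so there is no in-paper argument to compare against; your write-up supplies exactly the standard proof that the cited source uses.
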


This so-called \textit{packing-lemma} states that any point-set $X$ in a bounded domain $\mathbb{R}^{d}$ for which there exists a positive minimum separation distance between the points in $X$ must, consequently, be finite. As such, if it can be shown that a meshing algorithm preserves such a minimum separation length between its vertices, Lemma~\ref{lemma_packing} can be invoked to prove that such a vertex-set is finite, and, as a result, that termination of the algorithm is guaranteed. Lemma~\ref{lemma_packing} is stated here without proof -- interested readers are referred to \cite{ChengDeyShewchuk} for full details.

Before examining the behaviour of the Frontal-Delaunay algorithm as a whole, the impact of a single refinement operation is analysed. Firstly, the behaviour of the various off-centre point-placement schemes introduced in Section~\ref{section_point_placement} is examined:

\begin{proposition}
\label{proposition_surface_shape_quality}
Let the threshold on radius-edge ratios be $\bar{\rho}\geq 1$ and $\DelS{X_{k}}$ be the restricted surface triangulation induced after $k$ refinement operations. Given a low-quality surface facet $f\in\DelS{X_{k}}$, for which $\rho\left(f\right)\geq\bar{\rho}$, the minimum edge length $\|\mathbf{e}_{0}\|\in\DelS{X_{k}}$ is not decreased following the insertion of a new Type~I Steiner vertex at the centre of $\operatorname{SDB}(f)$.
\end{proposition}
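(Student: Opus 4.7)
The plan is to show that the inserted Steiner vertex $\mathbf{p}=\mathbf{c}$, positioned at the centre of the surface Delaunay ball $\operatorname{SDB}(f)=B(\mathbf{c},r)$, lies at Euclidean distance at least $\|\mathbf{e}_0\|$ from every vertex in $X_k$. Since all new edges introduced by the incremental Bowyer-Watson update are incident to $\mathbf{p}$, and every surviving edge retains its original length, such a separation bound would immediately rule out the creation of any edge shorter than the current minimum.

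First I would invoke the defining property of the surface Delaunay ball. The centre $\mathbf{c}$ lies on the Voronoi edge $\mathbf{v}_f\in\Vor{X_k}$ dual to $f$, and by the Voronoi property it is equidistant, at distance $r$, from the three vertices of $f$ and no closer to any other point of $X_k$. Equivalently, the open ball $B(\mathbf{c},r)$ contains no vertex of $X_k$ in its interior, so $\|\mathbf{p}-\mathbf{v}\|\geq r$ for every $\mathbf{v}\in X_k$.

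Next I would chain the inequality $r\geq\|\mathbf{e}_0\|$. Writing $R$ for the planar circumradius of $f$, we have $r\geq R$: the surface Delaunay ball passes through the three vertices of $f$ but is centred at a point $\mathbf{c}\in\Sigma$ which need not lie in the plane of $f$, and among all balls through those three vertices the minimum-radius one is the in-plane diametric ball of radius $R$. The radius-edge hypothesis $\rho(f)=R/\|\mathbf{e}^f_{\min}\|\geq\bar{\rho}\geq 1$ then gives $R\geq\|\mathbf{e}^f_{\min}\|$, where $\mathbf{e}^f_{\min}$ is the shortest edge of $f$, and $\|\mathbf{e}^f_{\min}\|\geq\|\mathbf{e}_0\|$ by definition of $\mathbf{e}_0$ as the shortest edge in $\DelS{X_k}$. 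Concatenating these yields $\|\mathbf{p}-\mathbf{v}\|\geq r\geq R\geq\|\mathbf{e}_0\|$ for every $\mathbf{v}\in X_k$.

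The step I expect to require the most care is the final bookkeeping for edges of $\DelS{X_{k+1}}$ that are not incident to $\mathbf{p}$. By the locality of Bowyer-Watson insertion, any such edge is already present in $\Del{X_k}$ between two vertices of $X_k$, so its Euclidean length is unchanged; if it appears in the restricted subcomplex only after insertion because its associated Voronoi dual has been updated, the length bound is still preserved, provided $\|\mathbf{e}_0\|$ is interpreted consistently as the nearest-pair distance among samples in $X_k$ (equivalently, the shortest edge in the full tessellation $\Del{X_k}$). Together with the separation bound above, this completes the proof and supplies the positive minimum-distance property later consumed by the packing argument of Lemma~\ref{lemma_packing}.
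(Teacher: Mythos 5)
Your proposal is correct and follows essentially the same route as the paper's proof: both rest on the empty-ball property of the surface Delaunay ball (giving $\|\mathbf{p}-\mathbf{v}\|\geq r$ for all existing vertices) combined with the chain $r\geq R\geq\|\mathbf{e}_0\|$ via $\rho(f)\geq\bar{\rho}\geq 1$, which the paper compresses into the single inequality $\|\mathbf{e}_0\|_{k+1}/\|\mathbf{e}_0\|_{k}\geq r_{k}/\|\mathbf{e}_0\|_{k}\geq\bar{\rho}$. Your additional bookkeeping for edges not incident to the new vertex is a useful elaboration of what the paper leaves implicit, but it does not change the substance of the argument.
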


\begin{proof}
Recalling that the surface facet $f$ is locally Delaunay, the insertion of a new Type~I vertex $\mathbf{c}_{k}$ at the centre of the associated surface Delaunay ball $\operatorname{B}(\mathbf{c}_{k},r_{k})=\operatorname{SDB}(f)$ is guaranteed to create edges no shorter than the radius of the ball of $f$:
\begin{gather}
\|\mathbf{e}_{0}\|_{k+1} \geq r_{k}.
\end{gather}
Here $\|\mathbf{e}_{0}\|_{k+1}$ is the length of the shortest edge created by the insertion of the new vertex $\mathbf{c}_{k}$. Dividing by the current minimum edge length, and using the definition of the radius-edge ratio:
\begin{gather}
\label{equation_typeI_refinement}
\frac{\|\mathbf{e}_{0}\|_{k+1}}{\|\mathbf{e}_{0}\|_{k}\hfill} \geq \frac{r_{k}}{\|\mathbf{e}_{0}\|_{k}} \geq \bar{\rho}.
\end{gather}
Clearly, when $\bar{\rho}\geq 1$ the existing minimum length $\|\mathbf{e}_{0}\|_{k}$ is preserved by the insertion of the new vertex $\mathbf{c}_{k}$.
\end{proof}

\begin{proposition}
\label{proposition_offcentre_deferral}
Let $\bar{h}(\mathbf{x})\geq\bar{h}_{0},\, \bar{h}_{0}\in\mathbb{R}^{+}$ be a positive mesh-size function and $\DelS{X_{k}}$ be the restricted surface triangulation induced after $k$ refinement operations. Given a refinable surface facet $f\in\DelS{X_{k}}$ use of the Type~II point-placement scheme is `declined' when $f$ is sufficiently small. Specifically, selection of the Type~II point-placement scheme requires that $r_{k} \geq H$, where $r_{k}$ is the radius of the surface ball associated with the facet $f$ and $H$ is the length of the new frontal edge candidates defined via expressions~(\ref{equation_frontal_length1})--(\ref{equation_frontal_length2}).
\end{proposition}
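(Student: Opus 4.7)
My plan is to reduce the statement to an immediate consequence of the selection rule in equation~(\ref{equation_frontal_selection}) by computing $d^{(1)}$, $d^{(2)}$ and $H$ in closed form via two applications of Pythagoras; this bypasses any case split on which branch of the $\min$ in equation~(\ref{equation_frontal_length1}) is active.

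First I would read off $d^{(2)}$ directly from the construction: since $\mathbf{c}^{(2)}$ lies on the circle of radius $a_{\sigma}$ centred at $\mathbf{m}_{0}$ in the plane $\mathcal{V}$, one has $d^{(2)} = a_{\sigma}$. Next I would compute $d^{(1)}$: the point $\mathbf{c}^{(1)}$ is the centre of $\operatorname{SDB}(f)$, a ball of radius $r_{k}$ containing both endpoints of $\mathbf{e}_{0}$, so its orthogonal projection onto the line through $\mathbf{e}_{0}$ coincides with $\mathbf{m}_{0}$. A first Pythagorean identity then yields
\begin{equation*}
d^{(1)} \;=\; \sqrt{r_{k}^{2} - \tfrac{1}{4}\|\mathbf{e}_{0}\|^{2}}.
\end{equation*}
I would then identify $H$: since $\mathcal{V}$ is the Voronoi face dual to $\mathbf{e}_{0}$, it is perpendicular to $\mathbf{e}_{0}$ and passes through $\mathbf{m}_{0}$, so the isosceles candidate $\sigma$ with apex $\mathbf{c}^{(2)}$ has altitude $a_{\sigma}$ and base $\|\mathbf{e}_{0}\|$; a second Pythagorean identity gives the leg length $H = \sqrt{a_{\sigma}^{2} + \tfrac{1}{4}\|\mathbf{e}_{0}\|^{2}}$.

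Finally I would feed these expressions into the selection clause $d^{(2)} \leq d^{(1)}$ demanded by equation~(\ref{equation_frontal_selection}) whenever the Type~II scheme is chosen; squaring gives $a_{\sigma}^{2} + \tfrac{1}{4}\|\mathbf{e}_{0}\|^{2} \leq r_{k}^{2}$, which is precisely $H^{2} \leq r_{k}^{2}$, as required. I do not expect a significant obstacle here; the only step that deserves careful justification is the orthogonality of $\mathcal{V}$ with $\mathbf{e}_{0}$, which is inherited from the Voronoi--Delaunay duality used throughout Section~\ref{section_point_placement}.
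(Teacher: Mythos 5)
Your proposal is correct and follows essentially the same route as the paper's proof: both reduce the claim to the clause $d^{(2)} \leq d^{(1)}$ of the selection rule~(\ref{equation_frontal_selection}) and apply the Pythagorean identities $d^{(1)} = \sqrt{r_{k}^{2} - \|\tfrac{1}{2}\mathbf{e}_{0}\|^{2}}$ and $d^{(2)} = a_{\sigma} = \sqrt{H^{2} - \|\tfrac{1}{2}\mathbf{e}_{0}\|^{2}}$ to obtain $H \leq r_{k}$. Your explicit identification of $H$ as the leg length $\sqrt{a_{\sigma}^{2} + \tfrac{1}{4}\|\mathbf{e}_{0}\|^{2}}$ is merely a more spelled-out version of the same computation the paper performs implicitly.
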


\begin{proof}
Given a surface facet $f$, the off-centre selection criteria expressed in (\ref{equation_frontal_selection}) requires that the Type~II off-centre $\mathbf{c}^{(2)}$ lies along a `safe' region of the Voronoi diagram -- bounded by the centre of the ball $\operatorname{B}(\mathbf{c}_{k},r_{k})=\operatorname{SDB}(f)$ such that:
\begin{align}
\label{equation_frontal_selection2}
\|\mathbf{c}^{(2)}-\mathbf{m}_{0}\| & \leq \|\mathbf{c}_{k}-\mathbf{m}_{0}\|
\end{align}
where $\mathbf{m}_{0}$ is the midpoint of the base edge $\|\mathbf{e}_{0}\|$. Expressing (\ref{equation_frontal_selection2}) in terms of the target edge lengths and surface ball radii:
\begin{gather}
\sqrt{H^{2}-\|\tfrac{1}{2}\mathbf{e}_{0}\|^{2}} \leq \sqrt{r_{k}^{2}-\|\tfrac{1}{2}\mathbf{e}_{0}\|^{2}}
\end{gather}
where $H$ is the desired edge-length, computed from local mesh-size constraints (\ref{equation_frontal_length1})--(\ref{equation_frontal_length2}). This expressions can be further simplified, leading to a simple inequality for the selection of Type~II vertices:
\begin{gather}
H \leq r_{k}
\end{gather}
It can be seen that the Type~II point-placement scheme is \textit{declined} when the surface ball associated with a given facet $f$ is sufficiently small compared to the local target edge-length $H$.
\end{proof}

Using Propositions~\ref{proposition_surface_shape_quality} and \ref{proposition_offcentre_deferral}, the off-centre point-place\-ment strategies utilised in the Frontal-Delaunay algorithm are shown to constitute a `safe' set of refinement operations, with the insertion of Type~I vertices leading to a non-decreasing minimum edge-length and use of the Type~II scheme declined once the mesh is sufficiently well refined. Using these results, termination of the full Frontal-Del\-aunay algorithm can be re-visited:

\newproof{pop}{Proof of Proposition \ref{proposition_surface_termination}}

\begin{pop}
The Frontal-Delaunay algorithm refines any surface facet $f\in\DelS{X}$ if: (i) it is of poor shape quality, such that $\rho\left(f\right)\geq\bar{\rho}$, (ii) it is too large, such that $h\left(f\right)\geq \alpha \bar{h}(\mathbf{x}_{f})$, or (iii) it violates the local surface discretisation error threshold, such that $\epsilon\left(f\right)\geq\bar{\epsilon}(\mathbf{x}_{f})$.

\begin{enumerate}[(i)]\itemsep +6pt
\item Using Proposition~\ref{proposition_offcentre_deferral}, it is known that the introduction of Type~II off-centres is declined once the mesh is sufficiently well refined. Type~I points alone can therefore be relied upon to satisfy element shape-constraints.

\medskip

Given a radius-edge threshold $\bar{\rho}\geq 1$, Proposition~\ref{proposition_surface_shape_quality} states that refinement does not lead to a reduction in minimum edge length. Shape-based refinement therefore preserves the existing minimal edge length $\|\mathbf{e}_{0}\|_{k}$, associated with either the initial sampling $X_{0}\in\Sigma$, or the insertion of some previous vertex $\mathbf{x}_{k}\in X$ due to size- or surface-error-driven refinement.

\item Noting that $\bar{h}(\mathbf{x})$ is positive, with $\bar{h}\left(\mathbf{x}\right)\geq \bar{h}_{0}$ for some $\bar{h}_{0}\in\mathbb{R}^{+}$, size-driven refinement is clearly \textit{declined} once the local element size is sufficiently small. Recalling that $h\left(f\right) = \sqrt{3}\,r$, where $r$ is the radius of the surface ball associated with a given facet $f$, the mesh-sizing constraints $h\left(f\right) \leq \alpha\bar{h}(\mathbf{x}_{f})$ are satisfied once all $r \leq \left(\nicefrac{\alpha}{\sqrt{3}}\right) \bar{h}(\mathbf{x}_{f})$.

\item Noting that $\bar{\epsilon}(\mathbf{x})$ is positive, with $\bar{\epsilon}(\mathbf{x})\geq\bar{\epsilon}_{0}$ for some $\epsilon_{0}\in\mathbb{R}^{+}$, it is clear that surface-error driven refinement is \textit{declined} once the local element size is sufficiently small. Recalling that $\epsilon(f)$ is the orthogonal distance between a given facet $f$ and the centre of its associated surface ball $\operatorname{B}(\mathbf{c},r)=\operatorname{SDB}(f)$, it is clear that $\epsilon(f)\leq r$, as $\operatorname{SDB}(f)$ must circumscribe the facet. In the worst-case, surface-error driven refinement is declined once all $r \leq \epsilon(\mathbf{x}_{f})$.
\end{enumerate}

Using (i)--(iii), the progression of the Frontal-Delaunay algorithm can be examined. Starting with an initial vertex distribution $X_{0}\in\Sigma$, both Type~I and Type~II point-placement schemes are used to satisfy mesh-size and surface-error constraints -- with refinement continuing until all surface balls are sufficiently small, as per (ii)--(iii). Any remaining low-quality elements in violation of the radius-edge constraints are subsequently refined through the insertion of Type~I vertices alone. This final phase is guaranteed to preserve minimum edge-length.

Using Lemma~\ref{lemma_packing}, it is clear that the Frontal-Delaunay algorithm is guaranteed to produce a point-wise sampling $X\in\Sigma$ of finite size and must terminate in a finite number of steps as a consequence.
\end{pop}

Finite termination leads directly to a number of useful auxiliary guarantees on both the nature and quality of the output tessellation. Adopting the conventional terminology, surface meshes generated using the Frontal-Delaunay algorithm can be considered to be `provably-good':

\begin{corollary}
\label{corollary_surface_convergence}
Given a smooth surface $\Sigma$, a radius-edge threshold $\bar{\rho}\geq 1$ and positive mesh-size and surface-error functions $\bar{h}\left(\mathbf{x}\right)>0$, $\bar{\epsilon}\left(\mathbf{x}\right)>0$, all surface facets $f\in\TS$ in the restricted surface tessellation $\TS=\DelS{X}$ generated by the Frontal-Delaunay algorithm satisfy: (i) $\rho\left(f\right)<\bar{\rho}$, (ii) $h\left(f\right)<\alpha \bar{h}(\mathbf{x}_{f})$, and (iii) $\epsilon\left(f\right)<\bar{\epsilon}(\mathbf{x}_{f})$.
\end{corollary}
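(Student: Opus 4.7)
The plan is to derive the corollary as an essentially immediate consequence of Proposition~\ref{proposition_surface_termination} together with the termination criteria explicitly encoded in Algorithm~\ref{algorithm_restricted_delaunay_surface}. Since the Frontal-Delaunay scheme inherits the same main loop as the conventional Delaunay-refinement algorithm, the only work required is to combine finite termination with the definition of \textsc{BadSimplex}.

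First I would invoke Proposition~\ref{proposition_surface_termination} to guarantee that the refinement loop exits after a finite number of steps under the stated hypotheses on $\bar{\rho}$, $\bar{h}(\mathbf{x})$ and $\bar{\epsilon}(\mathbf{x})$. Next I would observe that the main loop terminates precisely when the refinement queue $Q|_{\Sigma}$ is empty, and that a facet is enqueued exactly when \textsc{BadSimplex} returns \textsc{true}, i.e.~when $\rho(f) > \bar{\rho}$, $h(f) > \bar{h}(\mathbf{x}_{f})$, or $\epsilon(f) > \bar{\epsilon}(\mathbf{x}_{f})$. Consequently, at termination every surface facet $f\in\DelS{X}$ must fail all three of these conditions, yielding $\rho(f)\leq\bar{\rho}$, $h(f)\leq\alpha\bar{h}(\mathbf{x}_{f})$ and $\epsilon(f)\leq\bar{\epsilon}(\mathbf{x}_{f})$ simultaneously. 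I would close with the remark that the strict inequalities stated in the corollary follow because the queue test is strict, so any facet left in $\DelS{X}$ lies strictly on the `good' side of each threshold.

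There is essentially no analytic obstacle here: the content of the corollary is a restatement of the loop invariant enforced by \textsc{BadSimplex}, lifted to the terminal iterate. The only subtlety worth flagging is that one must check that the final restricted tessellation is consistent, in the sense that the values $\rho(f)$, $h(f)$ and $\epsilon(f)$ evaluated on the output $\DelS{X}$ agree with the values tested during the last queue update; this is ensured by line~8 of \textsc{DelaunaySurface}, which refreshes $\DelS{X}$ and $\DelV{X}$ after each Steiner insertion before the queue is re-examined. With that bookkeeping observation in hand, the corollary is immediate from Proposition~\ref{proposition_surface_termination}.
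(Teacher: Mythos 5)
Your proposal follows essentially the same route as the paper: invoke Proposition~\ref{proposition_surface_termination} for finite termination, then read the constraints off the emptiness of the refinement queue at exit. The one small slip is your justification of the \emph{strict} inequalities: if the enqueue test were strict (refine when $\rho(f)>\bar{\rho}$, etc., as written in \textsc{BadSimplex}), termination would only give the non-strict bounds $\rho(f)\leq\bar{\rho}$; the paper's proof instead takes the enqueue condition to be non-strict ($\rho(f)\geq\bar{\rho}$, and so on), which is what actually delivers the strict conclusions (i)--(iii).
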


\begin{proof}
The Frontal-Delaunay maintains a queue of `bad' elements, containing any surface facets $f\in\DelS{X}$ that are in violation of one or more local constraints. Specifically, a given 2-face $f\in\DelS{X}$ is enqueued if: (i) $\rho\left(f\right)\geq\bar{\rho}$, (ii) $h\left(f\right)\geq \alpha \bar{h}(\mathbf{x}_{f})$, or (iii) $\epsilon\left(f\right)\geq \bar{\epsilon}(\mathbf{x}_{f})$. Given that termination is assured, it is clear that the refinement queue must become empty after a finite number of steps and that all facets in the resulting triangulation $\DelS{X}$ satisfy the requisite radius-edge ratio, element size and surface error constraints as a consequence.
\end{proof}

In addition to proofs of termination and convergence, it is also necessary to ensure the point-set $X\in\Sigma$ remains \textit{well-distributed} throughout the refinement process. Such behaviour requires that no spurious short edges be introduced within the initial refinement phase as the algorithm seeks to satisfy mesh-size and surface-error constraints.

\begin{proposition}
\label{proposition_surface_refinement}
Let $\DelS{X_{k}}$ be the current restricted surface triangulation. Given a large \textit{refinable} surface facet $f\in\DelS{X_{k}}$, the minimum edge length $\|\mathbf{e}_{0}\|_{k}$ is decreased by an $\BigO{1}$ factor of $\nicefrac{1}{\sqrt{3}}$ in the worst-case following the insertion of a new Type~I or Type~II Steiner vertex.
\end{proposition}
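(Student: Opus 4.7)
The plan is to treat the Type~I and Type~II insertions separately, bounding in each case the length of every new edge created at the inserted Steiner vertex; any surviving edge of $\DelS{X_{k}}$ still has length at least $\|\mathbf{e}_{0}\|_{k}$, so only these new incident edges can pull $\|\mathbf{e}_{0}\|_{k+1}$ below $\|\mathbf{e}_{0}\|_{k}$.

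For the Type~I case I would reuse the argument of Proposition~\ref{proposition_surface_shape_quality} with its shape hypothesis weakened. Since $\operatorname{SDB}(f)$ is Delaunay-empty, every new edge incident to $\mathbf{c}_{k}$ has length at least $r_{k}$. The universal inequality $\rho(\tau)\geq 1/\sqrt{3}$, valid for \emph{any} 2-simplex $\tau$ and attained at the equilateral, then yields $r_{k}\geq\|\mathbf{e}_{0}\|_{f}/\sqrt{3}\geq\|\mathbf{e}_{0}\|_{k}/\sqrt{3}$, which is the claimed bound. The worst case is saturated precisely by a circumcentre refinement of a near-equilateral facet, where no cheap slack is available beyond the $1/\sqrt{3}$ factor.

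For the Type~II case I would rely on the explicit geometry of the off-centre construction together with the selection rule~(\ref{equation_frontal_selection}). By design $\mathbf{c}^{(2)}$ lies in the plane $\mathcal{V}$, which is orthogonal to $\mathbf{e}_{0}$ and passes through its midpoint $\mathbf{m}_{0}$, at Euclidean distance $d^{(2)}=a_{\sigma}$ from $\mathbf{m}_{0}$; the two endpoints of $\mathbf{e}_{0}$ are therefore equidistant from $\mathbf{c}^{(2)}$, at distance $\sqrt{a_{\sigma}^{2}+\|\tfrac{1}{2}\mathbf{e}_{0}\|^{2}}$. The rule $d^{(2)}\geq\tfrac{1}{2}\|\mathbf{e}_{0}\|$ embedded in~(\ref{equation_frontal_selection}) forces this to be at least $\|\mathbf{e}_{0}\|/\sqrt{2}$, strictly larger than $\|\mathbf{e}_{0}\|_{k}/\sqrt{3}$. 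The companion condition $d^{(2)}\leq d^{(1)}$ confines $\mathbf{c}^{(2)}$ to the interior of the Voronoi facet dual to $\mathbf{e}_{0}$, on which the two endpoints of $\mathbf{e}_{0}$ are the nearest members of $X_{k}$, so that no other existing vertex can contribute a shorter new edge.

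The main obstacle is this last clause: certifying on a curved $\Sigma$ that the safe-region condition genuinely keeps $\mathbf{c}^{(2)}$ inside the relevant Voronoi facet, so that the endpoints of $\mathbf{e}_{0}$ really are the nearest pre-existing vertices. I would dispatch this by observing that $d^{(2)}\leq d^{(1)}$ combined with $\mathbf{c}^{(2)}\in\mathcal{V}$ places $\mathbf{c}^{(2)}$ inside $\operatorname{SDB}(f)$; the empty-ball property of $\operatorname{SDB}(f)$ then excludes every vertex of $X_{k}\setminus\{v\in f\}$ from intruding close enough to beat the $\|\mathbf{e}_{0}\|/\sqrt{2}$ bound derived above. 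Taking the weaker of the Type~I bound $\|\mathbf{e}_{0}\|_{k}/\sqrt{3}$ and the Type~II bound $\|\mathbf{e}_{0}\|_{k}/\sqrt{2}$ gives the stated worst-case shrinkage factor of $1/\sqrt{3}$.
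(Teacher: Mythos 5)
Your proposal follows essentially the same route as the paper's proof: the same Type~I/Type~II case split, the same $\nicefrac{1}{\sqrt{3}}$ bound for Type~I obtained from $\|\mathbf{e}_{0}\|_{k+1}\geq r_{k}$ together with the universal lower bound $\rho(f)\geq\nicefrac{1}{\sqrt{3}}$, the same $\nicefrac{1}{\sqrt{2}}$ right-triangle bound for Type~II at the midpoint of $\mathbf{e}_{0}$, and the same conclusion that Type~I is the limiting case. The only cosmetic difference is that you justify the absence of a closer pre-existing vertex in the Type~II case via the emptiness of $\operatorname{SDB}\left(f\right)$, whereas the paper appeals directly to $\mathcal{V}$ being a local facet of the Voronoi complex; both serve the same purpose.
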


\begin{proof} 
Considering the insertion of new Type~I and Type~II vertices separately:

\begin{enumerate}[(i)]\itemsep +6pt
\item Recalling that the surface facet $f$ is locally Delaunay, the insertion of a new Type~I vertex $\mathbf{c}_{k}$ at the centre of the associated surface Delaunay ball $\operatorname{B}(\mathbf{c}_{k},r_{k})=\operatorname{SDB}(f)$ is guaranteed to create edges no shorter than the radius of the ball of $f$:
\begin{gather}
\|\mathbf{e}_{0}\|_{k+1} \geq r_{k}.
\end{gather}
Here $\|\mathbf{e}_{0}\|_{k+1}$ is the length of the shortest edge created by the insertion of the new vertex $\mathbf{c}_{k}$. Dividing by the current minimum edge length, and using the definition of the radius-edge ratio:
\begin{gather}
\label{equation_typeI_refinement}
\frac{\|\mathbf{e}_{0}\|_{k+1}}{\|\mathbf{e}_{0}\|_{k}\hfill} \geq \frac{r_{k}}{\|\mathbf{e}_{0}\|_{k}} = \rho(f).
\end{gather}
Clearly, the minimum edge length is decreased when any element with $\rho(f)<1$ is refined. Noting that $\rho(f)$ achieves a minimum of $\nicefrac{1}{\sqrt{3}}$ when $f$ is equilateral, it can be seen that insertion of a Type~I Steiner vertex leads to a reduction in the minimum edge length by a factor of $\nicefrac{1}{\sqrt{3}}$ in the worst case. 

\item A Type~II vertex $\mathbf{c}_{k}$ positioned on a facet $\mathcal{V}\subset\Vor{X}$ associated with the short edge $e_{0}$, is equidistant to the vertices $\mathbf{x}_{i},\mathbf{x}_{j}\in e_{0}$. Since $\mathcal{V}$ is a local facet of the Voronoi complex, no other vertex $\mathbf{x}_{m}\in X$ lies closer to the point $\mathbf{c}_{k}$. 

\medskip

Recalling that the Type~II refinement strategy requires the diametric ball of the edge $e_{0}$ remain empty, a worst case reduction in $\|\mathbf{e}_{0}\|$ occurs when $\mathbf{c}_{k}$ is positioned at the intersection of $\mathcal{V}$ and the diametric ball of $e_{0}$, forming a right triangle. Noting that $\mathcal{V}$ intersects $e_{0}$ at its midpoint, the minimum edge length is decreased by a factor of $\nicefrac{1}{\sqrt{2}}$ in such cases.
\end{enumerate}

Overall, Type~I point-placement is a limiting case, reducing minimum edge length by an $\BigO{1}$ factor of $\nicefrac{1}{\sqrt{3}}$ in the worst case.
\end{proof}

\subsection{Quality of Approximation}

While the preceding results address the behaviour and performance of the underlying Delaunay-refinement algorithm, the approximation quality of the resulting mesh is also a key consideration. Using the framework introduced by Boissonnat and Oudot \cite{boissonnat03ProvablyGoodSurface,boissonnat05ProvablyGoodMeshing}, a variety of geometrical and topological guarantees are sought. Firstly, a number of results from \cite{boissonnat03ProvablyGoodSurface,boissonnat05ProvablyGoodMeshing} are summarised:

\begin{definition}
\label{definition_medial_axis}
Given a bounded domain $\Omega$ enclosed by a surface $\Sigma$, the \textit{medial-axis} of $\Omega$ is the topological closure of the set of centres of \textit{empty} balls that touch the surface $\Sigma$ at more than one point.
\end{definition}

Noting that the medial-axis lies at the `centre' of any geometrical features in the domain induced by $\Sigma$, it is clear that the distance to the medial axis, denoted $d_{M}\left(\mathbf{x}\right)$ throughout, is a measure of the local `thickness' of the domain. Using such considerations, Boissonnat and Oudot introduce the notion of \textit{loose $\epsilon$-samples} to describe point-wise samplings $X\in\Sigma$ that induce restricted surface triangulations $\DelS{X}$ exhibiting favourable geometrical and topological guarantees.

\begin{definition}
\label{definition_epsilon_sample}
Let $X\in\Sigma$ be a point-wise sampling of a smooth surface $\Sigma$. Let $\mathcal{E}_{\text{V}}\left(X\right)\subseteq\Vor{X}$ be the set of edges in the associated Voronoi complex. The sampling $X$ is called a is called a \textit{loose $\epsilon$-sample} of $\Sigma$ if: (i) for all intersections $\mathbf{x}\in \left\{\mathcal{E}_{\text{V}}\left(X\right)\cap\Sigma\right\}$ the associated \textit{$\epsilon$-balls}, $\operatorname{B}(\mathbf{c},r)$, are non-empty, such that $X\cap\operatorname{B}\left(\mathbf{x},\epsilon d_{M}\left(\mathbf{x}\right)\right) \neq \emptyset$ for some $\epsilon\in\mathbb{R}^{+}$, and (ii) the restricted triangulation $\DelS{X}$ includes at least one vertex on all connected components of the surface $\Sigma$.
\end{definition}

Boissonnat and Oudot additionally establish the following geometrical and topological guarantees for restricted Delaunay tessellations $\DelS{X}$ induced by such loose $\epsilon$-samples:

\begin{proposition}
\label{prop_loose_gamma_sample}
Let $X\in\Sigma$ be a loose $\epsilon$-sampling of a smooth surface $\Sigma$, with $\epsilon\leq 0.08$. The associated restricted surface triangulation $\DelS{X}$ exhibits the following properties: (i) the triangulation $\DelS{X}$ is homeomorphic to the surface $\Sigma$, (ii) for any 2-face $f\in\DelS{X}$ the angle between the facet normal $\mathbf{\hat{n}}_{f}$ and the surface normal $\mathbf{\hat{n}}_{\Sigma}$, sampled at the vertices $\mathbf{x}\in f$, is $\BigO{\epsilon}$, and (iii) the one-sided Hausdorff distance $\operatorname{H}\left(\DelS{X},\Sigma\right)$, sampled at the centre of the surface Delaunay balls $\operatorname{B}\left(\mathbf{c},r\right)_{\text{max}}=\operatorname{SDB}\left(f\right)$ is $\BigO{\epsilon^{2}}$.
\end{proposition}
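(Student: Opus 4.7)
The plan is to derive properties (i)--(iii) from a single underlying density estimate: for every point $\mathbf{y}\in\Sigma$ there is a sample $\mathbf{x}\in X$ within distance $\BigO{\epsilon\,d_{M}(\mathbf{y})}$. This follows directly from Definition~\ref{definition_epsilon_sample}: the point $\mathbf{y}$ lies in some Voronoi cell, the nearest Voronoi edge at $\mathbf{y}$ intersects $\Sigma$, and the loose $\epsilon$-sample condition then furnishes a sample point within the associated $\epsilon$-ball of that intersection. By Lipschitz continuity of $d_{M}$ this sample lies within $\BigO{\epsilon\,d_{M}(\mathbf{y})}$ of $\mathbf{y}$ itself. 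As an immediate consequence, every surface Delaunay ball $\operatorname{SDB}(f)$ has radius $\BigO{\epsilon\,d_{M}}$, a fact I will use in all three parts.

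For (i), I would invoke the closed-ball theorem of Edelsbrunner and Shah~\cite{Edelsbrunner97Restricted}, which guarantees that $\DelS{X}$ is homeomorphic to $\Sigma$ as soon as every Voronoi face of dimension $k$ meets $\Sigma$ in an open topological ball of dimension $k-1$ (or is disjoint from $\Sigma$). The task reduces to verifying this \emph{topological ball property} under the density estimate. The argument is local: within a neighbourhood of radius comparable to the sample spacing, $\Sigma$ is a near-flat graph over its tangent plane, with second-order deviation of order $\BigO{\epsilon^{2}\,d_{M}}$, so each Voronoi face meets $\Sigma$ transversally and the intersection inherits the topology it would have against the tangent plane. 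The numerical threshold $\epsilon\leq 0.08$ enters precisely here: it is the value required to force the curvature and perturbation terms small enough for transversality, nonemptiness, and the exclusion of multi-component intersections.

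For (ii), let $f\in\DelS{X}$ have vertices $\mathbf{x}_{1},\mathbf{x}_{2},\mathbf{x}_{3}$. The density estimate forces all pairwise distances to be $\BigO{\epsilon\,d_{M}}$, so after a local change of coordinates $\Sigma$ is the graph of a function whose Hessian is controlled by $1/d_{M}$. A direct Taylor expansion---equivalently the Normal Lemma of Amenta and Bern~\cite{amenta1999surface}---shows the unit normal $\hat{\mathbf{n}}_{f}$ to the triangle differs from the true surface normal $\hat{\mathbf{n}}_{\Sigma}(\mathbf{x}_{i})$ by an angle of order $\epsilon$. For (iii), I would examine a single surface Delaunay ball $\operatorname{B}(\mathbf{c},r)$, with $\mathbf{c}\in\Sigma$, together with the diametric ball of $f$, whose centre $\mathbf{c}^{\prime}$ is the circumcentre of $f$. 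Because $\mathbf{c}$ lies on the Voronoi edge dual to $f$, the displacement $\mathbf{c}-\mathbf{c}^{\prime}$ is to first order aligned with $\hat{\mathbf{n}}_{\Sigma}$; writing $\Sigma$ as a height function over the plane of $f$ and substituting the curvature bound $1/d_{M}$ yields $\|\mathbf{c}-\mathbf{c}^{\prime}\|=\BigO{\epsilon^{2}\,d_{M}}$, which is precisely the one-sided Hausdorff bound sought.

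The hardest step is unquestionably (i). Verifying the topological ball property requires ruling out every way $\Sigma$ could pinch against a Voronoi edge or face, and this is where the explicit constant $0.08$ must be tracked through the geometric perturbation estimates; the argument is delicate because it must simultaneously bound curvature, control the angular spread of normals inside each Voronoi cell, and exclude the possibility of spurious additional intersections. By contrast, (ii) and (iii) are essentially routine second-order Taylor calculations once the density estimate and the angular bound from (ii) are in place.
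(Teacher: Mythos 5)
The paper does not prove this proposition at all: it is quoted as a known result of Boissonnat and Oudot, with the reader referred to \cite{boissonnat03ProvablyGoodSurface,boissonnat05ProvablyGoodMeshing} and \cite{ChengDeyShewchuk} for the proofs. So there is no in-paper argument to compare against; what you have written is a reconstruction of the proof architecture in those references, and as an architecture it is essentially correct --- homeomorphism via the closed-ball (topological ball property) theorem of Edelsbrunner and Shah, the normal bound via an Amenta--Bern-style normal lemma, and the Hausdorff bound via a second-order height-function estimate over the plane of $f$. Your identification of part (i) as the place where the constant $0.08$ must be tracked is also accurate.

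Two honest caveats. First, your opening density estimate does not ``follow directly'' from Definition~\ref{definition_epsilon_sample}. The loose $\epsilon$-sample condition only constrains the $\epsilon$-balls centred at the intersections of Voronoi \emph{edges} with $\Sigma$; for an arbitrary $\mathbf{y}\in\Sigma$ it is not immediate that a nearby Voronoi edge meets $\Sigma$ at a point close to $\mathbf{y}$, nor that the sample guaranteed inside that intersection's $\epsilon$-ball is close to $\mathbf{y}$ itself. Converting a \emph{loose} $\epsilon$-sample into a genuine $\BigO{\epsilon}$-sample of the whole surface is precisely the content of a nontrivial intermediate lemma in Boissonnat--Oudot, and it is where the ``loose'' qualifier earns its keep; as written, your first paragraph assumes the conclusion of that lemma. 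Second, part (i) is reduced to the topological ball property but the verification is deferred with a promissory note (``the argument is local\ldots''); since you yourself flag this as the hardest step, the proposal should be read as a correct plan rather than a proof. Neither caveat is a wrong turn --- both gaps are filled in the cited references --- but a self-contained proof would need to supply the loose-to-genuine sampling lemma and the transversality/single-component analysis for each Voronoi face explicitly.
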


The reader is referred to \cite{boissonnat03ProvablyGoodSurface,boissonnat05ProvablyGoodMeshing} or \cite{ChengDeyShewchuk} for additional details and proofs. Boissonnat and Oudot go on to show that restricted Delaunay refinement, consistent with the algorithm presented in Section~\ref{section_delaunay_refinement}, is a `provably-good' technique, being guaranteed to produce a loose $\epsilon$-sample of a surface $\Sigma$ in a finite number of steps, and to generate a high-quality surface tessellation $\DelS{X}$ as a result. 

The Frontal-Delaunay algorithm presented in this study derives from the same Delaunay-refinement framework developed in \cite{boissonnat03ProvablyGoodSurface,boissonnat05ProvablyGoodMeshing}, and inherits much of its theoretical robustness as a result. Using the properties of loose $\epsilon$-samples, the surface triangulations generated using the Frontal-Delaunay algorithm are guaranteed to be a good geometrical and topological approximation to the underlying surface -- provided the mesh size function is sufficiently small:

\begin{proposition}
\label{proposition_surface_correctness}
Given a smooth surface $\Sigma$ and an $\epsilon$-con\-forming size function $\left(\nicefrac{\alpha}{\sqrt{3}}\right)\bar{h}\left(\mathbf{x}\right) \leq \epsilon d_{M}\left(\mathbf{x}\right)$, the point-wise sampling $X\in\Sigma$ generated by the Frontal-Delaunay algorithm is a loose $\epsilon$-sample. When $\epsilon\leq 0.08$ the associated triangulation $\DelS{X}$ is a tight topological and geometrical approximation of the surface $\Sigma$. 
\end{proposition}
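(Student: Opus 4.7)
The plan is to chain together the termination and convergence results already in hand with the approximation theory of Boissonnat and Oudot, treating the proposition as essentially a bookkeeping argument that converts mesh-size bounds into medial-axis bounds.

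First, I would invoke Proposition~\ref{proposition_surface_termination} to guarantee that the refinement halts after a finite number of steps, and then apply Corollary~\ref{corollary_surface_convergence} so that every restricted facet $f\in\DelS{X}$ in the terminal triangulation obeys $h(f)<\alpha\bar{h}(\mathbf{x}_{f})$, $\epsilon(f)<\bar{\epsilon}(\mathbf{x}_{f})$, and $\rho(f)<\bar{\rho}$. Using the stated relation $h(f)=\sqrt{3}\,r$, where $r$ is the radius of the surface Delaunay ball $\operatorname{SDB}(f)$ centred at $\mathbf{x}_{f}$, the size bound rewrites as $r<(\nicefrac{\alpha}{\sqrt{3}})\,\bar{h}(\mathbf{x}_{f})$. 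The $\epsilon$-conforming hypothesis $(\nicefrac{\alpha}{\sqrt{3}})\bar{h}(\mathbf{x})\leq \epsilon\,d_{M}(\mathbf{x})$ then immediately yields $r<\epsilon\,d_{M}(\mathbf{x}_{f})$ at every surface Delaunay ball.

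Next, I would verify the two clauses of Definition~\ref{definition_epsilon_sample}. For clause (i), recall that the intersections $\mathbf{x}\in\mathcal{E}_{\text{V}}(X)\cap\Sigma$ are precisely the centres $\mathbf{x}_{f}$ of surface Delaunay balls associated with facets $f\in\DelS{X}$. Each such ball has its three generating vertices $\mathbf{v}\in X\cap f$ lying on its boundary at distance $r$ from $\mathbf{x}_{f}$; combined with the previous paragraph, these vertices lie inside $\operatorname{B}(\mathbf{x}_{f},\epsilon\,d_{M}(\mathbf{x}_{f}))$, so the $\epsilon$-ball is non-empty. If a Voronoi edge $\mathbf{v}_{f}$ crosses $\Sigma$ multiple times, Algorithm~\ref{algorithm_restricted_delaunay_surface} retains only the largest ball for the size test, so I would need a brief argument that, because $\bar h$ is $g$-Lipschitz with $g<1$ and $d_{M}$ is $1$-Lipschitz, the bound $r<\epsilon\,d_{M}$ transported from the dominant intersection still covers the remaining intersections up to a constant factor hidden in the choice of $\epsilon\leq 0.08$. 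For clause (ii), I would point to the fact that the initial sampling $X_{0}\in\Sigma$ is chosen to include vertices on every connected component of $\Sigma$, and that subsequent refinement only adds vertices, so the property is preserved throughout.

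Finally, with $X$ established as a loose $\epsilon$-sample and $\epsilon\leq 0.08$, Proposition~\ref{prop_loose_gamma_sample} supplies the homeomorphism $\DelS{X}\cong\Sigma$, the $\BigO{\epsilon}$ normal deviation, and the $\BigO{\epsilon^{2}}$ one-sided Hausdorff bound, completing the claim. The main obstacle is the multiple-intersection case mentioned above: the size constraint is enforced only at the maximal surface ball, yet Definition~\ref{definition_epsilon_sample} ranges over every intersection in $\mathcal{E}_{\text{V}}(X)\cap\Sigma$, so I expect the careful step to be a short Lipschitz argument showing that sub-maximal intersections on the same Voronoi edge still satisfy the $\epsilon$-ball condition, possibly requiring a mild tightening of the constant $\alpha$ or of $\epsilon$ relative to the $0.08$ threshold. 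Everything else is a direct transcription of the Delaunay-refinement analysis in \cite{boissonnat03ProvablyGoodSurface,boissonnat05ProvablyGoodMeshing}.
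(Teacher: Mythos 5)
Your proposal follows the same route as the paper's own proof: combine $h(f)=\sqrt{3}\,r$ with the convergence bound $h(f)\leq\alpha\bar{h}(\mathbf{x}_{f})$ and the $\epsilon$-conforming hypothesis to obtain $r\leq\epsilon\,d_{M}(\mathbf{x}_{f})$ at the surface-ball centres, observe that the facet vertices lie on the boundary of each surface Delaunay ball so the $\epsilon$-balls are non-empty, and then invoke Proposition~\ref{prop_loose_gamma_sample}. The paper's proof is in fact terser than yours -- it consists only of the algebraic substitution followed by the assertion that ``it is clear that $X$ is a loose $\epsilon$-sample'' -- so your explicit verification of both clauses of Definition~\ref{definition_epsilon_sample}, including the connected-components clause via the initial sampling, is a strict improvement rather than a deviation. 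The multiple-intersection subtlety you flag is genuine and is not addressed in the paper: the algorithm bounds only the maximal surface ball on each Voronoi edge, while Definition~\ref{definition_epsilon_sample} quantifies over every point of $\mathcal{E}_{\text{V}}\left(X\right)\cap\Sigma$; the Lipschitz-type repair you sketch is the right one (two centres on the same dual edge satisfy $\|\mathbf{c}'-\mathbf{c}_{\max}\|\leq 2r_{\max}$, so $r'\leq \epsilon\, d_{M}(\mathbf{c}')/(1-2\epsilon)$, costing only a constant-factor tightening of $\epsilon$ relative to the $0.08$ threshold), and it matches how the underlying Boissonnat--Oudot analysis disposes of the sub-maximal balls.
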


\begin{proof}
The Frontal-Delaunay algorithm refines any large surface facets, ensuring that $h\left(f\right)\leq \alpha \bar{h}(\mathbf{x}_{f})$ for all $f\in\DelS{X}$. Rearranging the previous expression, and substituting for the given mesh size constraints leads to $h(f) = \sqrt{3}r \leq \alpha \left(\nicefrac{\sqrt{3}}{\alpha}\right) \epsilon d_{M}\left(\mathbf{x}\right)$, which simplifies to $r\leq 0.08\, d_{M}\left(\mathbf{x}\right)$ when $\epsilon\leq 0.08$. Recalling Definition~\ref{definition_epsilon_sample}, it is clear that $X$ is a loose $\epsilon$-sample of $\Sigma$, and, via Proposition~\ref{prop_loose_gamma_sample}, $\DelS{X}$ is a tight geometrical and topological approximation of $\Sigma$ as a result.
\end{proof}

\section{Mesh Size Functions}
\label{section_size_functions}

The construction of high-quality mesh-size functions is an important aspect of both the restricted Delaunay-refinement and Frontal-Delaunay algorithms presented in Sections~\ref{section_delaunay_refinement} and \ref{section_frontal_delaunay}. A good mesh size function $\bar{h}\left(\mathbf{x}\right)$ incorporates sizing constraints imposed by both the user and the geometry of the domain to be meshed. These contributions can be considered via two separate size functions, where $\bar{h}_{u}\left(\mathbf{x}\right)$ represents user-defined sizing information and $\bar{h}_{g}\left(\mathbf{x}\right)$ encapsulates geometry driven constraints. In this study, it is required that both $\bar{h}_{u}\left(\mathbf{x}\right)$ and $\bar{h}_{g}\left(\mathbf{x}\right)$ be piecewise linear functions defined on a supporting tetrahedral complex $\mathcal{S}$. Construction of appropriate user-defined functions $\bar{h}_{u}\left(\mathbf{x}\right)$ is highly problem dependent and detailed formulations are not considered here.

As per the discussions presented in Section~\ref{section_discussion}, it is known that the geometric mesh size function $\bar{h}_{g}\left(\mathbf{x}\right)$ must be sufficiently small relative to the \textit{local-feature-size} of the domain, denoted $\operatorname{lfs}\left(\mathbf{x}\right)$, to ensure that the resulting surface triangulation $\DelS{X}$ is both topologically and geometrically correct. Specifically, to guarantee that the resulting tessellation is a \textit{loose $\epsilon$-sample}, it is required that $h_{g}\left(\mathbf{x}\right)\leq 0.08\operatorname{lfs}\left(\mathbf{x}\right)$, where $\operatorname{lfs}\left(\mathbf{x}\right)$ is the distance to the \textit{medial-axis} of the bounded domain $\Omega$. In practice, such constraints are typically found to be overly restrictive \cite{ChengDeyShewchuk}, allowing for a relaxation of the coefficient $\epsilon$. In this study, $\epsilon = \nicefrac{1}{2}$ is used throughout. The local-feature-size can be expressed as the distance from any point $\mathbf{x}_{i}\in\Sigma$ to the closest point on the \textit{medial-axis} of the domain. Given that $\Sigma$ is represented as an underlying triangulation $\mathcal{P}$ in this work, a discrete approximation to $\operatorname{lfs}\left(\mathbf{x}\right)$ can be calculated directly via the method of Amenta and Bern \cite{amenta1999surface,amenta2001power,dey2004approximating}. For the sake of brevity, we do not describe these methods in full here, but simply note that they provide an efficient means to estimate $\operatorname{lfs}\left(\mathbf{x}\right)$ at the vertices of the supporting complex $\mathcal{S}$.

Given an estimate of $\operatorname{lfs}\left(\mathbf{x}\right)$ on the boundary of $\mathcal{S}$, it is possible to exact a degree of user-defined control on the resulting size function via a Lipschitz smoothing process. Following the work of Persson \cite{Persson06SizeFunc}, a \textit{gradient-limited} function $\tilde{h}\left(\mathbf{x}\right)$ can be constructed by limiting the variation in $\bar{h}\left(\mathbf{x}\right)$ over the elements of $\mathcal{S}$. In this study, a scalar smoothing parameter $g\in\mathbb{R}^{+}$ is used to globally, and isotropically limit variation, such that 
\begin{gather}
\tilde{h}(\mathbf{x}_{i}) \leq \tilde{h}(\mathbf{x}_{j}) + g\,\|\mathbf{x}_{i}-\mathbf{x}_{j}\| 
\end{gather}
for all vertex pairs $\mathbf{x}_{i},\mathbf{x}_{j}\in\mathcal{S}$. The gradient-limited size function $\tilde{h}\left(\mathbf{x}\right)$ becomes more uniform as $g\rightarrow 0$.

\section{Results \& Discussions}
\label{section_results}

\begin{figure*}[p]
\centering
\caption{Triangulations for the \textsc{elephant}, \textsc{hip} and \textsc{bunny} problems, showing output for the \textsc{jgsw--fd} algorithm (upper) and the \textsc{cgal--dr} algorithm (lower). Meshes were built using: (i) uniform mesh size functions $\bar{h}\left(\mathbf{x}\right)=\alpha$, with $\alpha\in\mathbb{R}^{+}$, (ii) tight constraints on element shape-quality, such that $\theta_{\text{min}}\geq 30^\circ$, and (iii) uniform surface discretisation thresholds $\bar{\epsilon} = \left(\nicefrac{1}{4}\right)\, \bar{h}\left(\mathbf{x}\right)$. Element counts $|\TS|$, and algorithm run-times $(\mathrm{t})$ are included for each case. Normalised histograms of element area-length ratio $a\left(f\right)$, plane-angle $\theta\left(f\right)$ and relative-length $\reledge$ are illustrated.}

\label{figure_surf_cgal_uniform}

{
\footnotesize
\tabulinesep=1pt

\medskip

\begin{tabu} {c|c|c}

\parbox[b][1em][b]{.300\textwidth}{\center (\textsc{jgsw--fd}): $|\TS|=16,202$, $\mathrm{t}=0.98\mathrm{s}$} &
\parbox[b][1em][b]{.300\textwidth}{\center (\textsc{jgsw--fd}): $|\TS|=23,291$, $\mathrm{t}=1.42\mathrm{s}$} &
\parbox[b][1em][b]{.300\textwidth}{\center (\textsc{jgsw--fd}): $|\TS|=19,120$, $\mathrm{t}=1.22\mathrm{s}$} \\

\begin{minipage}[c]{.300\textwidth}
\begin{center}
\includegraphics[height=5.50cm]{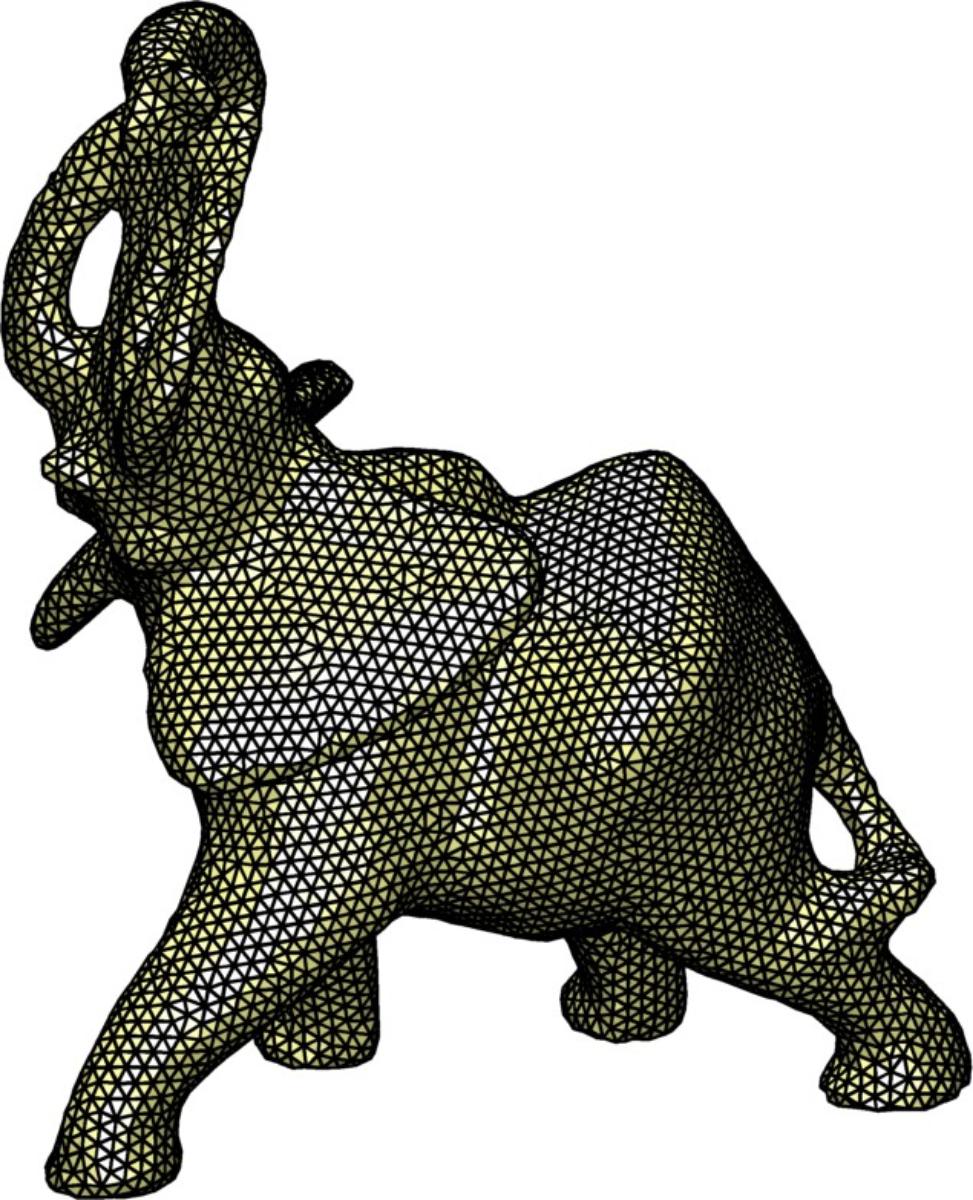} 
\end{center}
\end{minipage} &
\begin{minipage}[c]{.300\textwidth}
\begin{center}
\includegraphics[height=5.50cm]{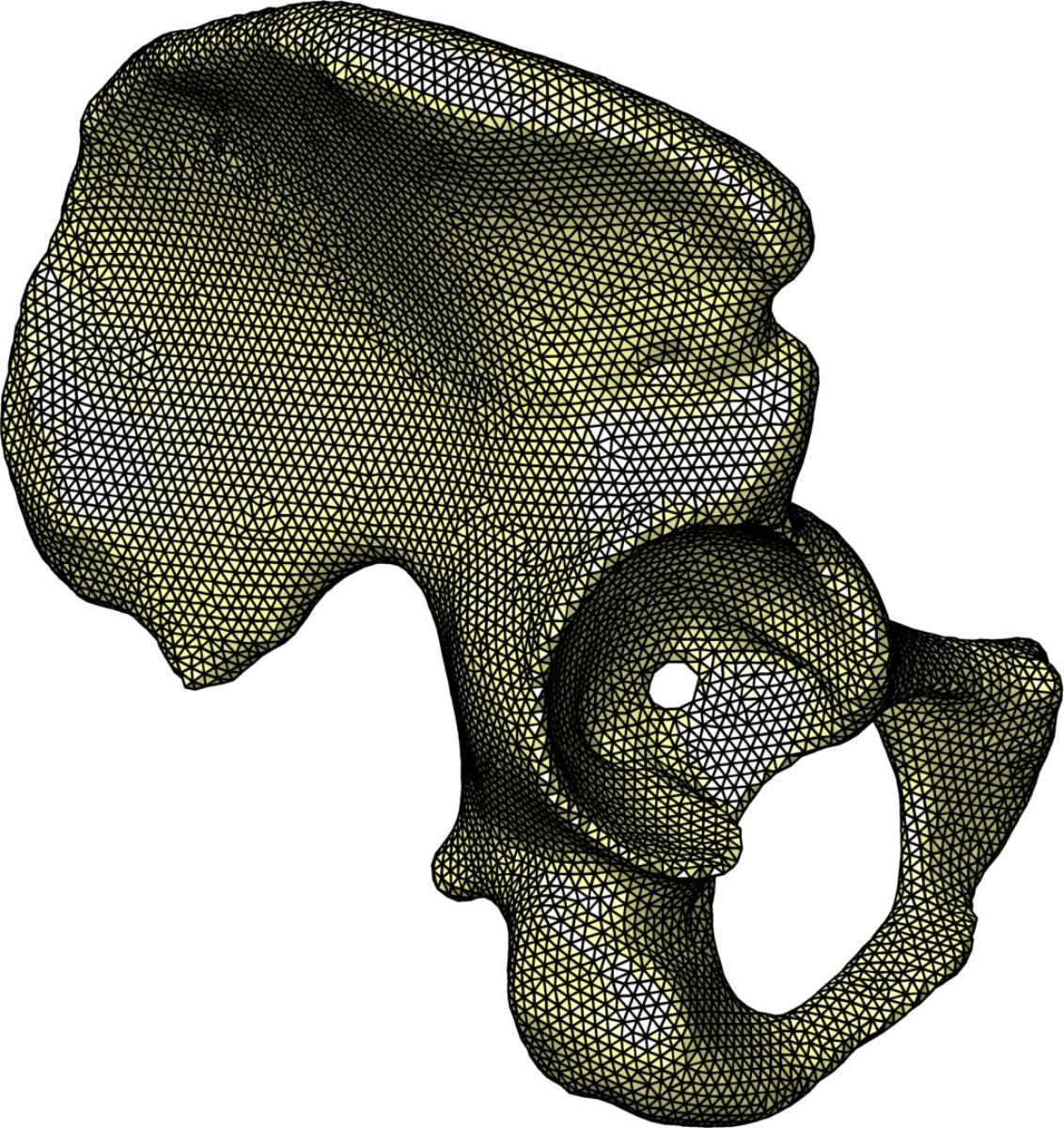} 
\end{center}
\end{minipage} &
\begin{minipage}[c]{.300\textwidth}
\begin{center}
\includegraphics[height=5.50cm]{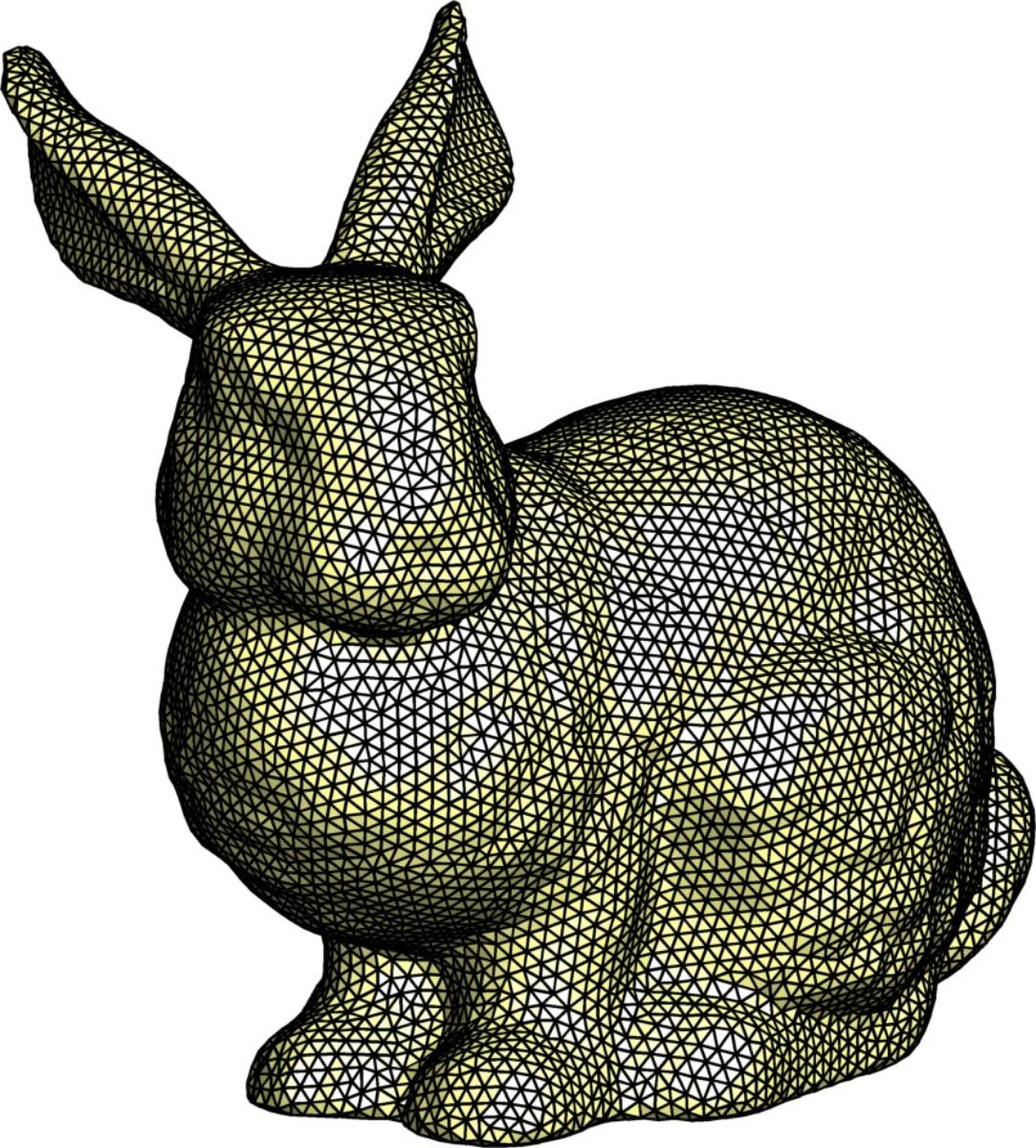} 
\end{center}
\end{minipage}
\rule{0pt}{\tablestrutsize}\rule[-\tablestrutsize]{0pt}{\tablestrutsize} \\

\includegraphics[width=5.25cm]{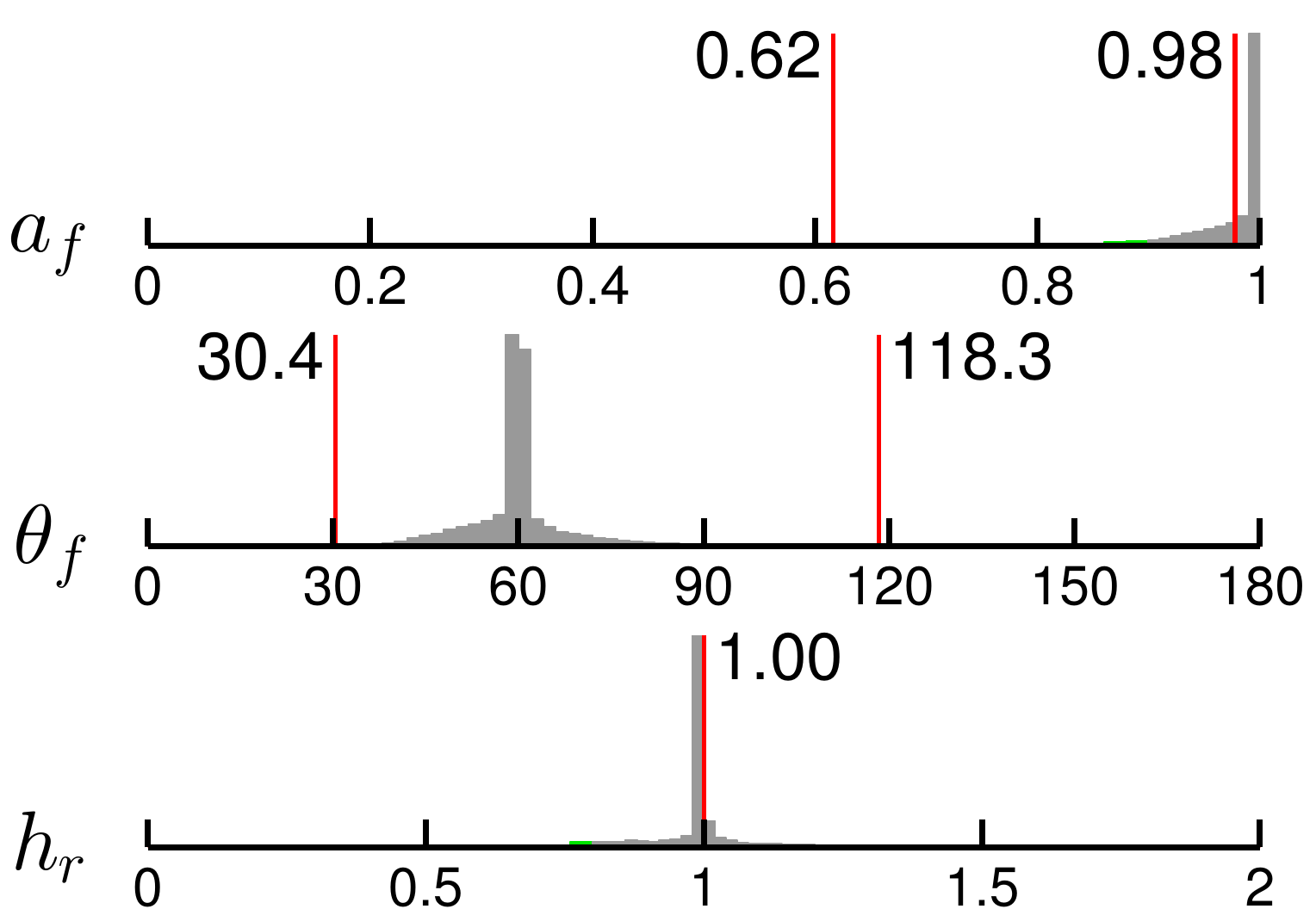} &
\includegraphics[width=5.25cm]{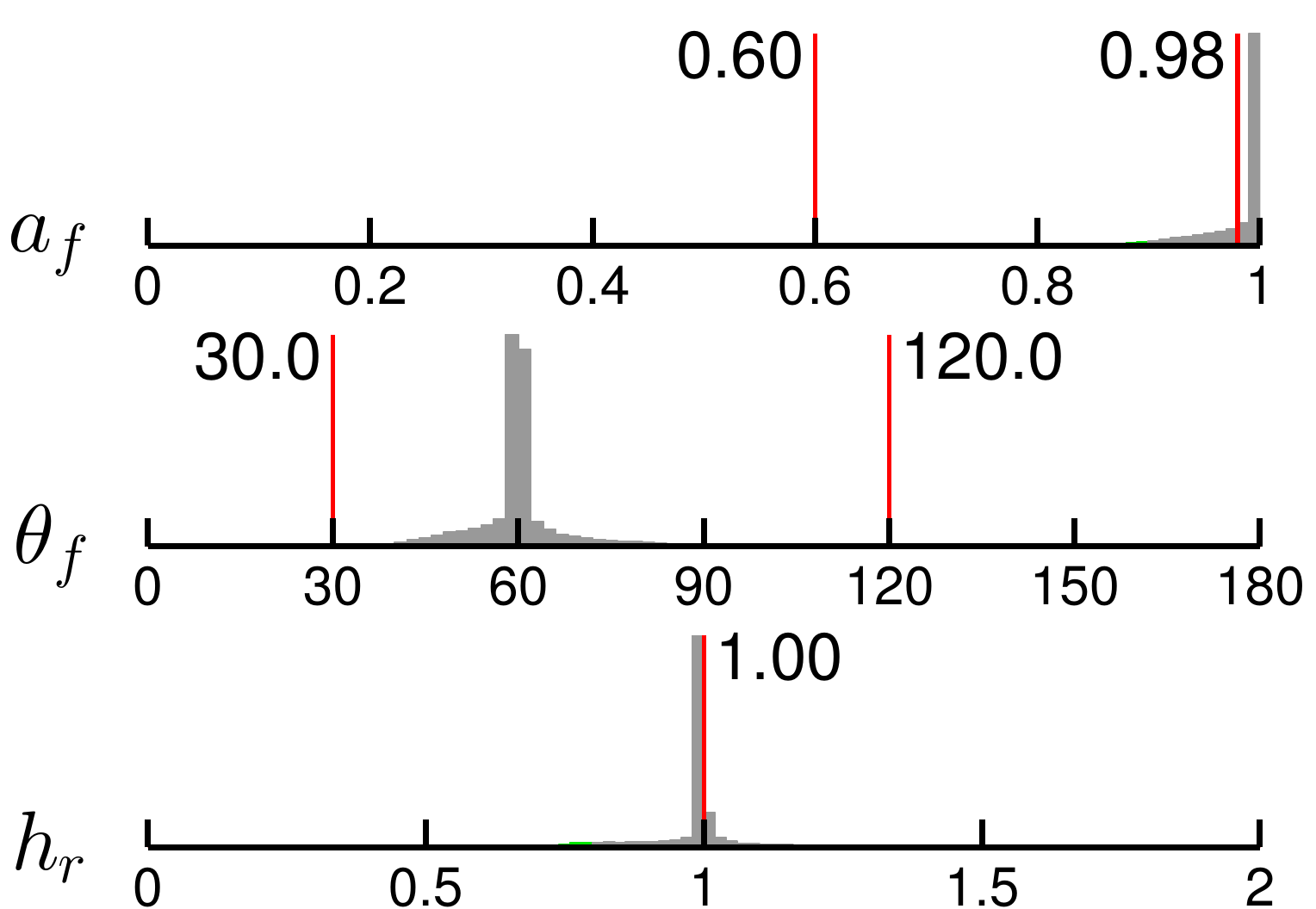} &
\includegraphics[width=5.25cm]{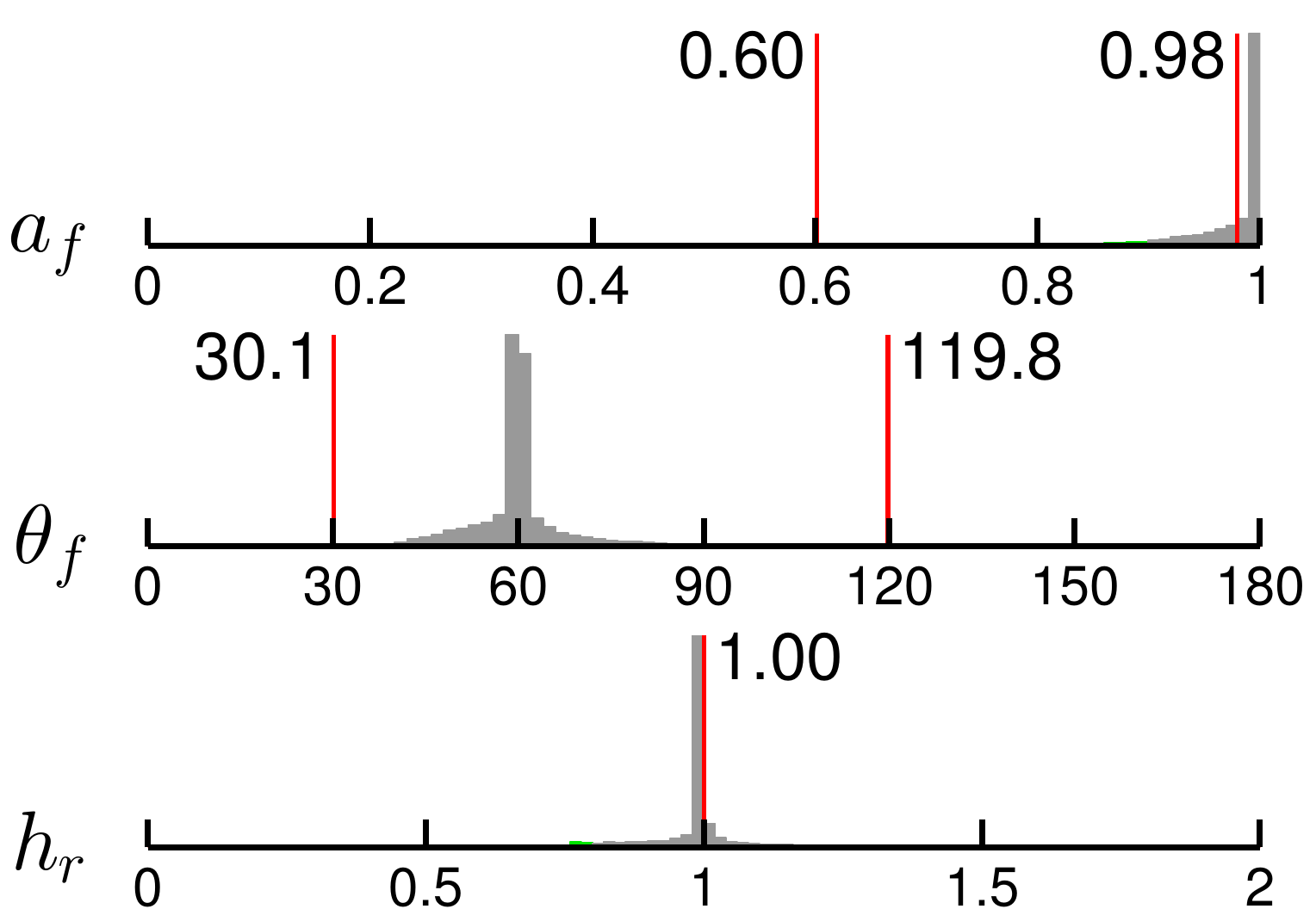}

\\ \hline

\parbox[b][1em][b]{.300\textwidth}{\center (\textsc{cgal--dr}): $|\TS|=16,268$, $\mathrm{t}=1.31\mathrm{s}$} &
\parbox[b][1em][b]{.300\textwidth}{\center (\textsc{cgal--dr}): $|\TS|=23,404$, $\mathrm{t}=1.83\mathrm{s}$} &
\parbox[b][1em][b]{.300\textwidth}{\center (\textsc{cgal--dr}): $|\TS|=19,110$, $\mathrm{t}=1.47\mathrm{s}$} \\

\begin{minipage}[c]{.300\textwidth}
\begin{center}
\includegraphics[height=5.50cm]{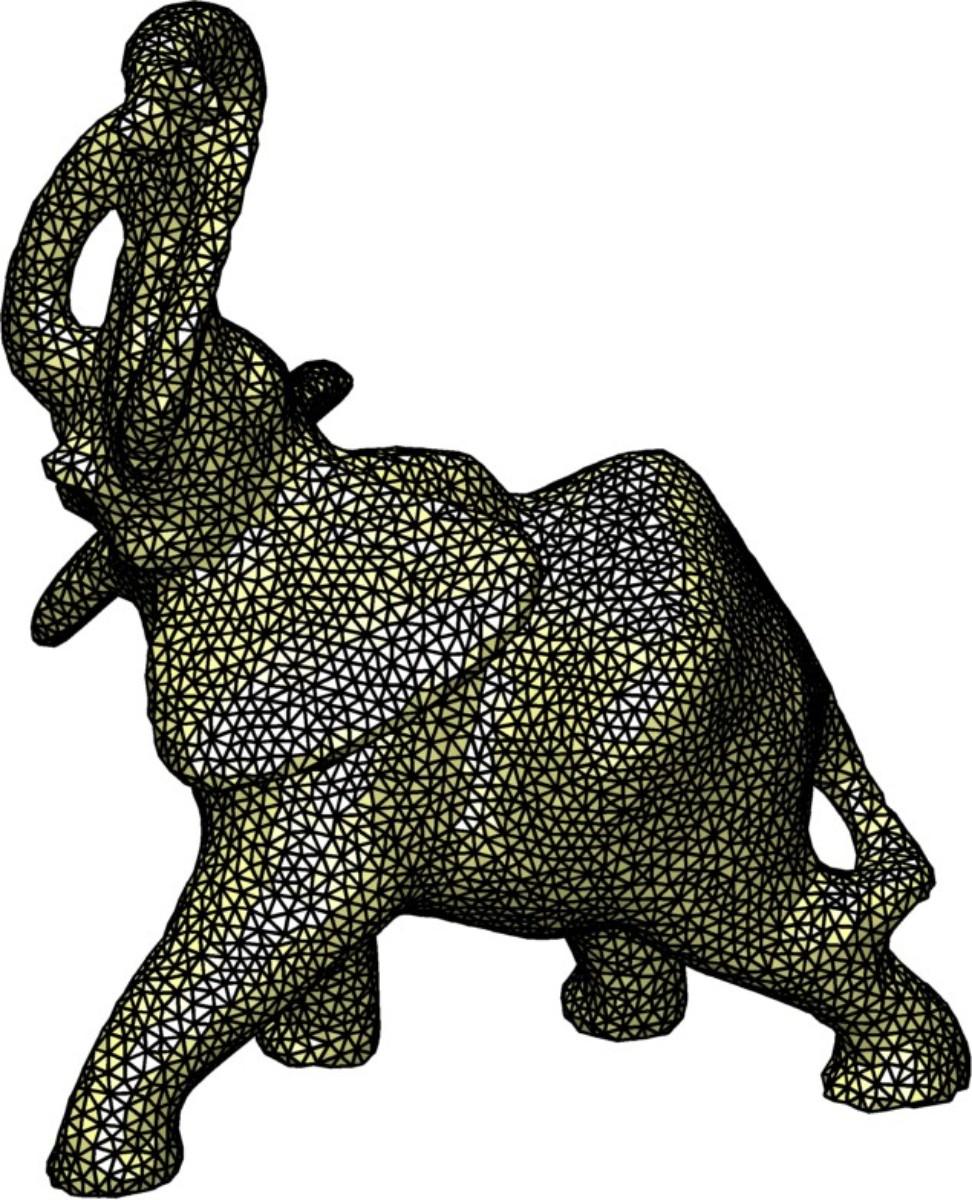} 
\end{center}
\end{minipage} &
\begin{minipage}[c]{.300\textwidth}
\begin{center}
\includegraphics[height=5.50cm]{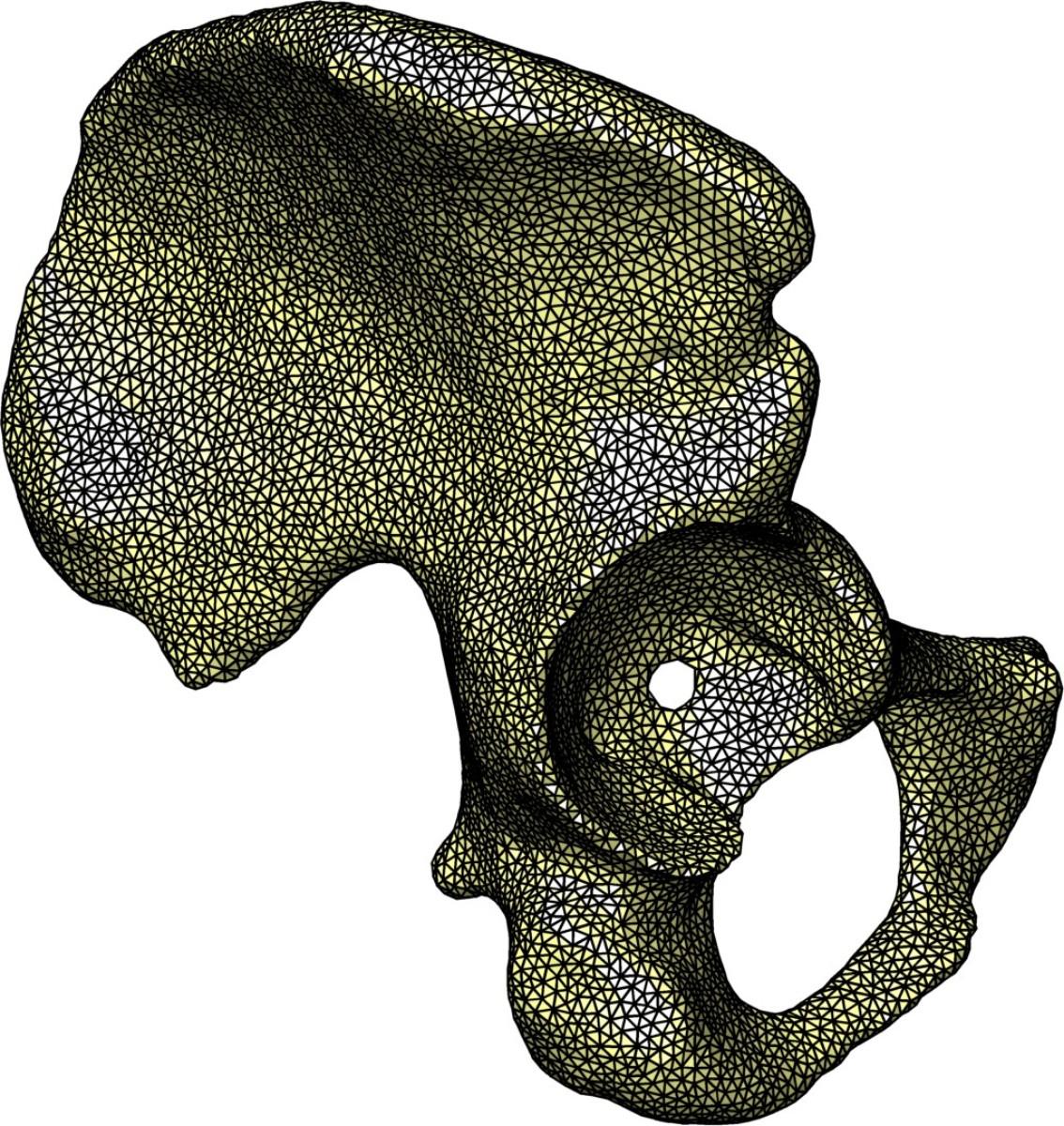} 
\end{center}
\end{minipage} &
\begin{minipage}[c]{.300\textwidth}
\begin{center}
\includegraphics[height=5.50cm]{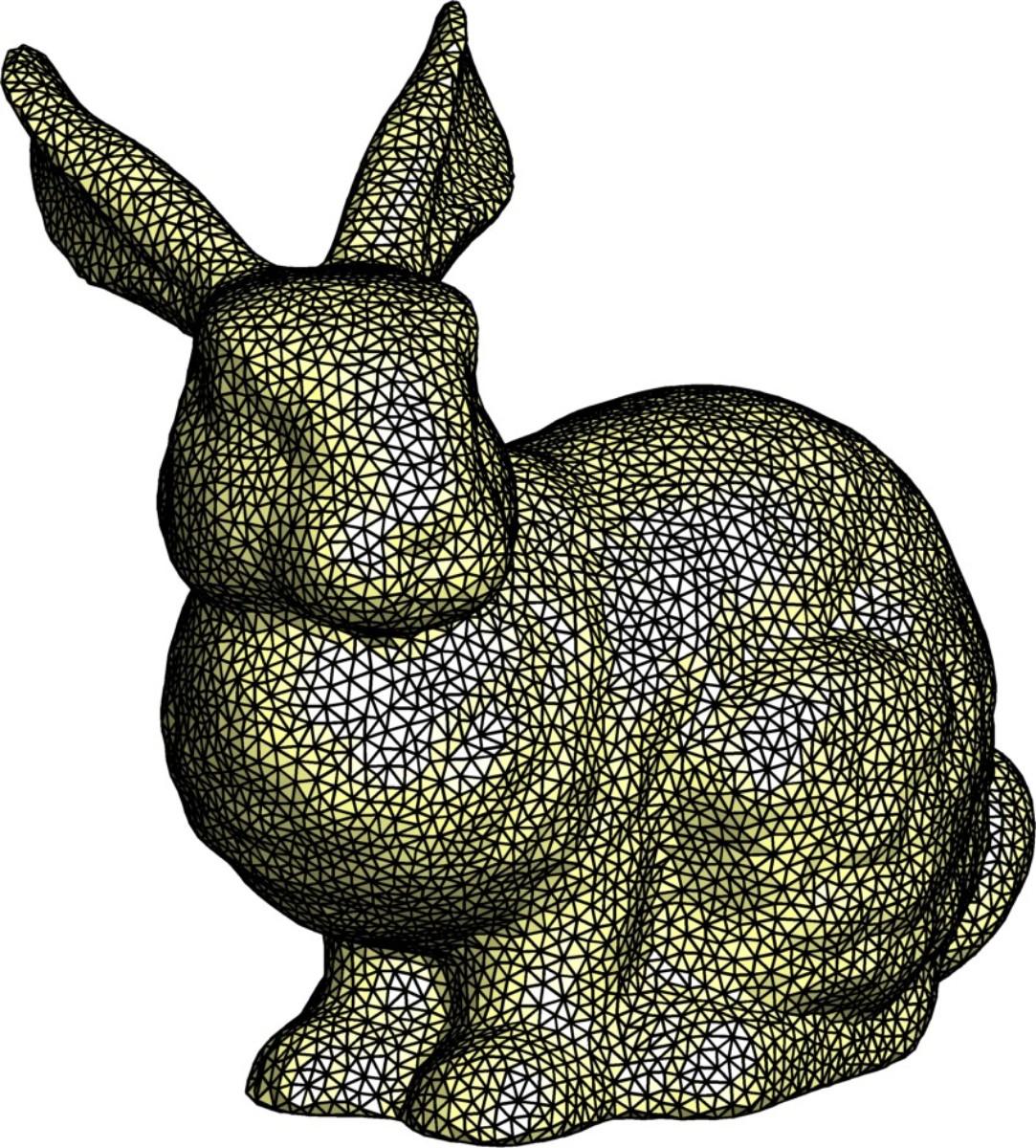} 
\end{center}
\end{minipage}
\rule{0pt}{\tablestrutsize}\rule[-\tablestrutsize]{0pt}{\tablestrutsize} \\

\includegraphics[width=5.25cm]{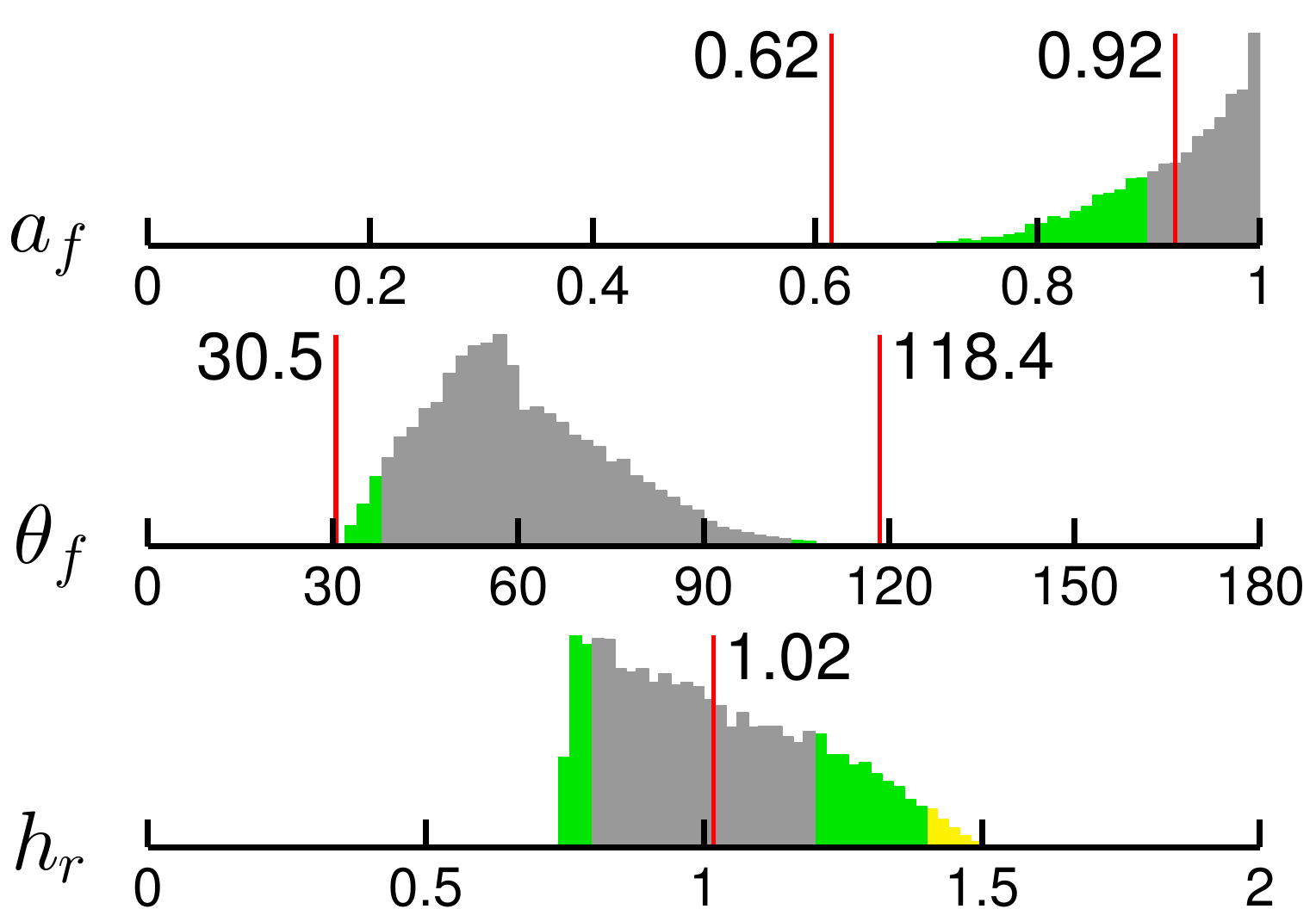} &
\includegraphics[width=5.25cm]{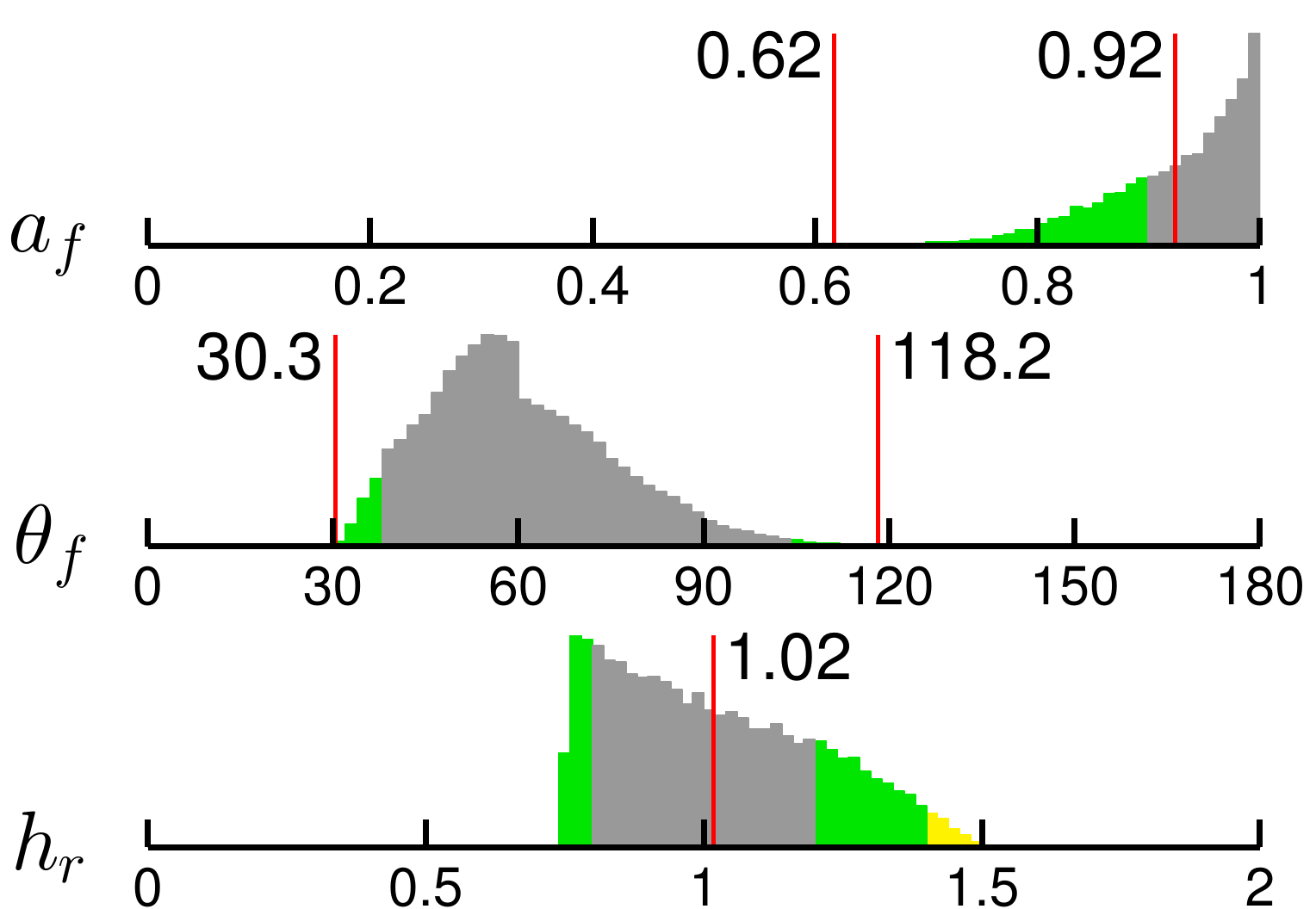} &
\includegraphics[width=5.25cm]{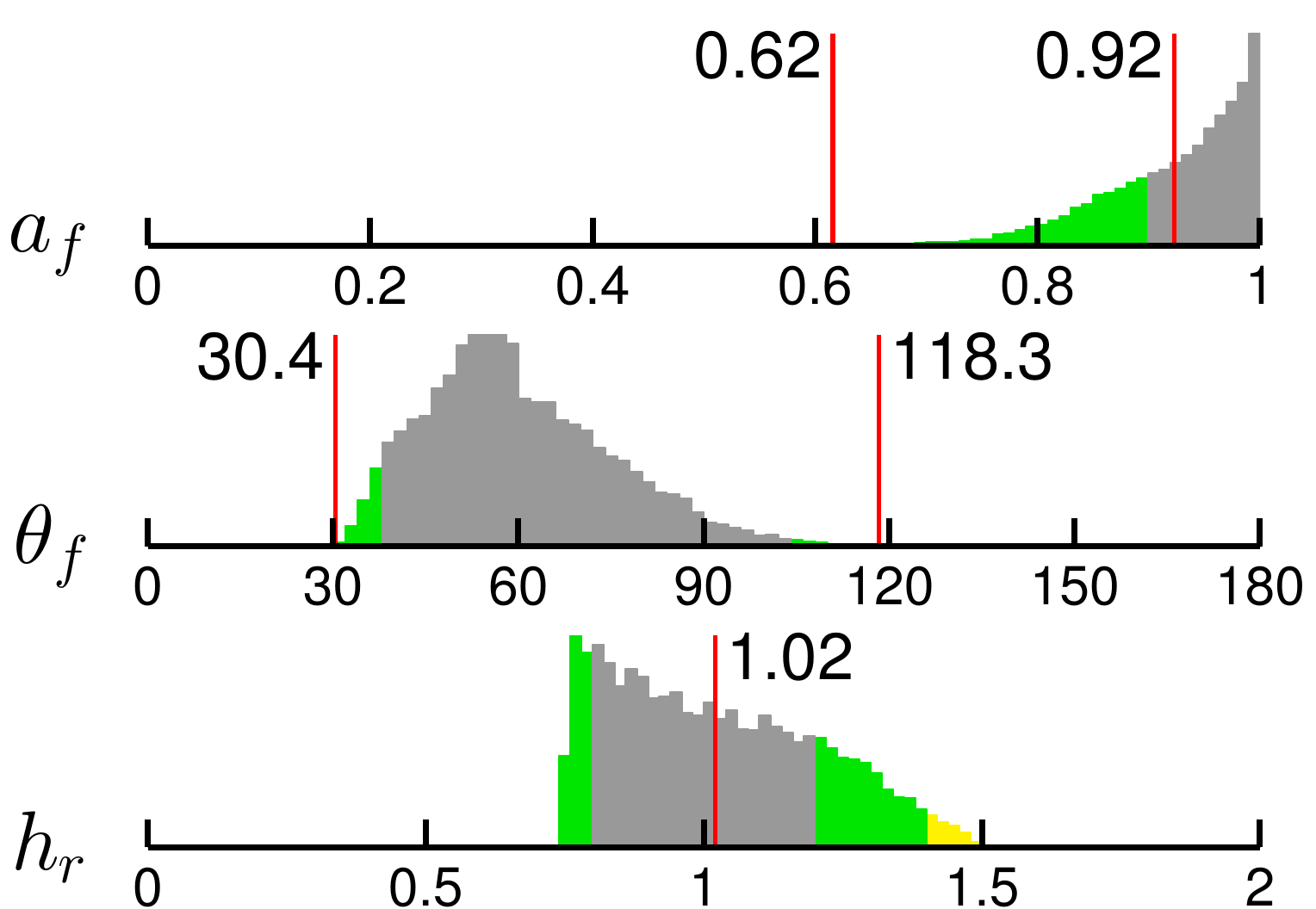}

\end{tabu}}
\end{figure*}

The performance of the Delaunay-refinement and Frontal-Delaunay surface meshing algorithms presented in Sections~\ref{section_delaunay_refinement} and \ref{section_frontal_delaunay} was investigated experimentally, with both techniques used to mesh a series of benchmark problems. Both the Frontal-Delaunay and Delaunay-refinement algorithms were implemented, allowing the performance and output of the two techniques to be compared side-by-side. Due to similarities in the overall algorithmic structure, a common code-base was used, with the algorithms differing only in the type of Steiner vertices inserted, as per the discussions outlined in Section~\ref{section_frontal_delaunay}. Care was taken to ensure that both methods were implemented in a consistent fashion, allowing unbiased comparisons to be made between the algorithms without needing to account for systemic differences arising from particular implementation and/or design choices. Both algorithms were implemented in \cpp and compiled as 64-bit executables. Both the Frontal-Delaunay and Delaunay-refinement algorithms have been implemented as part of the \textsc{jigsaw} meshing package, currently available by request from the first author. These implementations are referred to as \textsc{jgsw--fd} and \textsc{jgsw--dr} throughout the following discussions, with the suffixes \textsc{--fd} and \textsc{--dr} denoting `Frontal-Delaunay' and `Delaunay-refinement', respectively.

In order to provide additional performance information, the well-known \textsc{cgalmesh} implementation \cite{jamin2013cgalmesh} was also included in a subset of the meshing comparisons. The \textsc{cgalmesh} algorithm was sourced from version 4.6 of the \textsc{cgal} package \cite{cgal:pt-t3-15a,cgal:pt-tds3-15a} and was compiled as a 64-bit library. The \textsc{cgalmesh} algorithm is referred to as \textsc{cgal--dr} throughout the following discussions, with the suffix \textsc{--dr} denoting `Delaunay-refinement'. All tests were run using a single core of an Intel i7 processor. Visualisation and post-processing was completed using \textsc{matlab}.

\subsection{Preliminaries}

The various Delaunay-refinement and Frontal-Delaunay algorithms were used to mesh a series of eight benchmark problems, including (i) a set of three test-cases presented in Figure~\ref{figure_surf_cgal_uniform}, in which the performance of the new \textsc{jgsw--fd} algorithm and the existing \textsc{cgal--dr} implementation was compared, (ii) three additional test-cases shown in Figure~\ref{figure_surf_fdvsdr_uniform}, designed to contrast the performance of the \textsc{jgsw--fd} and \textsc{jgsw--dr} algorithms, and (iii) two detailed comparative studies presented in Figures~\ref{figure_surf_fdvsdr_bimba} and~\ref{figure_surf_fdvsdr_kiss} designed to assess the impact of non-uniform mesh-sizing constraints on the performance of the \textsc{jgsw--fd} and \textsc{jgsw--dr} algorithms. 

In all test cases, a constant radius-edge ratio threshold, $\bar{\rho} = 1$ was specified, corresponding to $\theta_{\text{min}}\geq 30^\circ$. Additionally, non-uniform surface discretisation constraints were enforced, setting $\bar{\epsilon}(\mathbf{x})=\beta \bar{h}\left(\mathbf{x}\right)$, with $\beta=\nicefrac{1}{4}$. For all test problems, detailed statistics on element quality are presented, including  histograms of element \textit{area-length} $a\left(f\right)$, \textit{plane-angle} $\theta\left(f\right)$ and \textit{relative-length} $\reledge$ metrics. Histograms further highlight the minimum and mean area-length metrics, the worst-case plane angle bounds, $\theta_{\text{min}}$ and $\theta_{\text{max}}$ and the mean relative edge-length. 

Overall, both Delaunay-refinement algorithms (\textsc{cgal--dr}, \textsc{jgsw--dr}) and the new Frontal-Delaunay technique (\textsc{jgsw--fd}) were found to successfully mesh the full set of benchmark problems, satisfying all constraints. These results demonstrate that the new Frontal-Delaunay technique is capable of generating high-quality surface triangulations, satisfying constraints on element shape-quality $\rho\left(f\right)$, element size $h(\mathbf{x}_{f})$ and surface discretisation error $\epsilon\left(\mathbf{x}_f\right)$, consistent with conventional Delaunay-refinement approaches.

\begin{figure*}[p]
\centering
\caption{Triangulations for the \textsc{ifp2}, \textsc{femur} and \textsc{rocker} test problems, showing output for the \textsc{jgsw--fd} (upper) and \textsc{jgsw--dr} (lower) algorithms. Meshes were built using: (i) uniform mesh size functions $\bar{h}\left(\mathbf{x}\right)=\alpha$, with $\alpha\in\mathbb{R}^{+}$, (ii) tight constraints on element shape-quality, such that $\theta_{\text{min}}\geq 30^\circ$, and (iii) non-uniform surface discretisation functions $\bar{\epsilon} = \left(\nicefrac{1}{4}\right)\, \bar{h}\left(\mathbf{x}\right)$. Element counts $|\TS|$, and algorithm run-times $(\mathrm{t})$ are included for each case. Normalised histograms of element area-length ratio $a\left(f\right)$, plane-angle $\theta\left(f\right)$ and relative-length $\reledge$ are also illustrated.}

\label{figure_surf_fdvsdr_uniform}

{
\footnotesize
\tabulinesep=1pt

\medskip

\begin{tabu} {c|c|c}

\parbox[b][1em][b]{.300\textwidth}{\center (\textsc{jgsw--fd}): $|\TS|=87,083$, $\mathrm{t}=6.24\mathrm{s}$} &
\parbox[b][1em][b]{.300\textwidth}{\center (\textsc{jgsw--fd}): $|\TS|= 7,854$, $\mathrm{t}=0.48\mathrm{s}$} &
\parbox[b][1em][b]{.300\textwidth}{\center (\textsc{jgsw--fd}): $|\TS|=13,023$, $\mathrm{t}=0.72\mathrm{s}$} \\

\begin{minipage}[c]{.300\textwidth}
\begin{center}
\includegraphics[width=5.50cm]{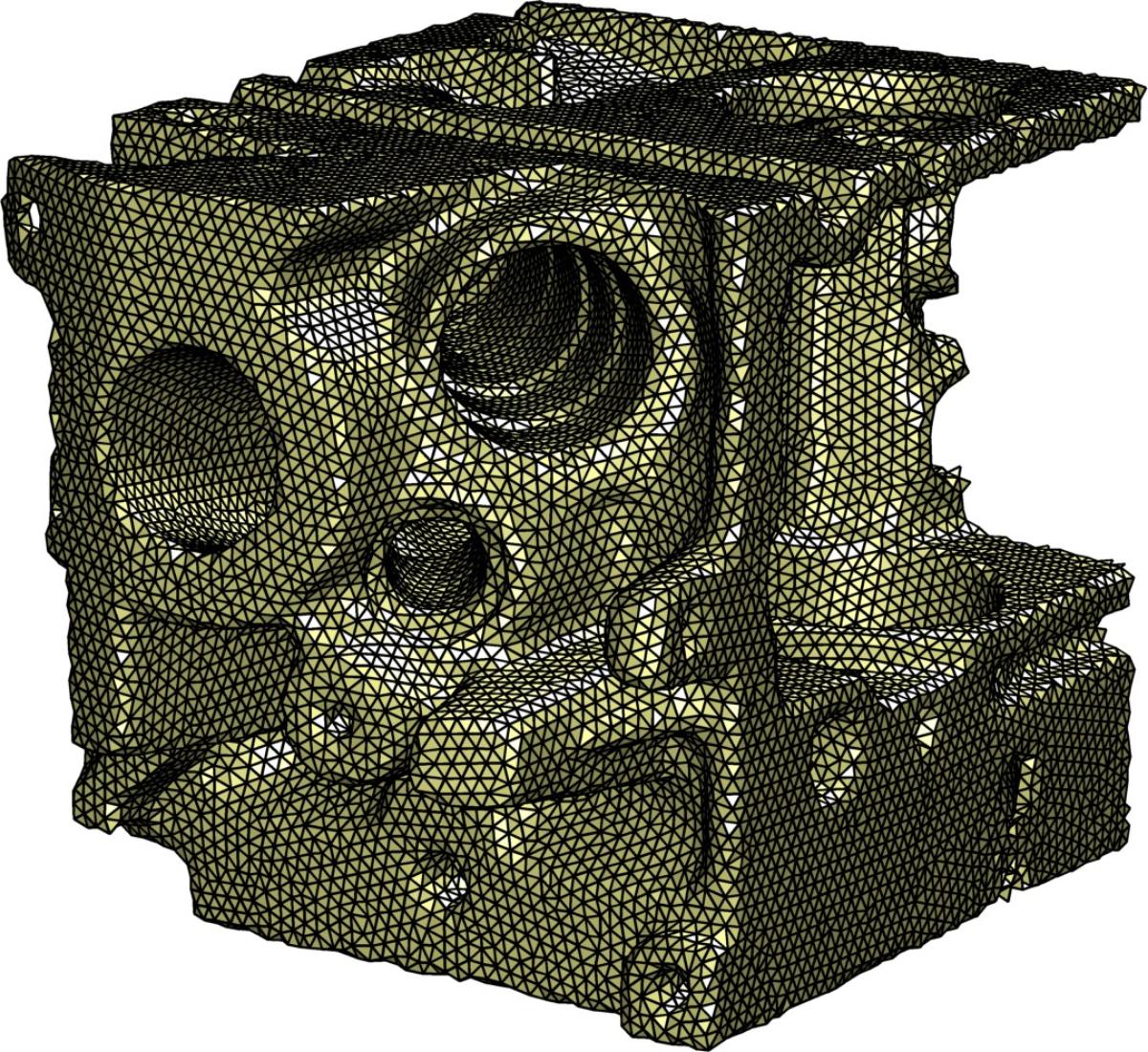} 
\end{center}
\end{minipage} &
\begin{minipage}[c]{.300\textwidth}
\begin{center}
\includegraphics[height=5.50cm]{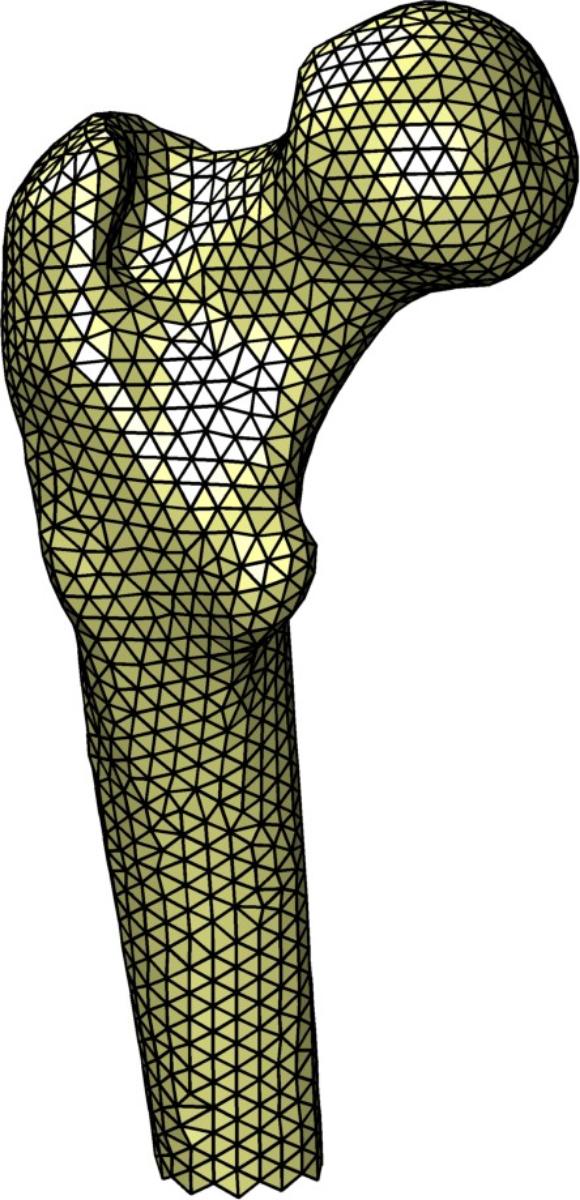} 
\end{center}
\end{minipage} &
\begin{minipage}[c]{.300\textwidth}
\begin{center}
\includegraphics[width=5.50cm]{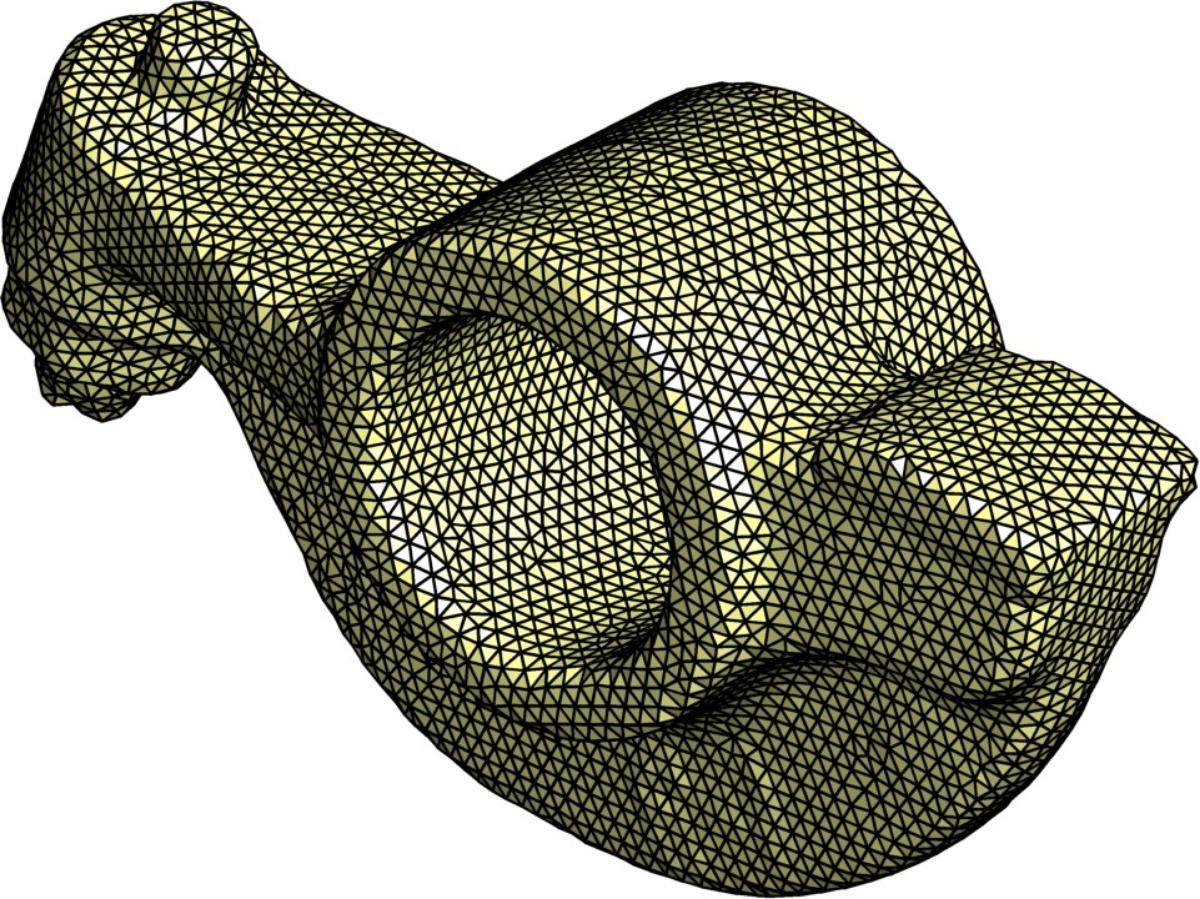} 
\end{center}
\end{minipage}
\rule{0pt}{\tablestrutsize}\rule[-\tablestrutsize]{0pt}{\tablestrutsize} \\

\includegraphics[width=5.25cm]{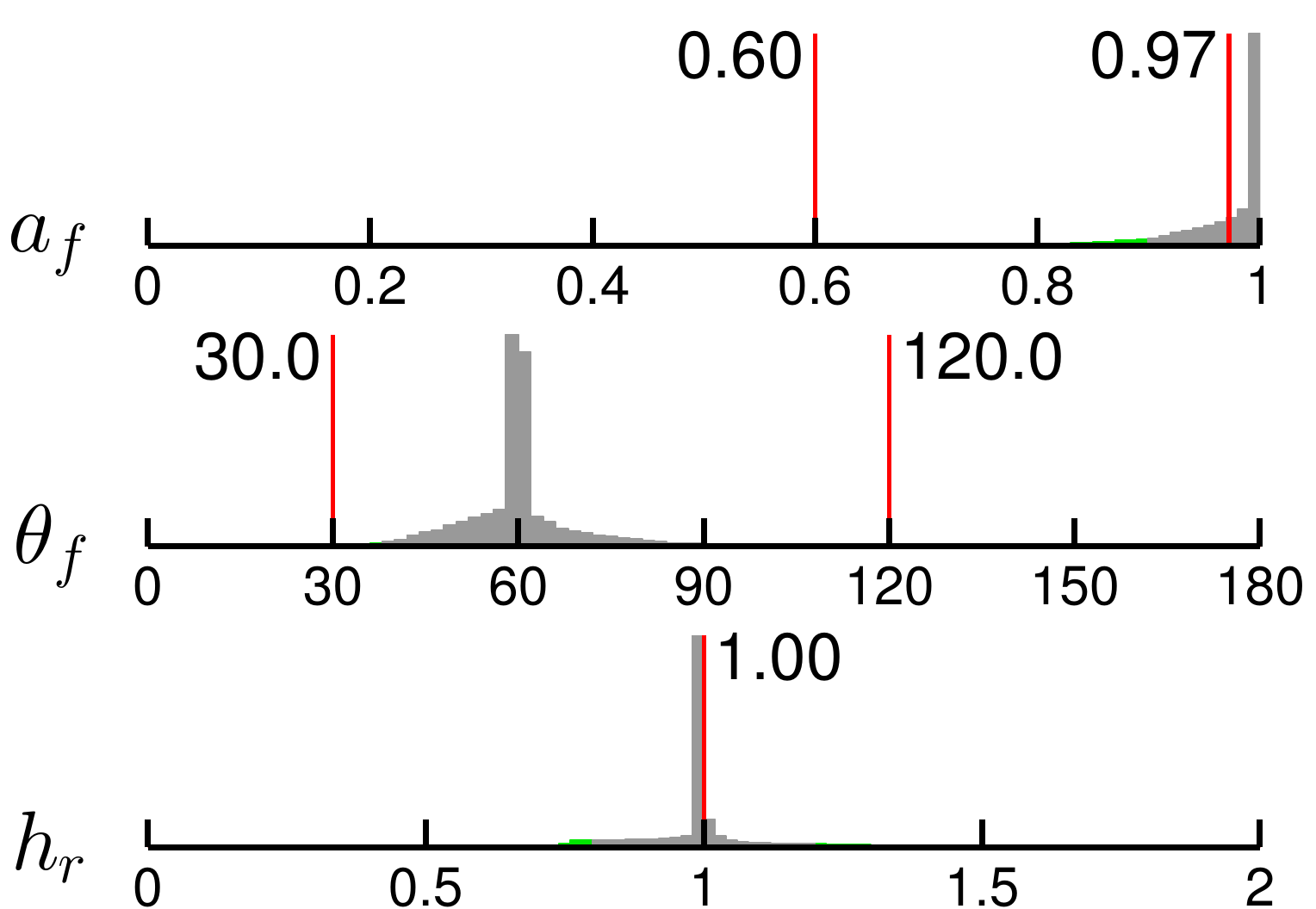} &
\includegraphics[width=5.25cm]{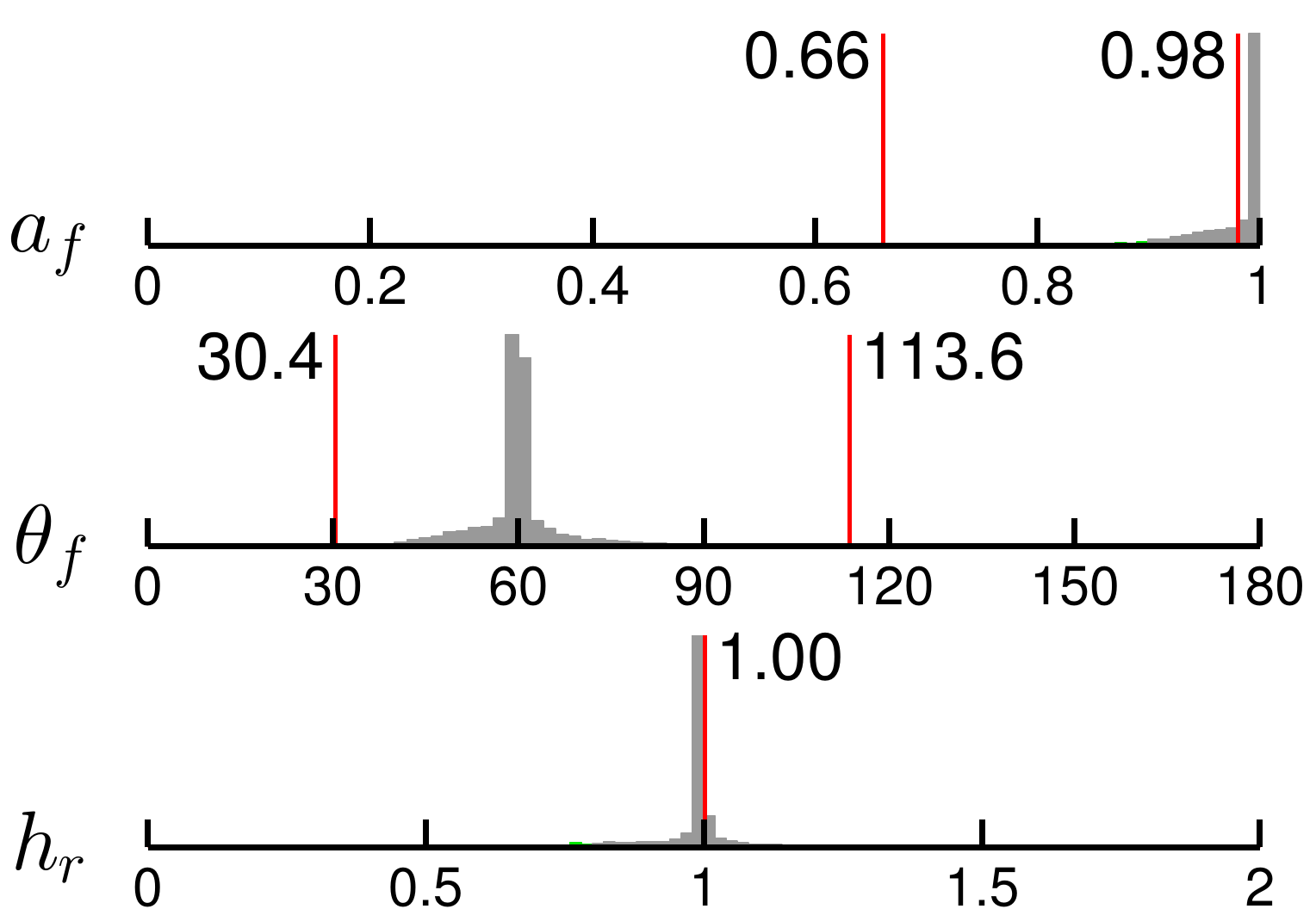} &
\includegraphics[width=5.25cm]{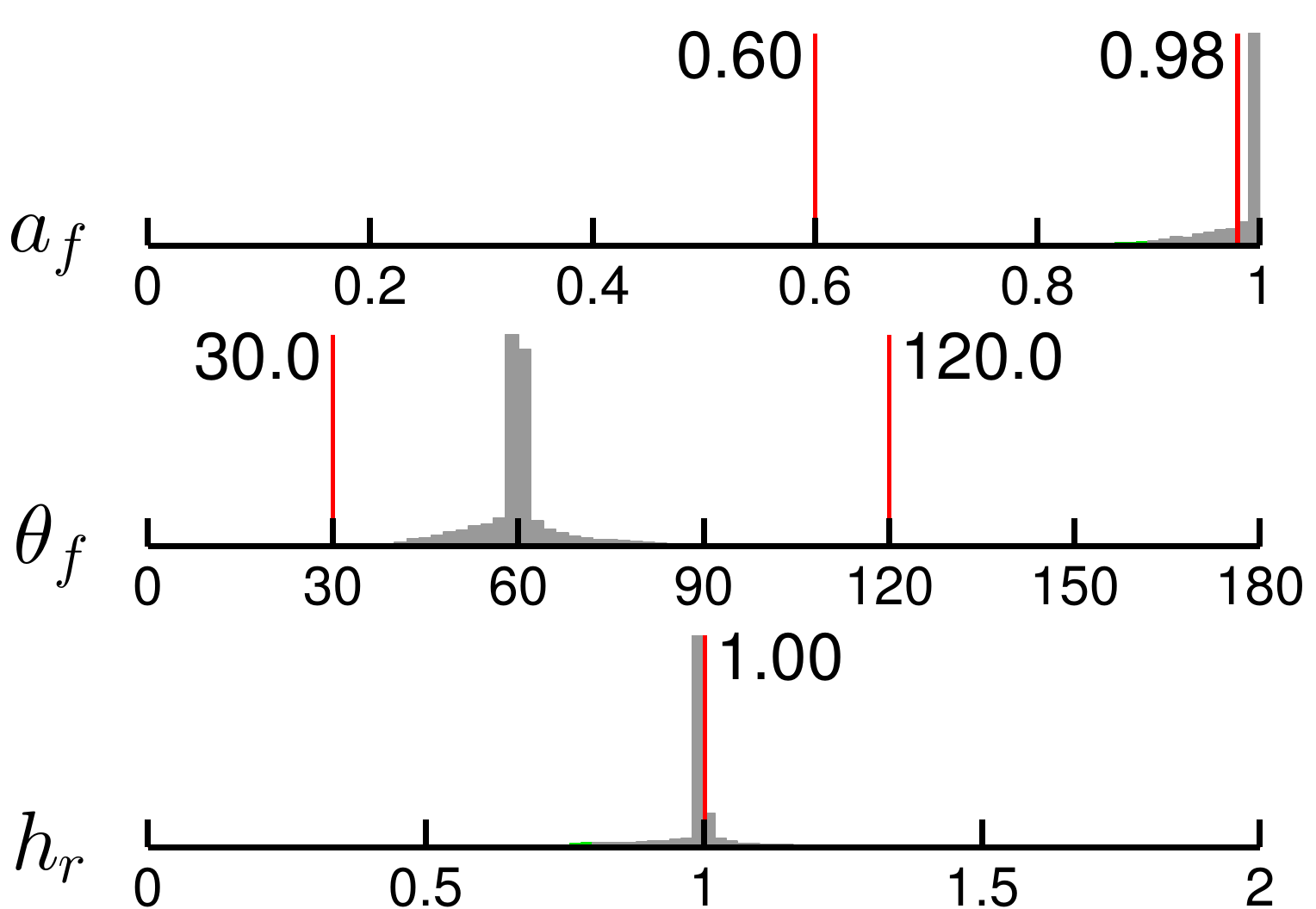}

\\ \hline

\parbox[b][1em][b]{.300\textwidth}{\center (\textsc{jgsw--dr}): $|\TS|=91,706$, $\mathrm{t}=4.20\mathrm{s}$} &
\parbox[b][1em][b]{.300\textwidth}{\center (\textsc{jgsw--dr}): $|\TS|= 8,412$, $\mathrm{t}=0.33\mathrm{s}$} &
\parbox[b][1em][b]{.300\textwidth}{\center (\textsc{jgsw--dr}): $|\TS|=13,644$, $\mathrm{t}=0.50\mathrm{s}$} \\

\begin{minipage}[c]{.300\textwidth}
\begin{center}
\includegraphics[width=5.50cm]{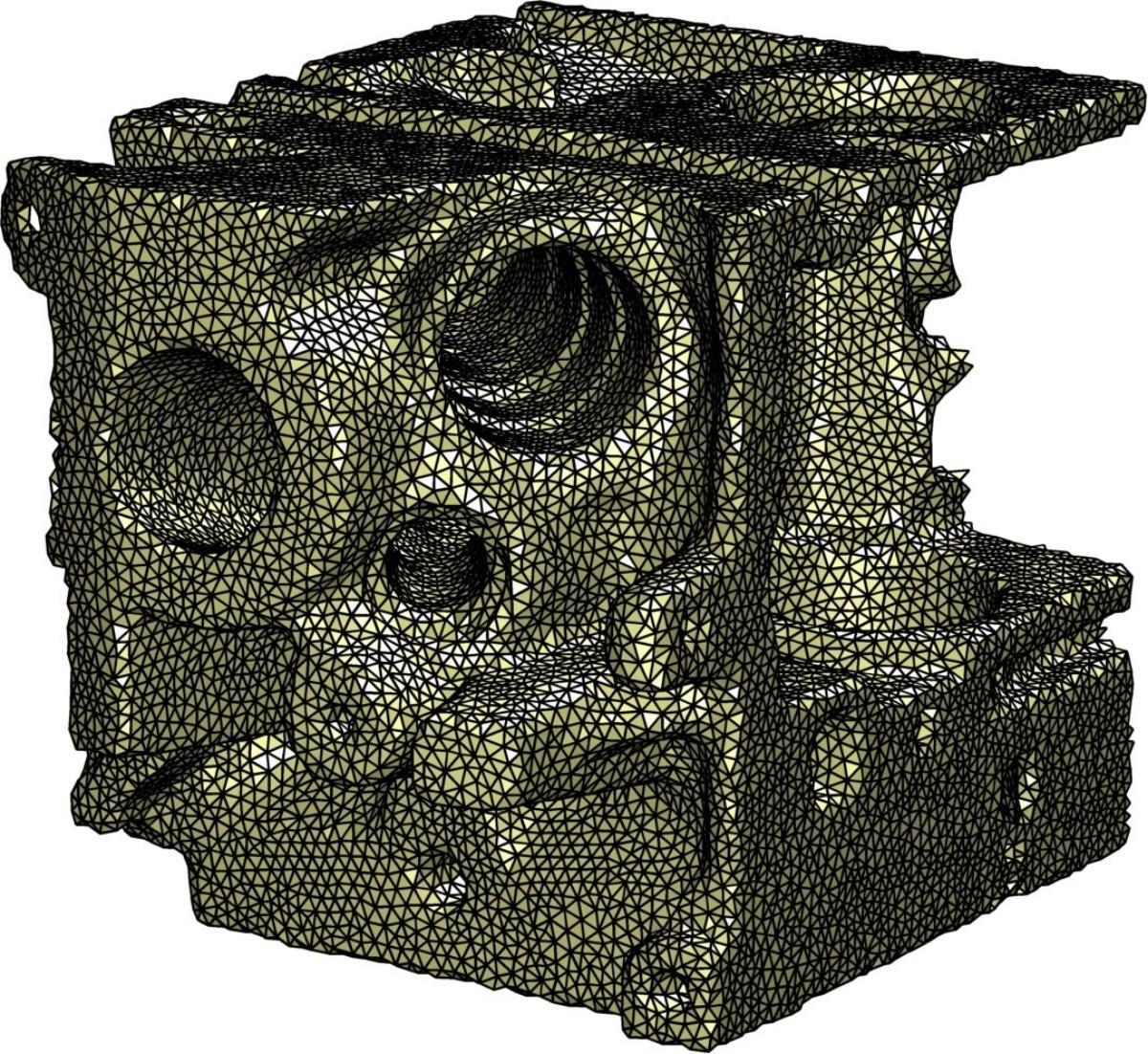} 
\end{center}
\end{minipage} &
\begin{minipage}[c]{.300\textwidth}
\begin{center}
\includegraphics[height=5.50cm]{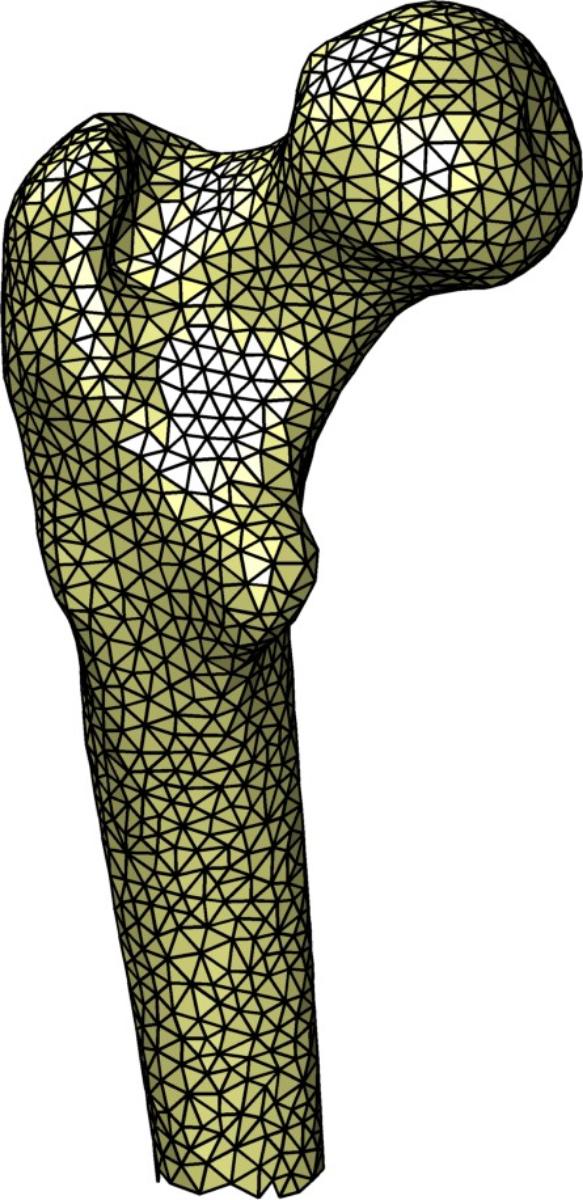} 
\end{center}
\end{minipage} &
\begin{minipage}[c]{.300\textwidth}
\begin{center}
\includegraphics[width=5.50cm]{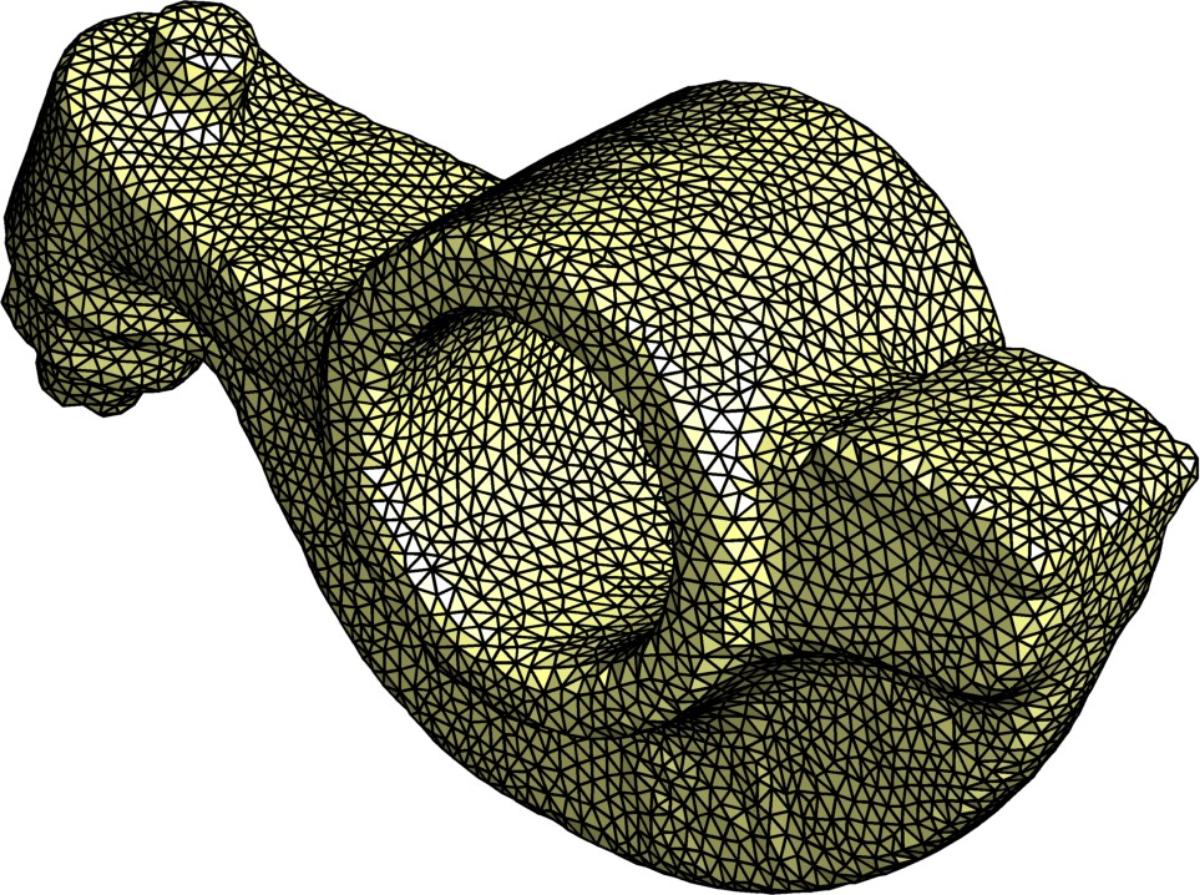} 
\end{center}
\end{minipage}
\rule{0pt}{\tablestrutsize}\rule[-\tablestrutsize]{0pt}{\tablestrutsize} \\

\includegraphics[width=5.25cm]{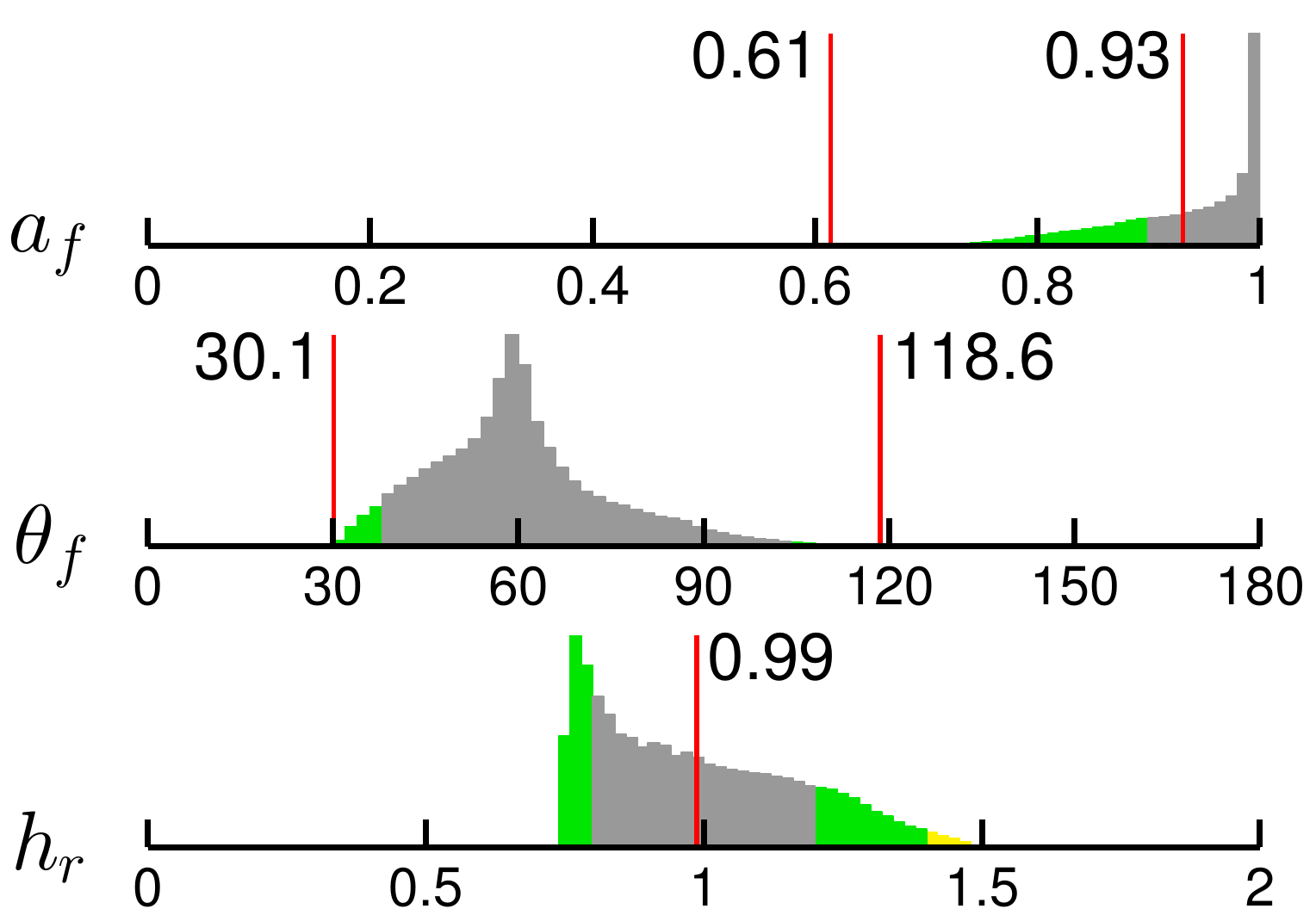} &
\includegraphics[width=5.25cm]{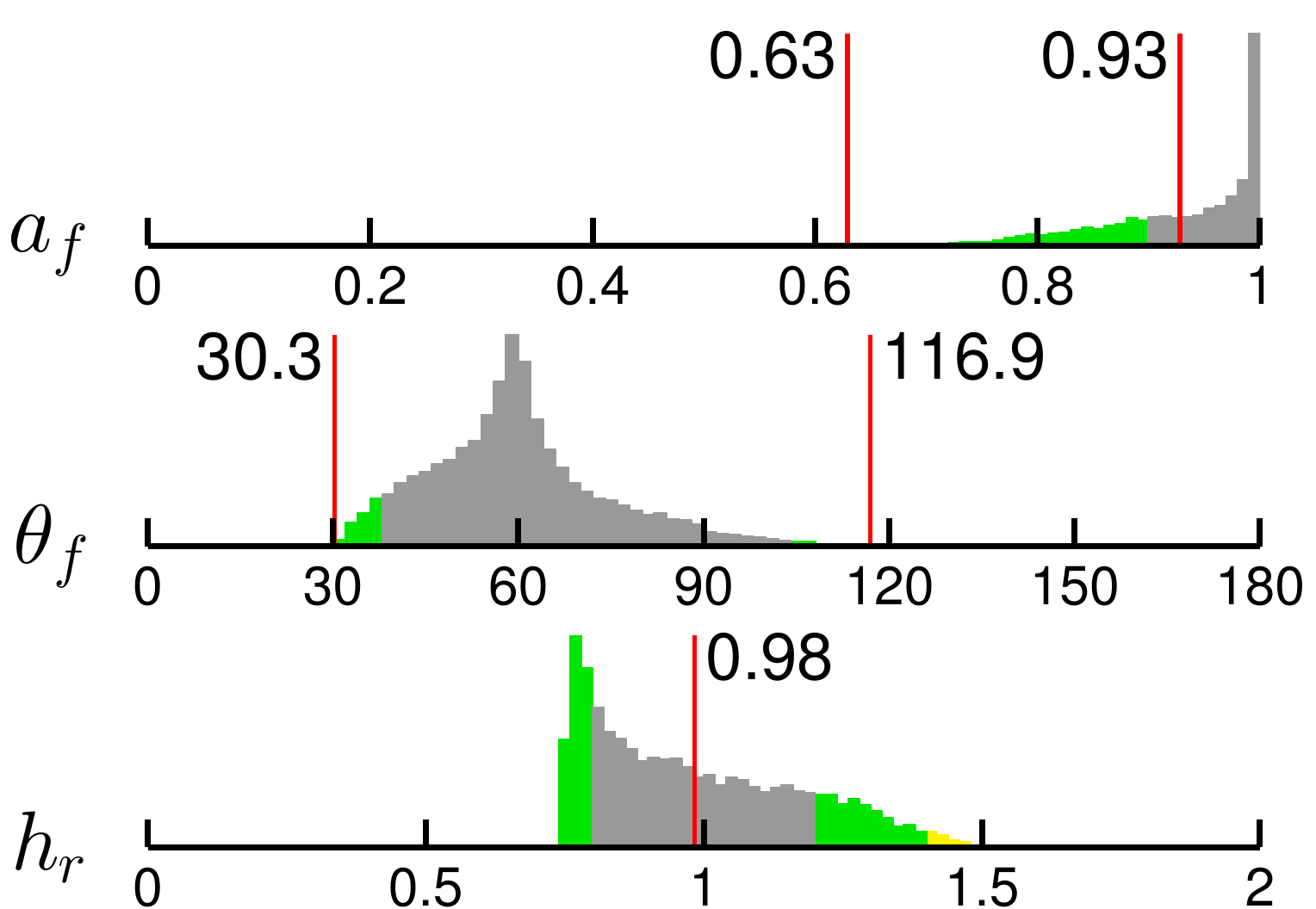} &
\includegraphics[width=5.25cm]{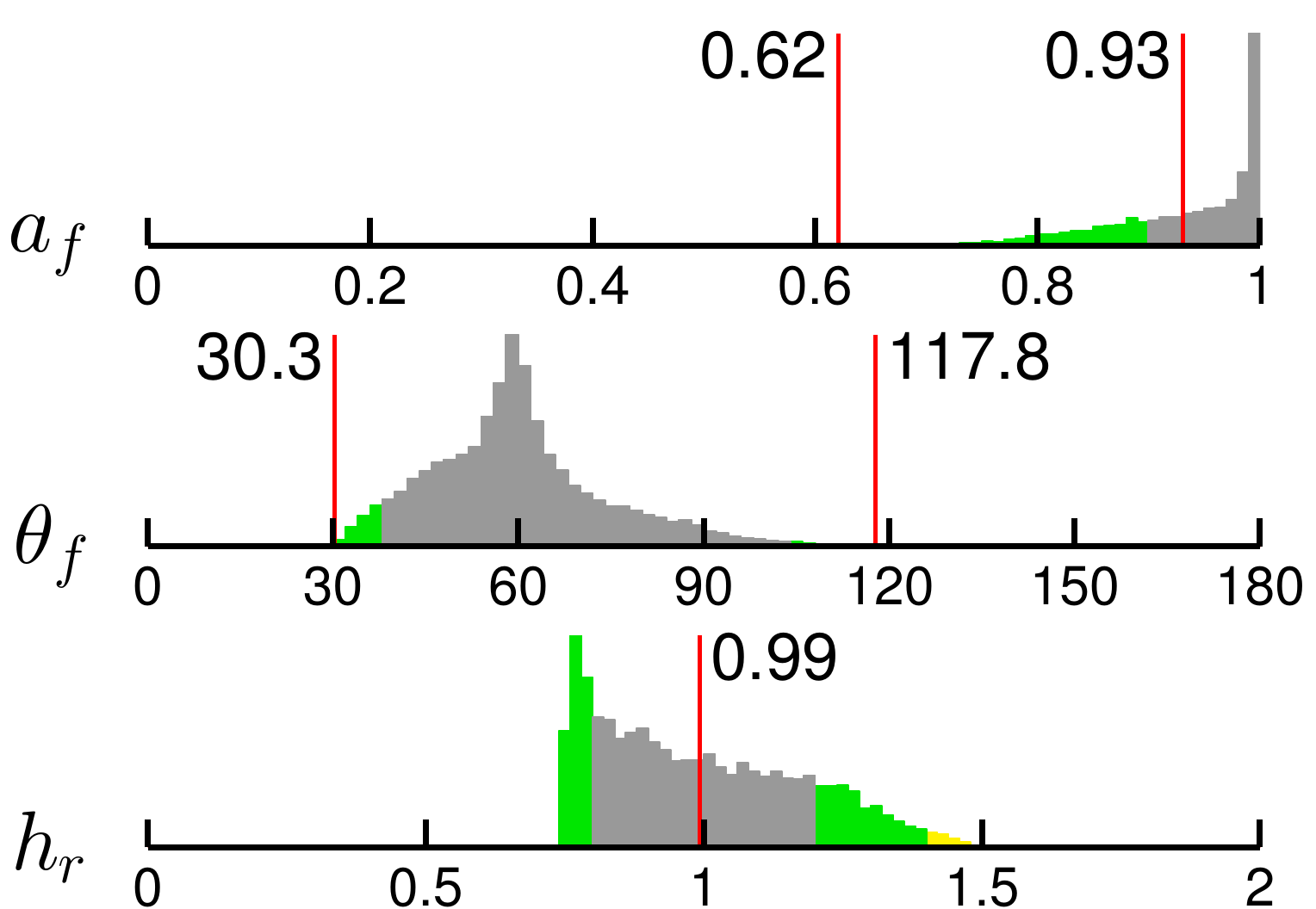}

\end{tabu}}
\end{figure*}

\subsection{Mesh Quality Metrics}

Before moving on to a detailed analysis of results, a number of mesh quality metrics are first introduced.

\begin{definition}
Given a $2$-simplex $f$, the \textit{plane-angle} between any two adjacent edges is given by:
\begin{gather}
\cos(\theta_{i,j}) = \frac{\mathbf{e}_{i}\cdot\mathbf{e}_{j}}{\|\mathbf{e}_{i}\| \|\mathbf{e}_{j}\|}
\end{gather}
for all adjacent pairs $\mathbf{e}_{i}$,$\mathbf{e}_{j}\in f$.
\end{definition}

Clearly, high-quality surface triangulations include a majority of element plane-angles clustered about $60^\circ$.

\begin{definition}
Given a $2$-simplex $f$, its \textit{area-length} ratio, $a\left(f\right)$, is given by:
\begin{gather}
a\left(f\right) = \frac{4\sqrt{3}}{3}\frac{A_{f}}{\|\mathbf{e}_{\text{rms}}\|^{2}},
\end{gather} 
where $A_{f}$ is the signed-area of $f$ and $\|\mathbf{e}_{\text{rms}}\|$ is the root-mean-square edge length.
\end{definition}

The area-length ratio is a robust, scalar measure of element shape-quality, and is typically normalised to achieve a score of $+1$ for \textit{ideal} elements. The area-length ratio decreases with increasing distortion, achieving a score of ${+0}$ for degenerate elements and ${-1}$ for fully inverted elements. 

\begin{definition}
Given a $2$-simplex $f$, the \textit{relative-length} of its edges is given by:
\begin{gather}
h_{r}(e_{i}) = \frac{\|\mathbf{e}_{i}\|}{\bar{h}\left(\mathbf{x}_{e}\right)}
\end{gather}
where $\|\mathbf{e}_{i}\|$ is the length of the $i$-th edge and $\bar{h}(\mathbf{x}_{e})$ is the mesh size function sampled at the edge midpoint.
\end{definition}

The \textit{relative-length} distribution $\reledge$ is a measure of size-function conformance, expressing the ratio of actual-to-desired edge length for all edges $e\in\DelS{X}$. A value of $\reledge=1$ indicates perfect conformance.

\begin{definition}
Given a distribution of element-wise plane angles $\theta(f_{i})$, the Mean Absolute Deviation $\MAD$ is given by:
\begin{gather}
\MAD = \frac{1}{n}\sum_{j=1}^{n}|\bar{\theta}(f_{i}) - \theta_{j}|
\end{gather}
where $\bar{\theta}(f_{i})$ denotes the mean value of the underlying distribution $\theta(f_{i})$.
\end{definition}

$\MAD$ is a measure of the \textit{spread} of the distribution of element angles, with smaller values indicative of distributions that are more tightly clustered about the mean value. In this study, $\MAD$ is used as a measure of \textit{average} element quality, with smaller values associated with higher quality meshes.

\subsection{A Comparison of \textsc{jgsw--fd} and \textsc{cgal--dr}}
\label{section_fdvscgal}

The performance of the \textsc{jgsw--fd} and \textsc{cgal--dr} algorithms was assessed using a set of three test-cases. The \textsc{elephant}, \textsc{hip} and \textsc{bunny} geometries were meshed using uniform mesh-size constraints, with a small constant value, $\bar{h}\left(\mathbf{x}\right)=\alpha$, imposed globally, where $\alpha$ was chosen to be approximately 2\% of the mean bounding-box dimension associated with each model. Due to differences in the way that mesh-size constraints are interpreted by the two meshing packages, the mesh-size value selected for the \textsc{cgal--dr} algorithm was reduced by a factor of $\nicefrac{4}{3}$. This reduction accounts for the fact that in \textsc{cgal--dr}, mesh-size constraints are imposed with respect to the element circumradii, whereas the \textsc{jgsw--fd} algorithm treats the mesh-size function in terms of element edge length. Both algorithms were found to produce meshes incorporating consistent mean edge-lengths based on these modified mesh-size values.

Overall, the results shown in Figure~\ref{figure_surf_cgal_uniform} demonstrate that both the \textsc{jgsw--fd} and \textsc{cgal--dr} algorithms generate high-quality surface meshes for all test cases -- satisfying the required element shape-quality, mesh-size and surface discretisation error thresholds. Focusing on the distribution of element shape-quality explicitly, it is clear that the \textsc{jgsw--fd} algorithm achieves significantly better results, producing meshes with higher mean area-length ratios in all cases. In fact, with mean area-length ratios of $\bar{a}(f)\simeq 0.98$ (compared to $\bar{a}(f)\simeq 0.92$ for \textsc{cgal--dr}), it is clear that the vast majority of elements generated by the \textsc{jgsw--fd} algorithm are of near-ideal shape. This improvement in mean mesh-quality is also evident in the distribution of element plane-angles, with the \textsc{jgsw--fd} approach generating histograms clustered more tightly about $60^\circ$. These results are confirmed by an analysis of the mean-absolute-deviation in angle, $\MAD$ -- included in Table~\ref{table_mad}, showing that the \textit{spread} of the plane-angle distribution is reduced by a factor of approximately three through use of the \textsc{jgsw--fd} algorithm. Finally, analysis of the relative-length distributions shows that meshes generated using the \textsc{jgsw--fd} algorithm tightly conform to the imposed mesh size constraints, with a tight clustering of $\reledge$ about $1$. In contrast, output generated using the \textsc{cgal--dr} scheme is seen to incorporate significant sizing `error', typified by broad distributions of relative-length, straddling $\reledge\simeq 1$.

In terms of computational expense, meshes generated by the \textsc{jgsw--fd} and \textsc{cgal--dr} algorithms are shown to be very similar in size. Additionally, despite the additional work required to support the off-centre refinement scheme, the \textsc{jgsw--fd} algorithm appears to be slightly more efficient than \textsc{cgal--dr}, though the difference in run-time performance is not significant.

\subsection{A Comparison of \textsc{jgsw--fd} and \textsc{jgsw--dr}}
\label{section_fdvsdr}

Consistent with the analysis presented in Section~\ref{section_fdvscgal}, the performance of the \textsc{jgsw--fd} and \textsc{jgsw--dr} algorithms was next assessed. While \textsc{cgal--dr} and \textsc{jgsw--dr} are representative of the same class of meshing \textit{algorithms}, the benchmarks included in this section allow for an unbiased comparison of the Frontal-Delaunay and Delaunay-refinement approaches, with both the \textsc{jgsw--fd} and \textsc{jgsw--dr} algorithms benefiting from the same set of implementation design and optimisation decisions. 

Consistent with the previous analysis, the performance of the \textsc{jgsw--fd} and \textsc{jgsw--dr} algorithms was assessed using a set of three test-cases. The \textsc{ifp2}, \textsc{femur} and \textsc{rocker} geometries shown in Figure~\ref{figure_surf_fdvsdr_uniform} were meshed using uniform mesh-size constraints, with a small constant value, $\bar{h}\left(\mathbf{x}\right)=\alpha$, imposed globally, where $\alpha$ was chosen to be approximately 2\% of the mean bounding-box dimension associated with each model. An analysis of the distributions of element area-length, plane-angle and relative edge-length measures confirms much the behaviour noted in Section~\ref{section_fdvscgal} -- in terms of overall mesh-quality and size-conformance, the \textsc{jgsw--fd} algorithm is a clear winner.

An analysis of computational expense leads to different conclusions, with the \textsc{jgsw--dr} algorithm seen to produce meshes that are slightly and yet consistently larger than those generated using \textsc{jgsw--fd}. The \textsc{jgsw--dr} scheme also appears to be somewhat faster than the \textsc{jgsw--fd} algorithm, typically requiring only approximately 70\% of the run-time imposed by \textsc{jgsw--fd}. Such behaviour is not unexpected, with the additional work associated with the off-centre point-placement scheme employed in \textsc{jgsw--fd} expected to result in lower overall performance. The search for additional efficiency in the \textsc{jgsw--fd} algorithm is an interesting topic for future research.

\begin{figure*}[p]
\centering
\caption{Size-driven meshing for the \textsc{bimba} problem, showing output for the \textsc{jgsw--fd} (upper) and \textsc{jgsw--dr} (lower) algorithms. Meshes were built using: (i) graded, $g$-Lipschitz smooth mesh size functions $\bar{h}\left(\mathbf{x}\right)$, with $g_{i}\in\left\{\nicefrac{3}{10},\nicefrac{2}{10},\nicefrac{1}{10}\right\}$ from left to right, (ii) tight constraints on element shape-quality, such that $\theta_{\text{min}}\geq 30^\circ$, and (iii) non-uniform surface discretisation functions $\bar{\epsilon} = \left(\nicefrac{1}{4}\right)\, \bar{h}\left(\mathbf{x}\right)$. Element counts $|\TS|$, and algorithm run-times $(\mathrm{t})$ are included for each case. Normalised histograms of element area-length ratio $a\left(f\right)$, plane-angle $\theta\left(f\right)$ and relative-length $\reledge$ are illustrated.}

\label{figure_surf_fdvsdr_bimba}

{
\footnotesize
\tabulinesep=1pt

\medskip

\begin{tabu} {c|c|c}

\parbox[b][1em][b]{.300\textwidth}{\center (\textsc{jgsw--fd}): $|\TS|=41,404$, $\mathrm{t}=3.10\mathrm{s}$} &
\parbox[b][1em][b]{.300\textwidth}{\center (\textsc{jgsw--fd}): $|\TS|=48,860$, $\mathrm{t}=3.70\mathrm{s}$} &
\parbox[b][1em][b]{.300\textwidth}{\center (\textsc{jgsw--fd}): $|\TS|=69,422$, $\mathrm{t}=5.35\mathrm{s}$} \\

\begin{minipage}[c]{.300\textwidth}
\begin{center}
\includegraphics[width=5.50cm]{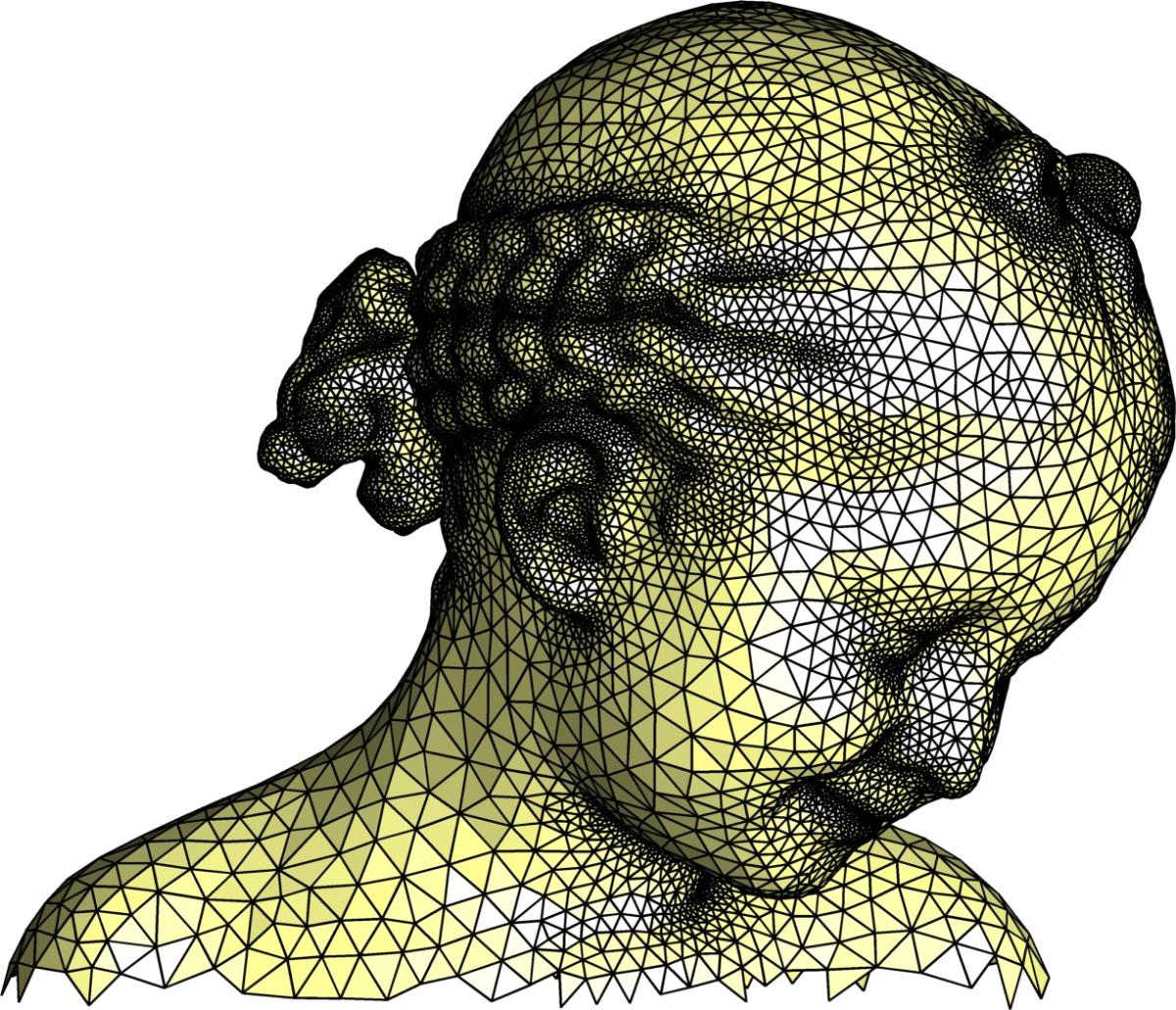} 
\end{center}
\end{minipage} &
\begin{minipage}[c]{.300\textwidth}
\begin{center}
\includegraphics[width=5.50cm]{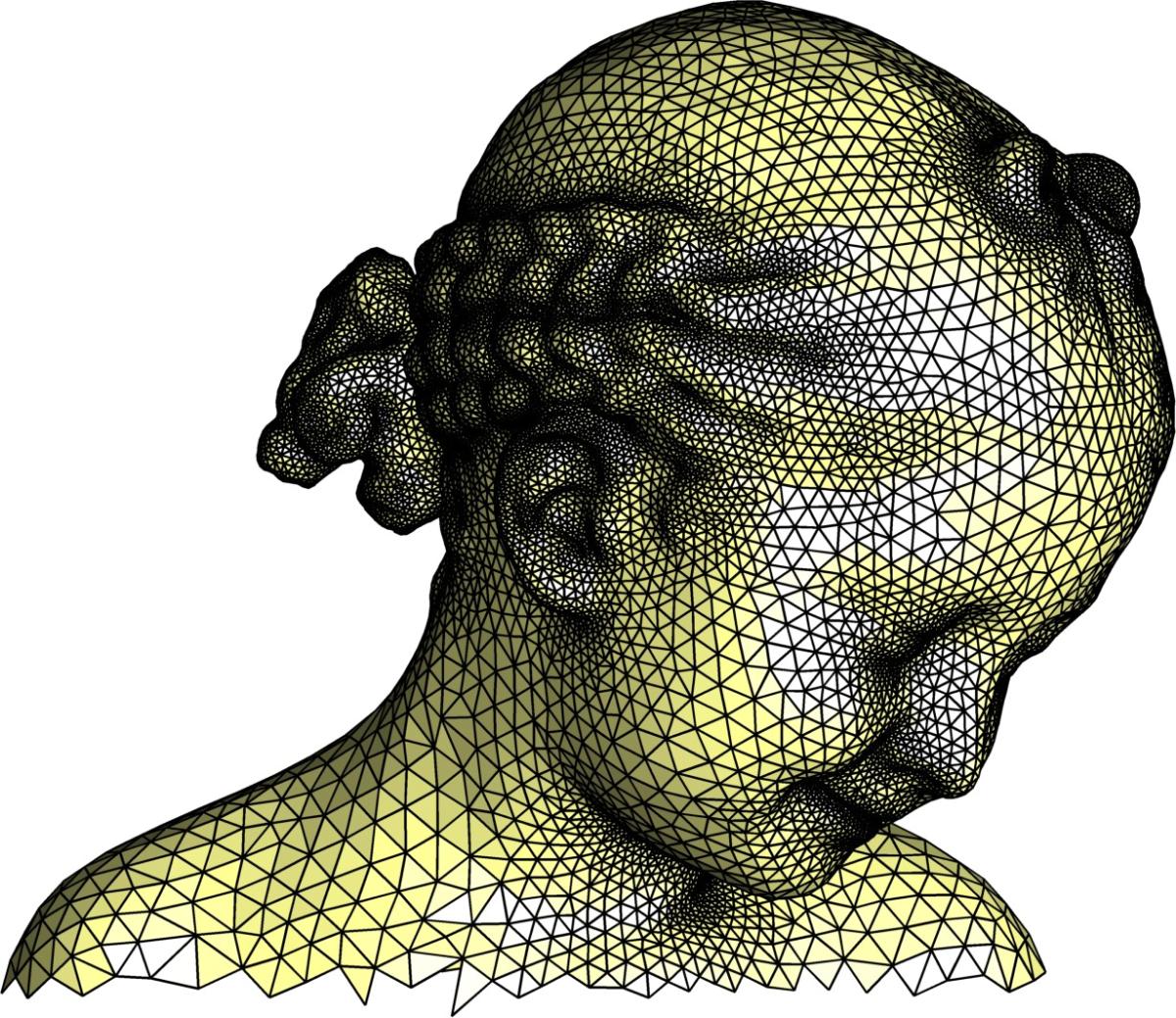}
\end{center}
\end{minipage} &
\begin{minipage}[c]{.300\textwidth}
\begin{center}
\includegraphics[width=5.50cm]{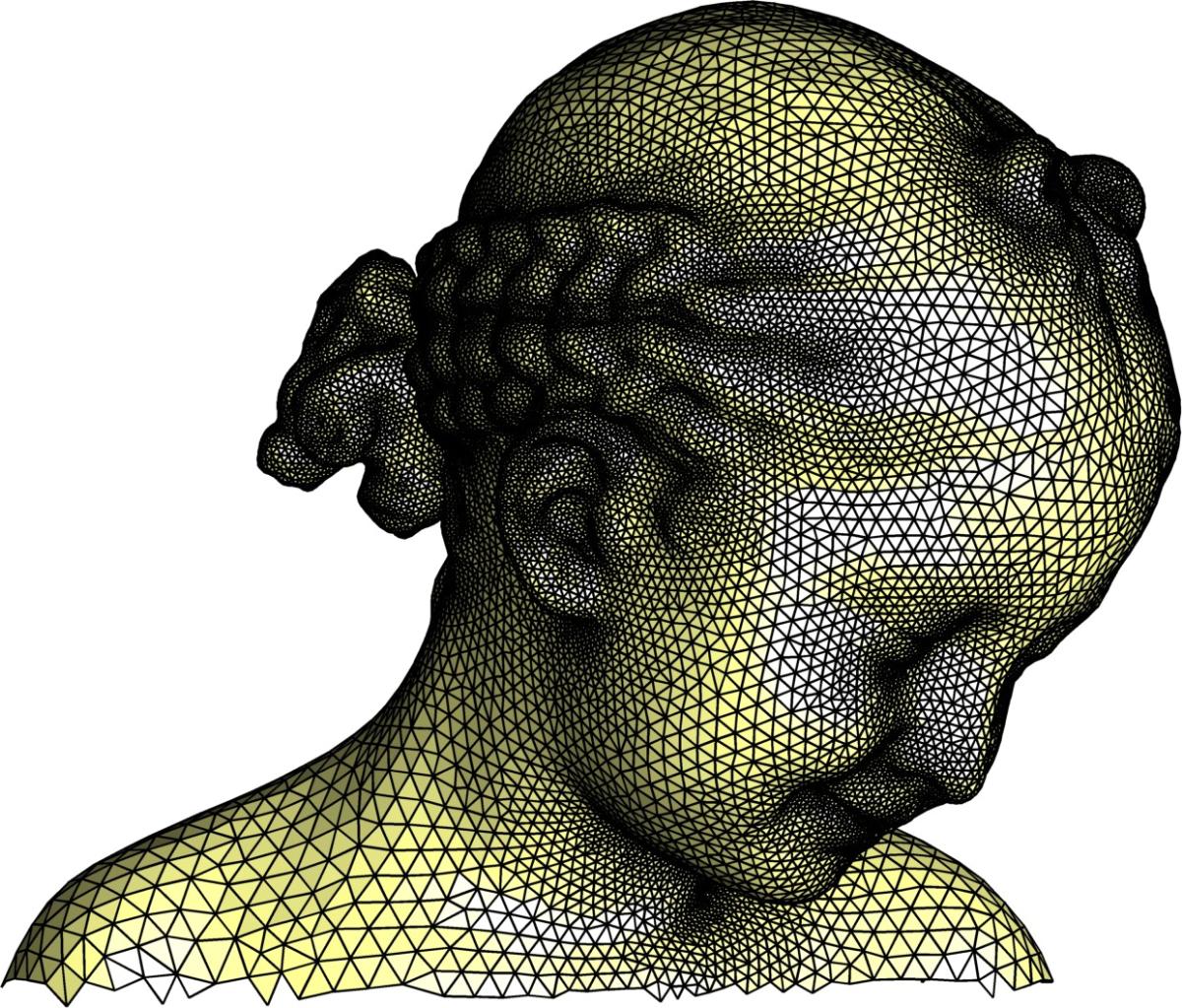}
\end{center}
\end{minipage}
\rule{0pt}{\tablestrutsize}\rule[-\tablestrutsize]{0pt}{\tablestrutsize} \\

\includegraphics[width=5.25cm]{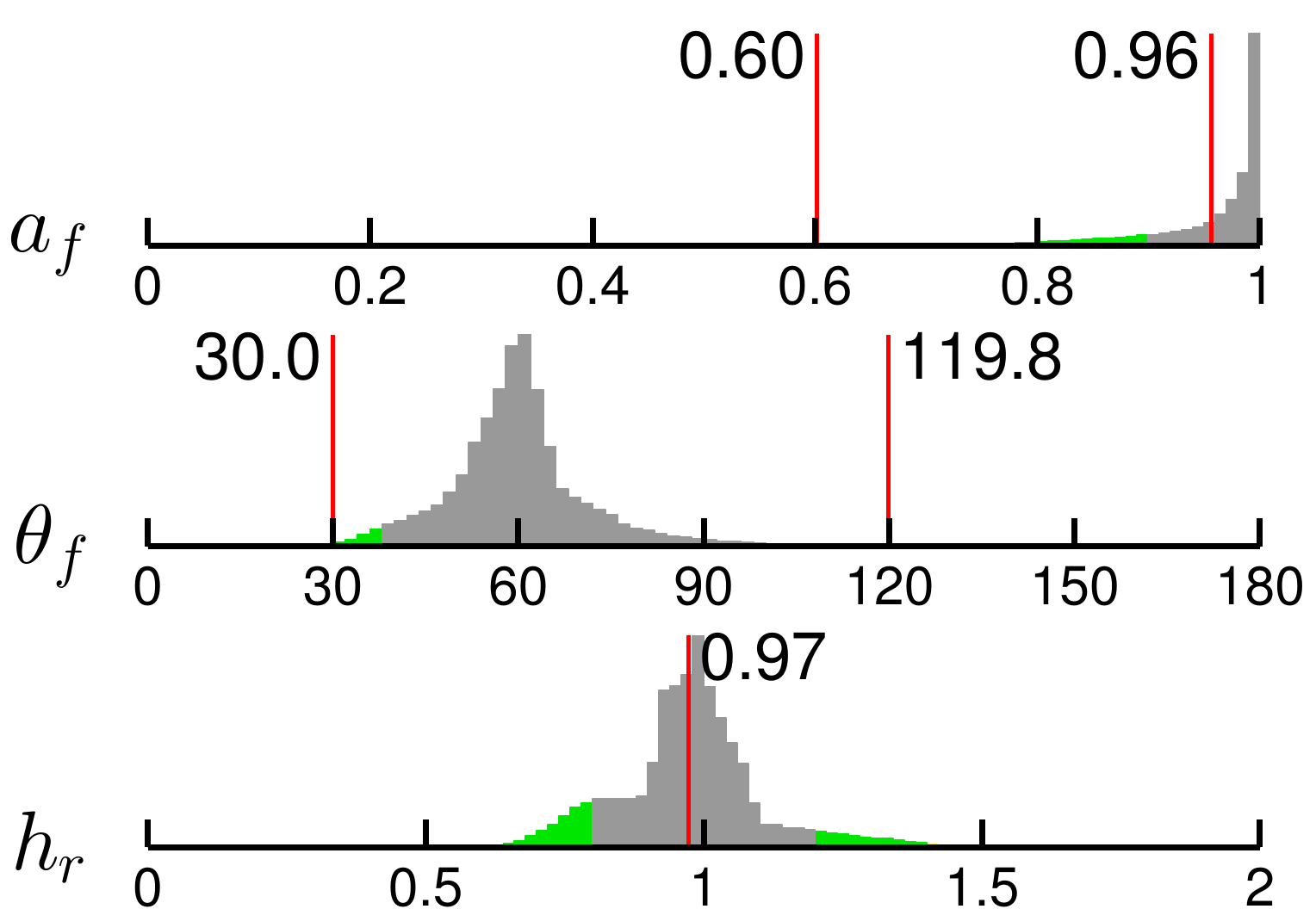} &
\includegraphics[width=5.25cm]{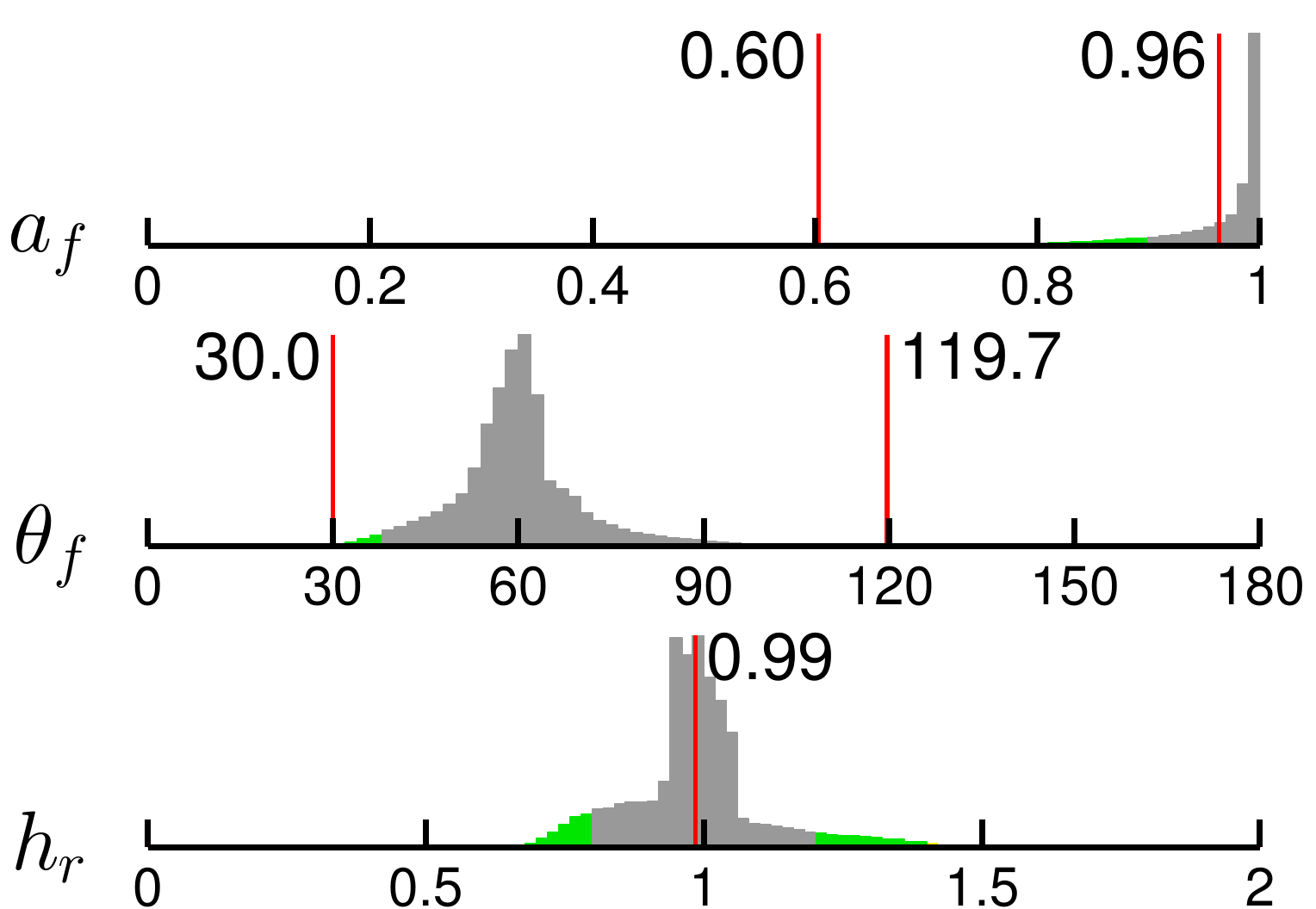} &
\includegraphics[width=5.25cm]{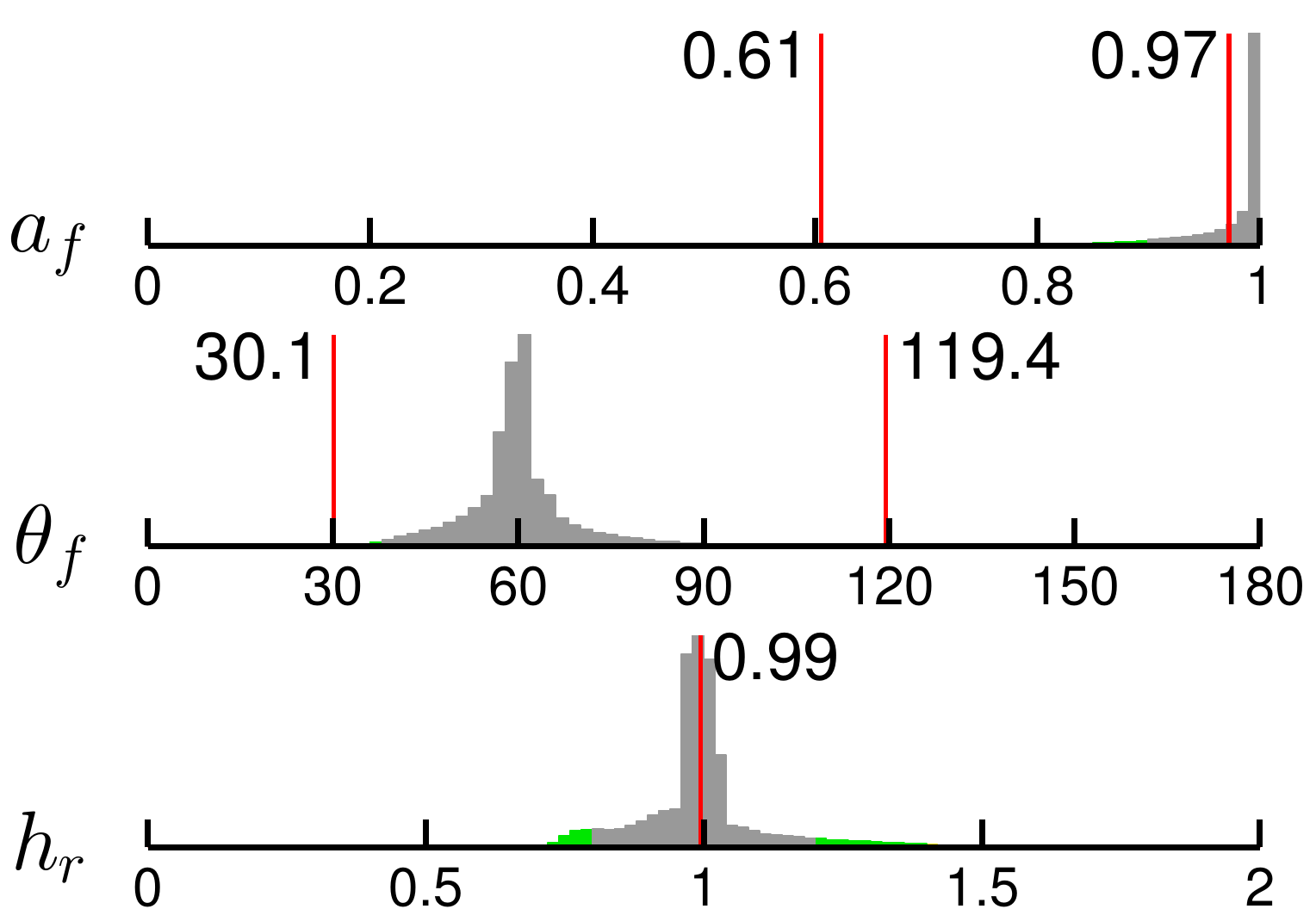}

\\ \hline

\parbox[b][1em][b]{.300\textwidth}{\center (\textsc{jgsw--dr}): $|\TS|=44,544$, $\mathrm{t}=1.97\mathrm{s}$} &
\parbox[b][1em][b]{.300\textwidth}{\center (\textsc{jgsw--dr}): $|\TS|=52,082$, $\mathrm{t}=2.31\mathrm{s}$} &
\parbox[b][1em][b]{.300\textwidth}{\center (\textsc{jgsw--dr}): $|\TS|=73,088$, $\mathrm{t}=3.25\mathrm{s}$} \\

\begin{minipage}[c]{.300\textwidth}
\begin{center}
\includegraphics[width=5.50cm]{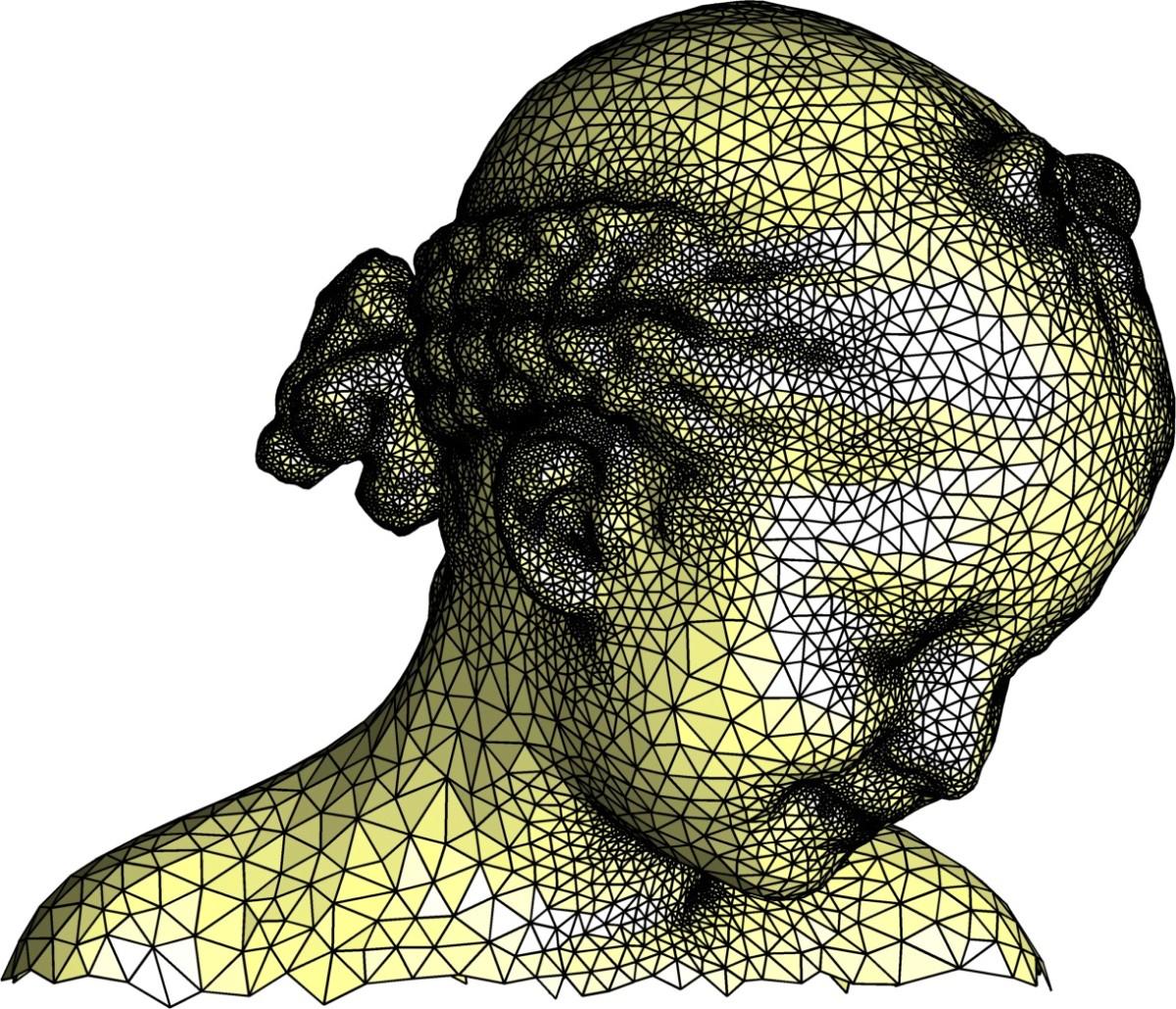} 
\end{center}
\end{minipage} &
\begin{minipage}[c]{.300\textwidth}
\begin{center}
\includegraphics[width=5.50cm]{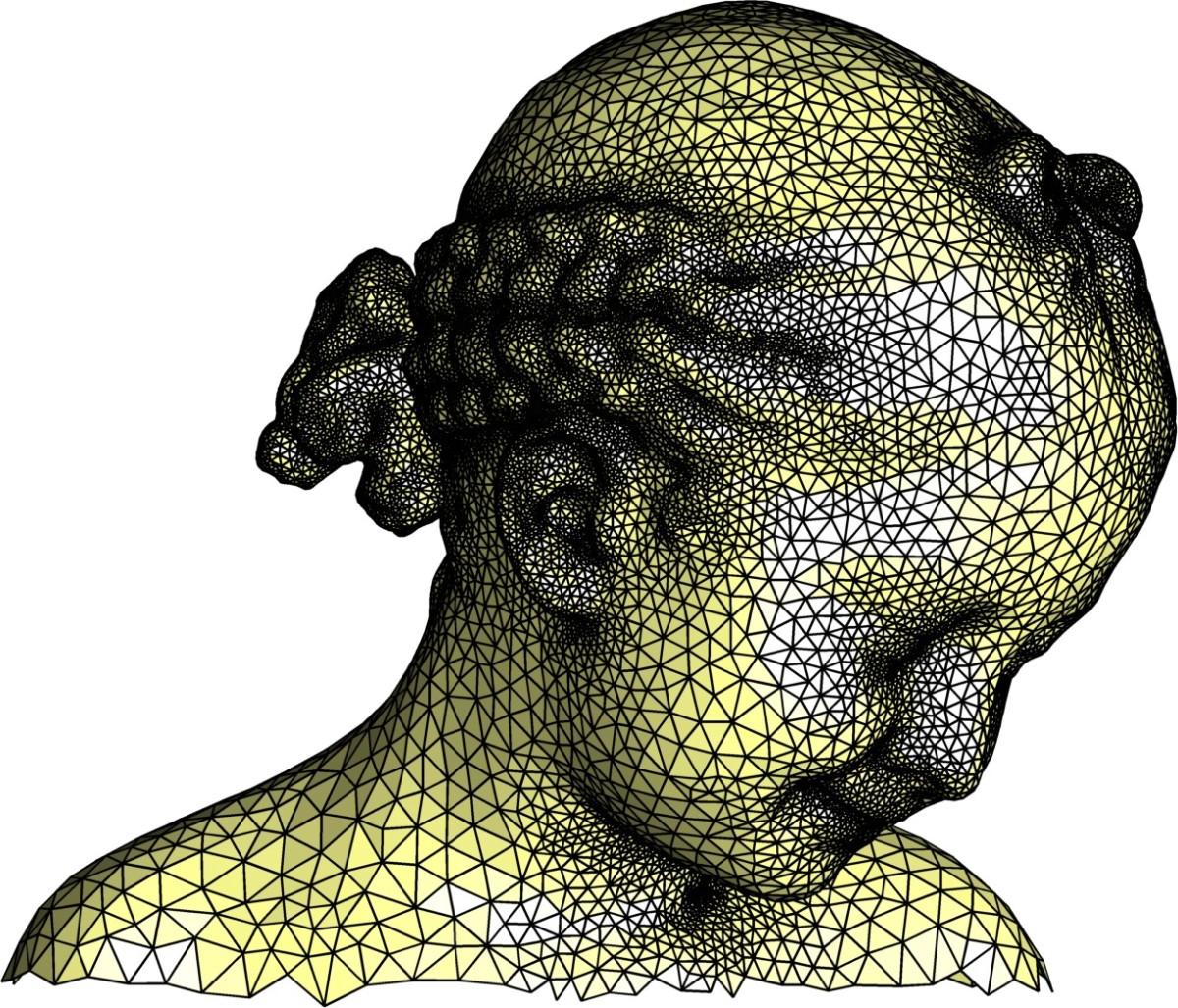}
\end{center}
\end{minipage} &
\begin{minipage}[c]{.300\textwidth}
\begin{center}
\includegraphics[width=5.50cm]{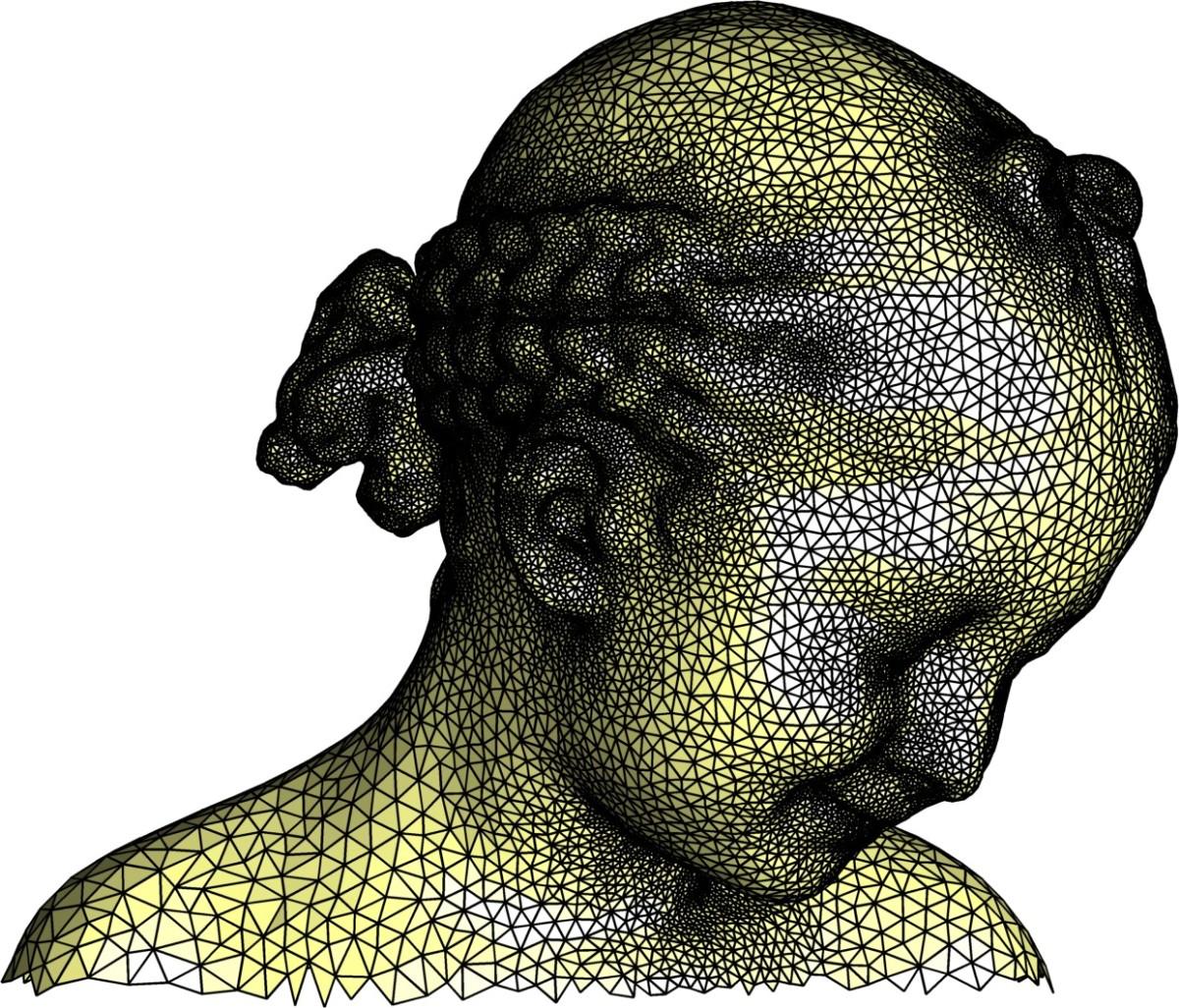}
\end{center}
\end{minipage}
\rule{0pt}{\tablestrutsize}\rule[-\tablestrutsize]{0pt}{\tablestrutsize} \\

\includegraphics[width=5.25cm]{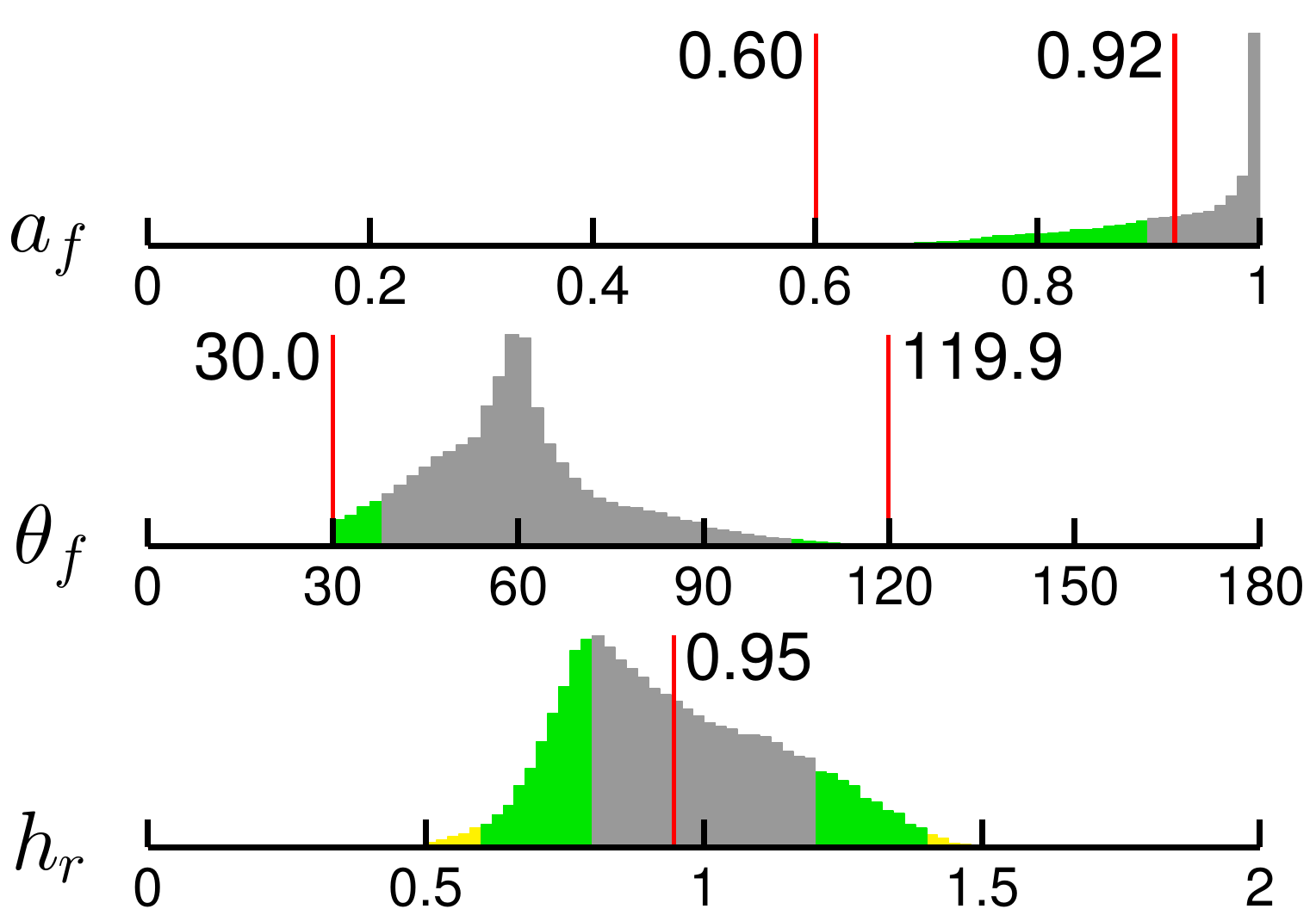} &
\includegraphics[width=5.25cm]{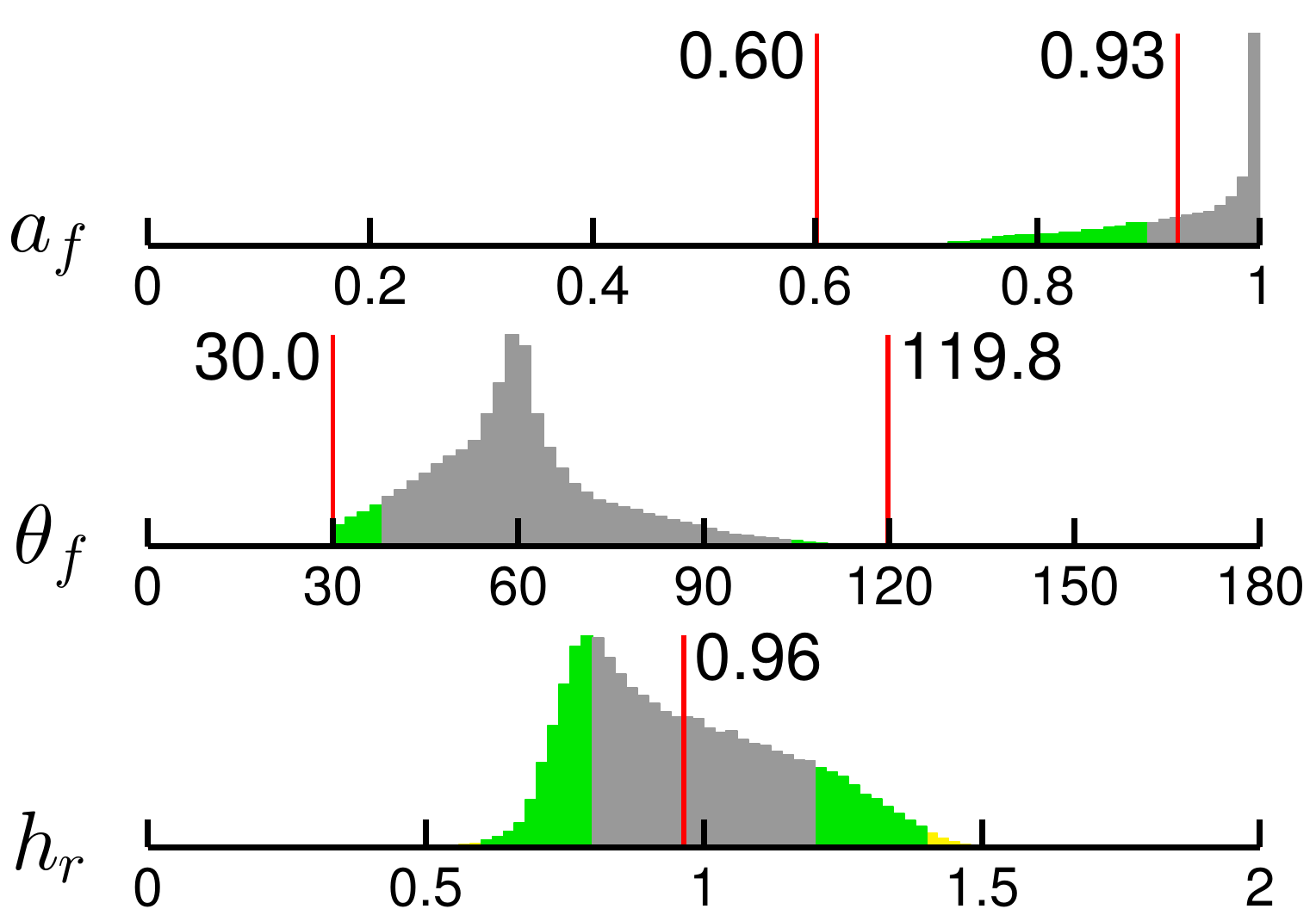} &
\includegraphics[width=5.25cm]{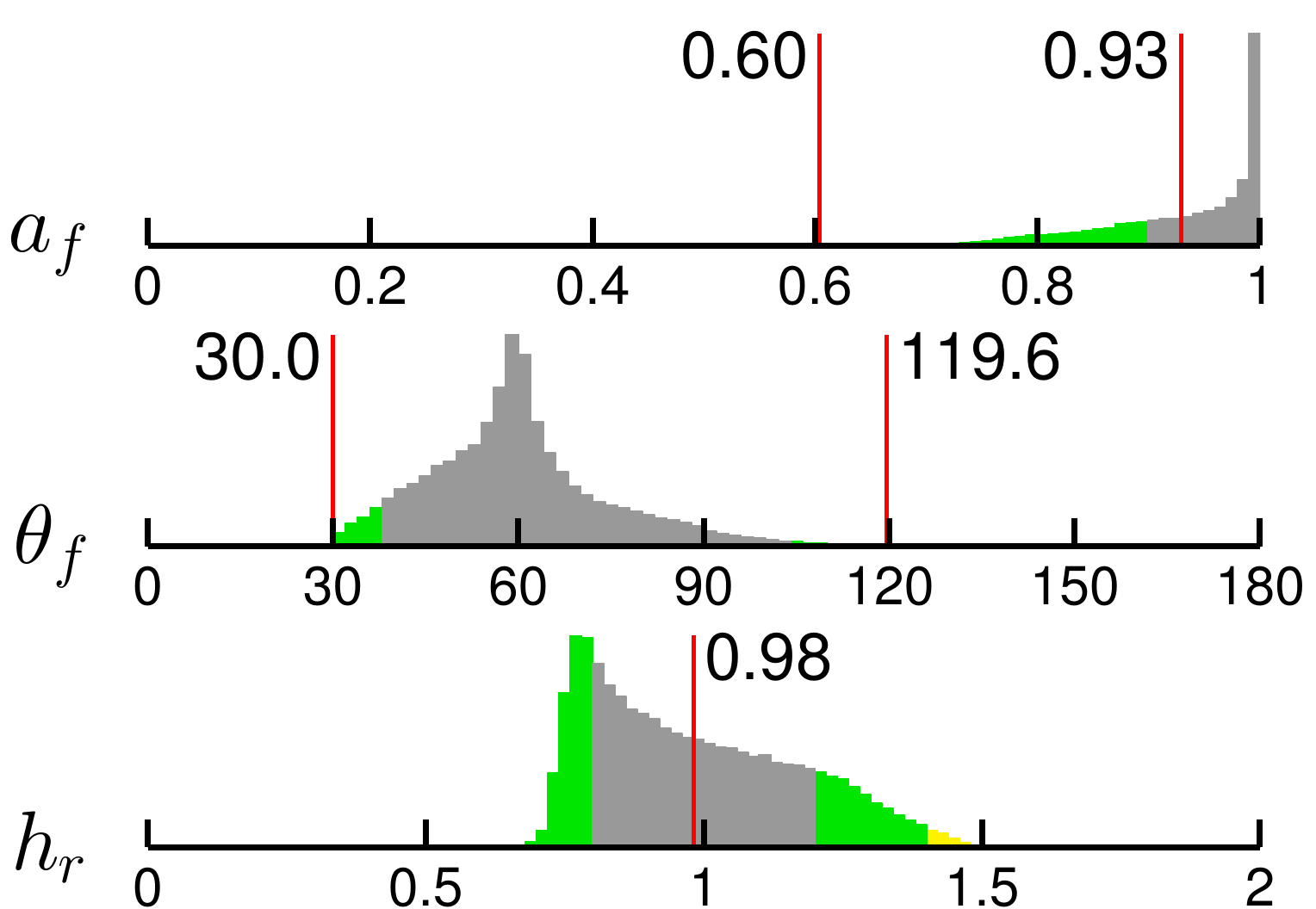}

\end{tabu}}
\end{figure*}

\begin{figure*}[p]
\centering
\caption{Size-driven meshing for the \textsc{kiss} problem, showing output for the \textsc{jgsw--fd} (upper) and \textsc{jgsw--dr} (lower) algorithms. Meshes were built using: (i) graded, $g$-Lipschitz smooth mesh size functions $\bar{h}\left(\mathbf{x}\right)$, with $g_{i}\in\left\{\nicefrac{3}{10},\nicefrac{2}{10},\nicefrac{1}{10}\right\}$ from left to right, (ii) tight constraints on element shape-quality, such that $\theta_{\text{min}}\geq 30^\circ$, and (iii) non-uniform surface discretisation functions $\bar{\epsilon} = \left(\nicefrac{1}{4}\right)\, \bar{h}\left(\mathbf{x}\right)$. Element counts $|\TS|$, and algorithm run-times $(\mathrm{t})$ are included for each case. Normalised histograms of element area-length ratio $a\left(f\right)$, plane-angle $\theta\left(f\right)$ and relative-length $\reledge$ are illustrated.}

\label{figure_surf_fdvsdr_kiss}

{
\footnotesize
\tabulinesep=1pt

\medskip

\begin{tabu} {c|c|c}

\parbox[b][1em][b]{.300\textwidth}{\center (\textsc{jgsw--fd}): $|\TS|=49,278$, $\mathrm{t}=3.20\mathrm{s}$} &
\parbox[b][1em][b]{.300\textwidth}{\center (\textsc{jgsw--fd}): $|\TS|=60,814$, $\mathrm{t}=3.98\mathrm{s}$} &
\parbox[b][1em][b]{.300\textwidth}{\center (\textsc{jgsw--fd}): $|\TS|=91,406$, $\mathrm{t}=6.10\mathrm{s}$} \\

\begin{minipage}[c]{.300\textwidth}
\begin{center}
\includegraphics[height=5.50cm]{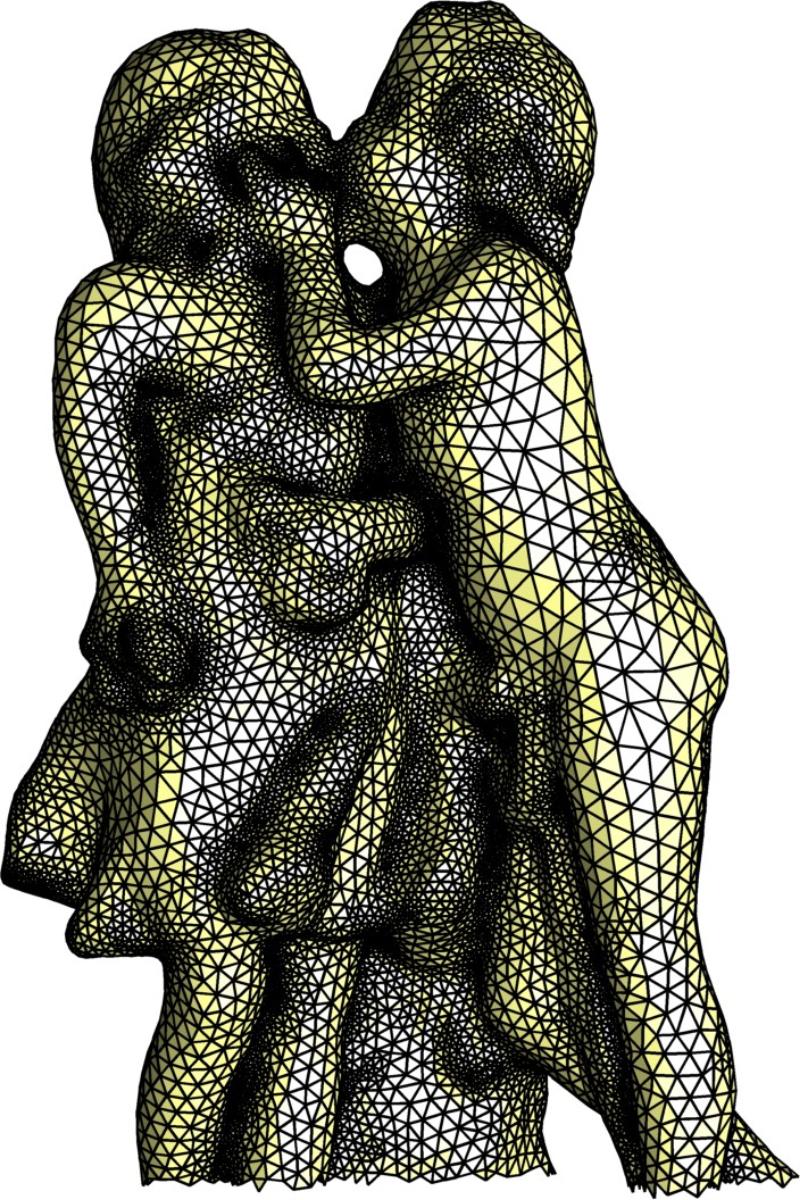} 
\end{center}
\end{minipage} &
\begin{minipage}[c]{.300\textwidth}
\begin{center}
\includegraphics[height=5.50cm]{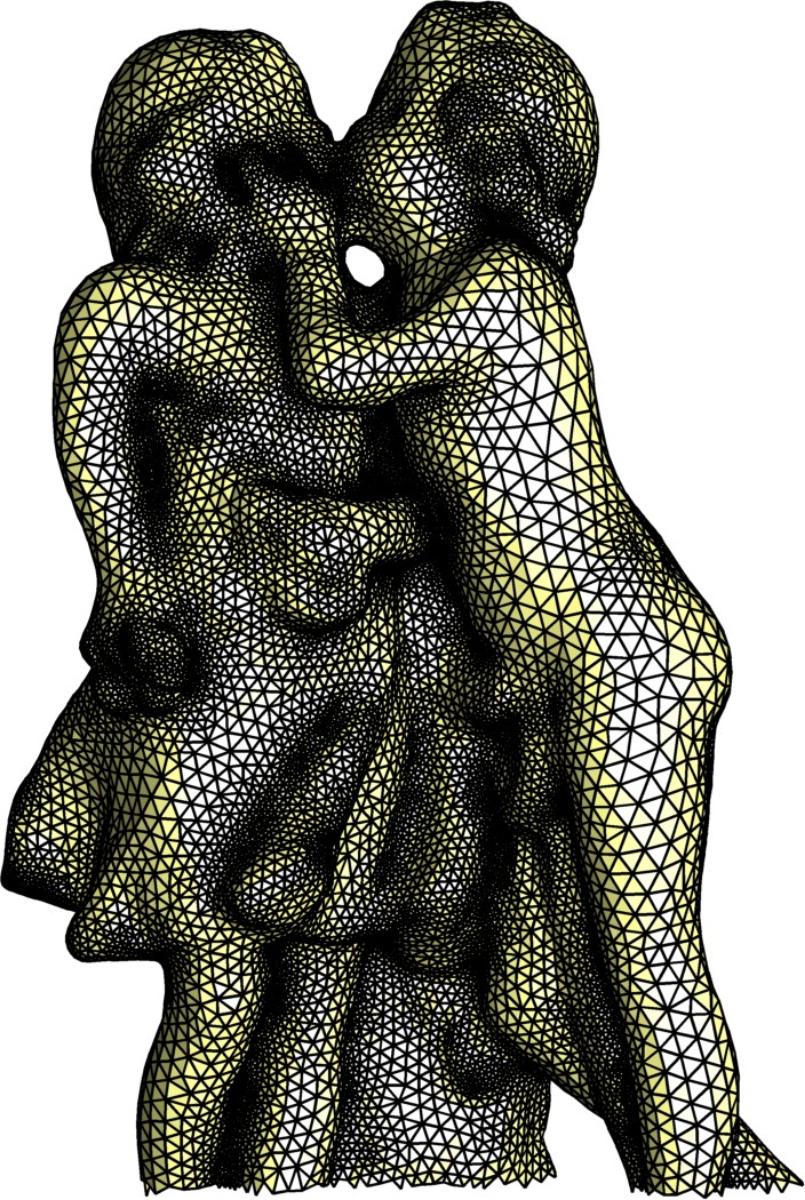}
\end{center}
\end{minipage} &
\begin{minipage}[c]{.300\textwidth}
\begin{center}
\includegraphics[height=5.50cm]{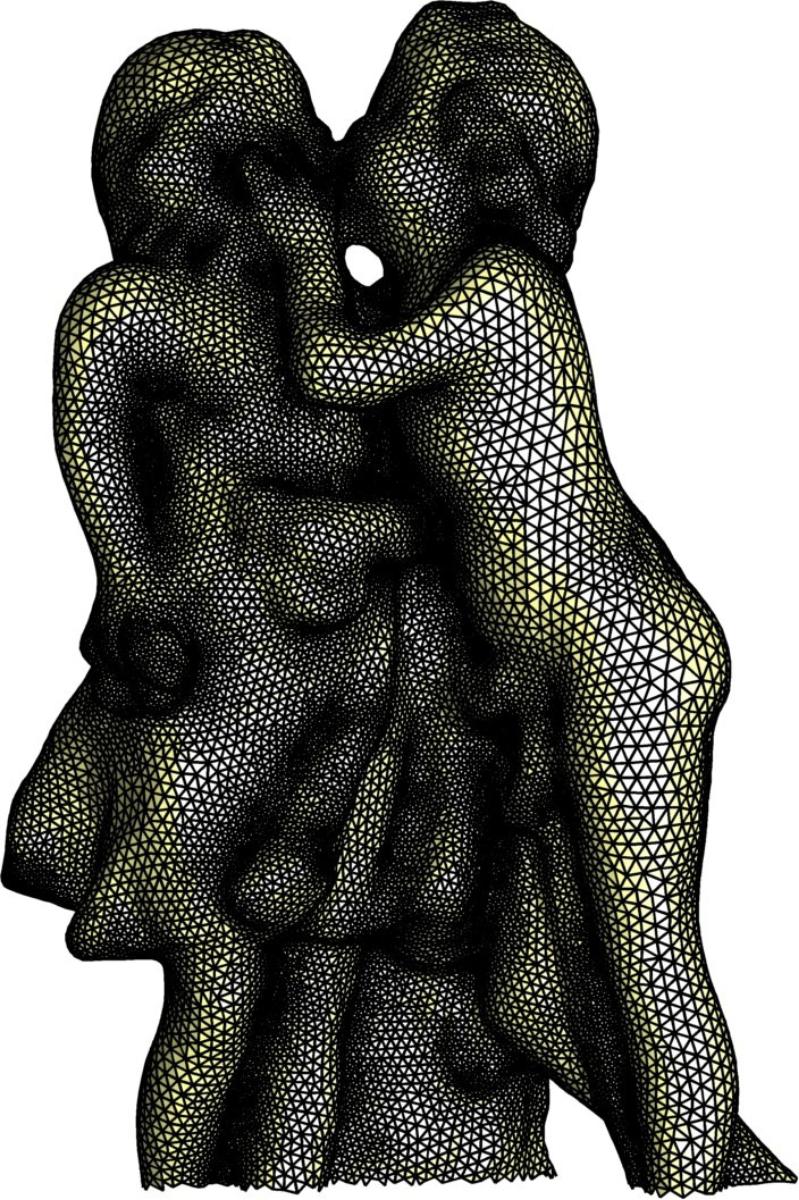}
\end{center}
\end{minipage}
\rule{0pt}{\tablestrutsize}\rule[-\tablestrutsize]{0pt}{\tablestrutsize} \\

\includegraphics[width=5.25cm]{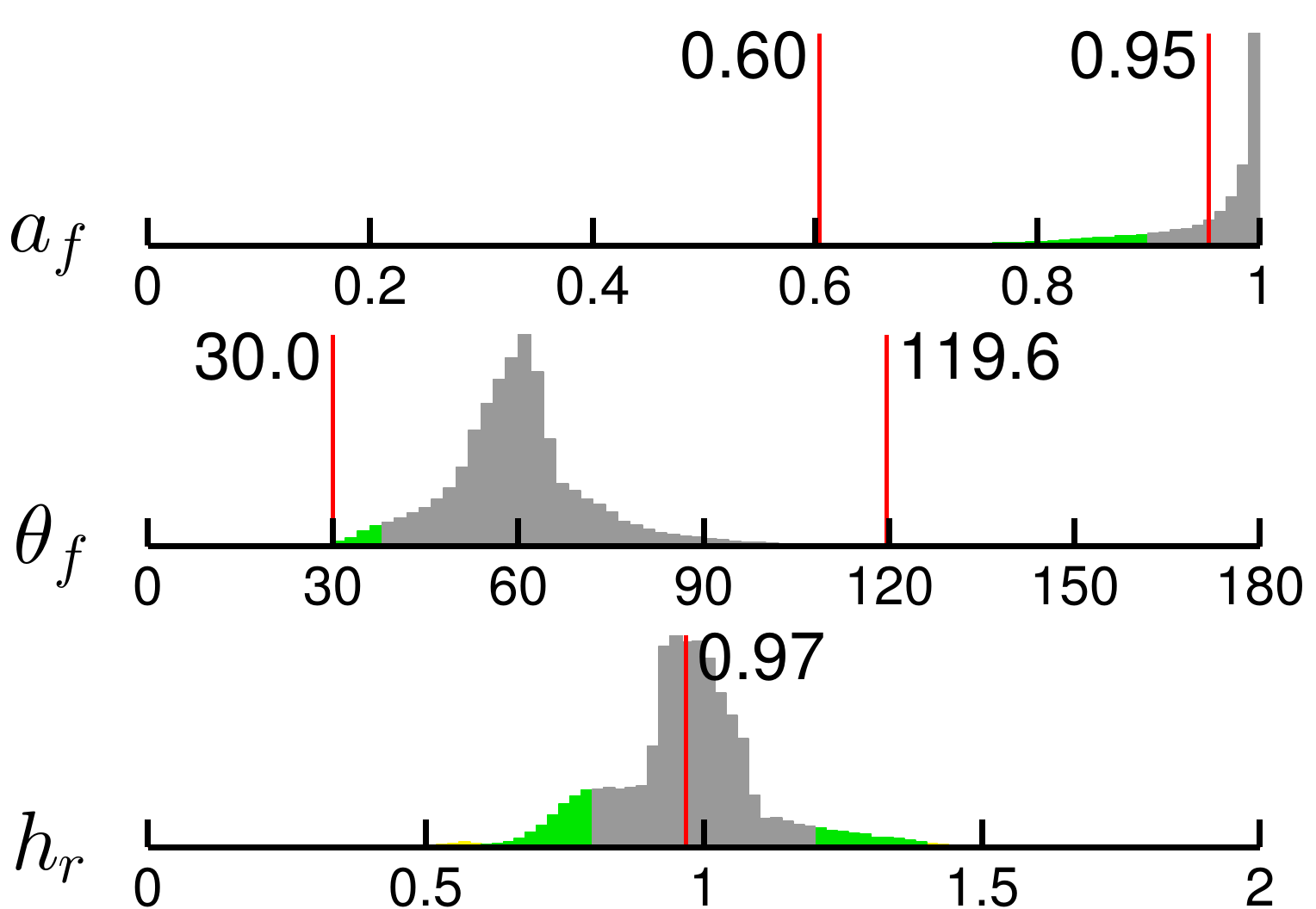} &
\includegraphics[width=5.25cm]{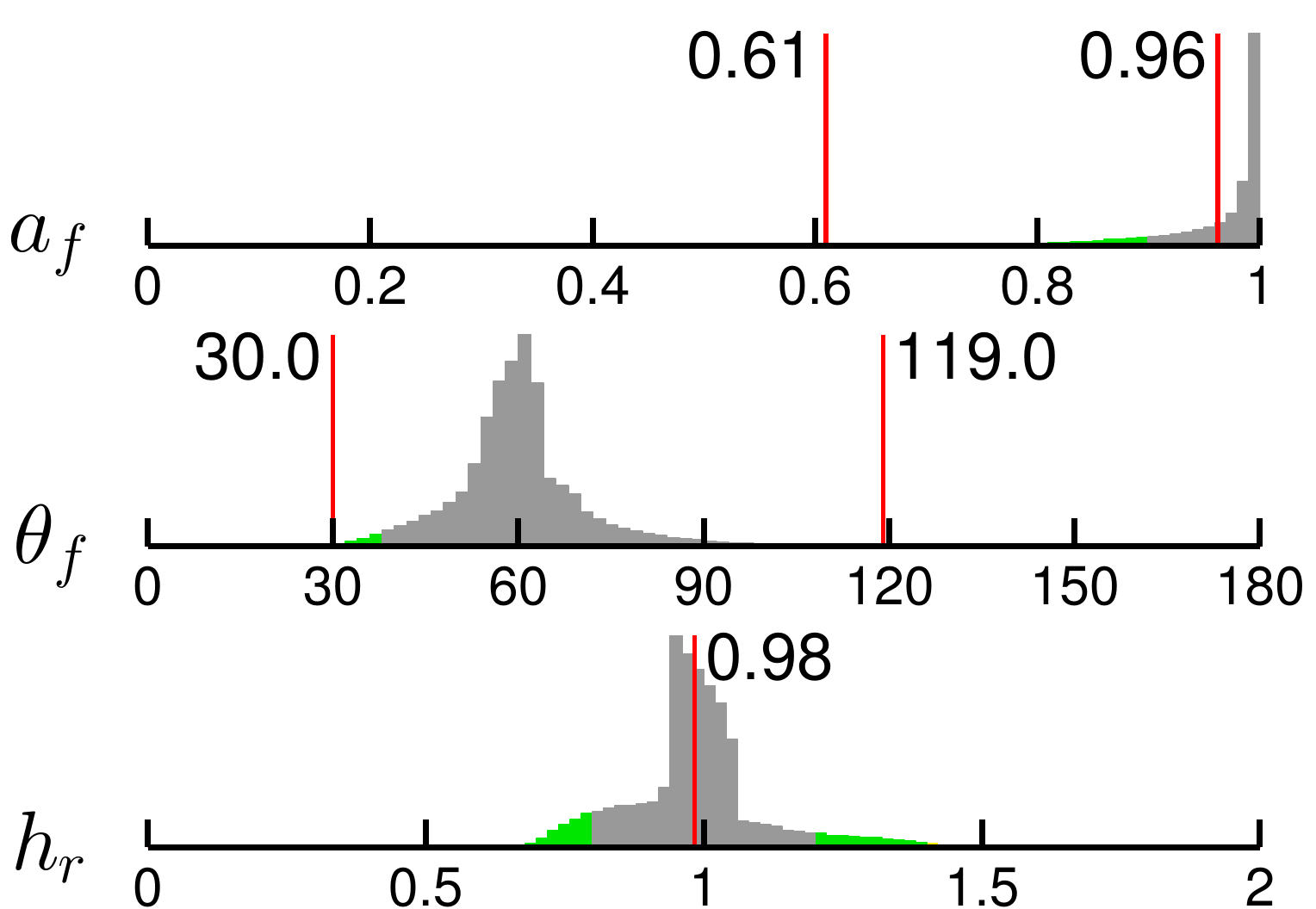} &
\includegraphics[width=5.25cm]{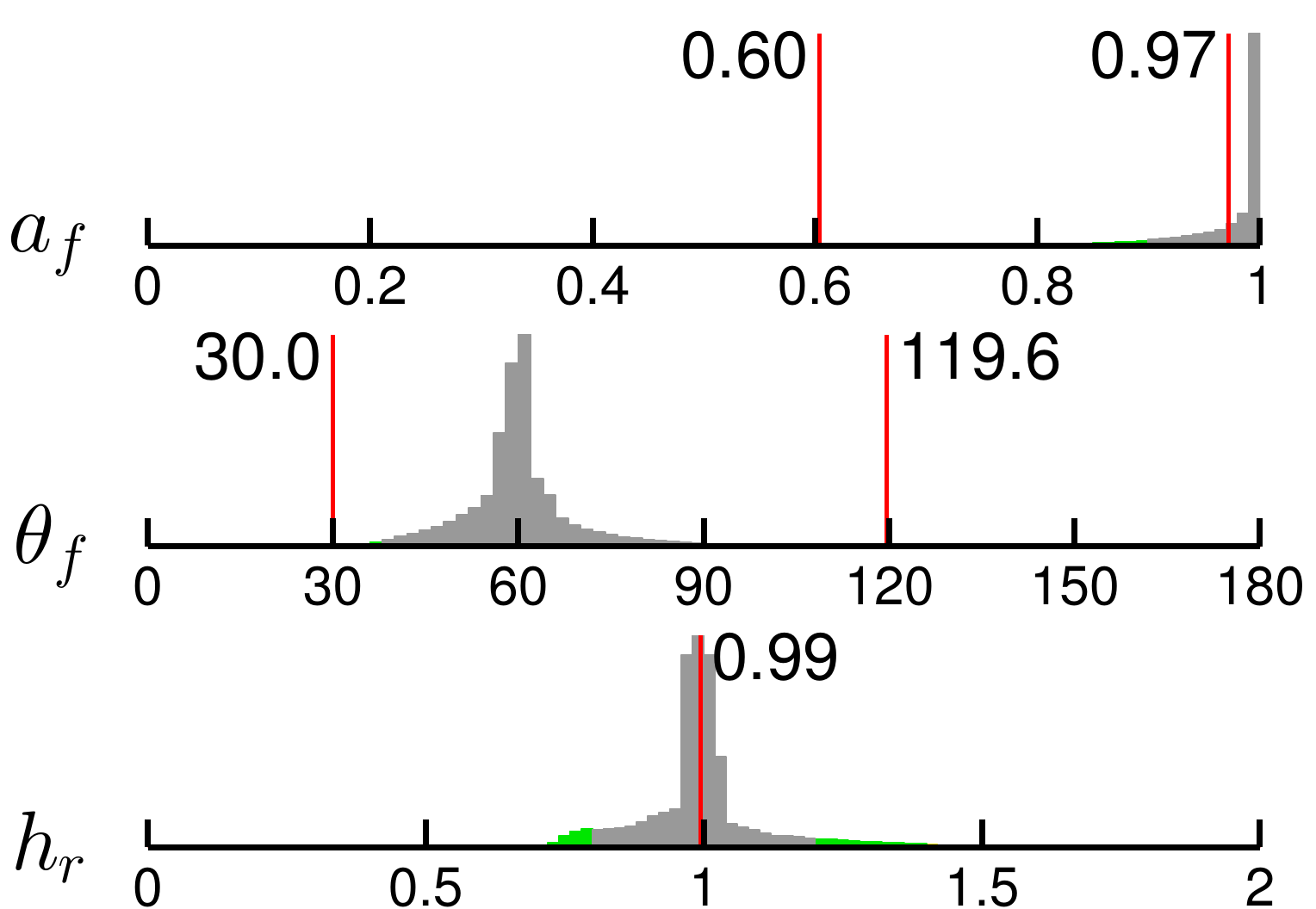}

\\ \hline

\parbox[b][1em][b]{.300\textwidth}{\center (\textsc{jgsw--dr}): $|\TS|=52,696$, $\mathrm{t}=2.19\mathrm{s}$} &
\parbox[b][1em][b]{.300\textwidth}{\center (\textsc{jgsw--dr}): $|\TS|=64,460$, $\mathrm{t}=2.71\mathrm{s}$} &
\parbox[b][1em][b]{.300\textwidth}{\center (\textsc{jgsw--dr}): $|\TS|=96,408$, $\mathrm{t}=4.01\mathrm{s}$} \\

\begin{minipage}[c]{.300\textwidth}
\begin{center}
\includegraphics[height=5.50cm]{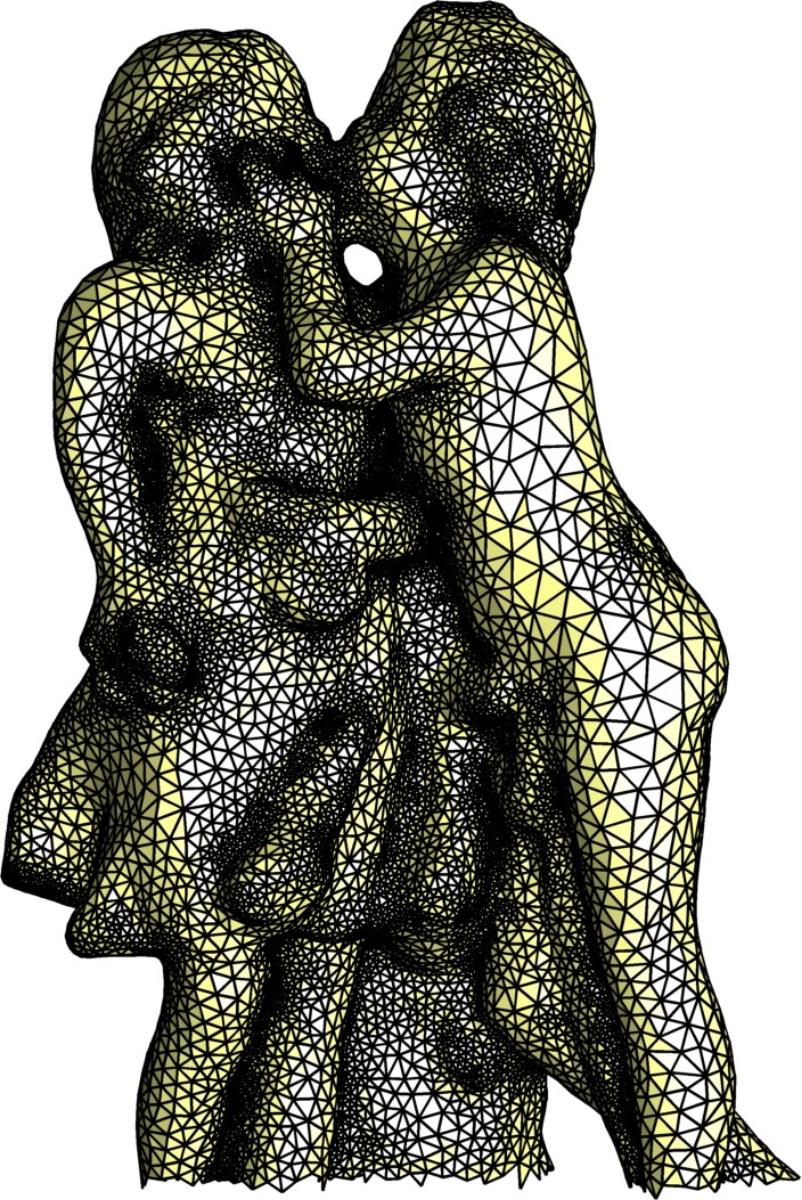} 
\end{center}
\end{minipage} &
\begin{minipage}[c]{.300\textwidth}
\begin{center}
\includegraphics[height=5.50cm]{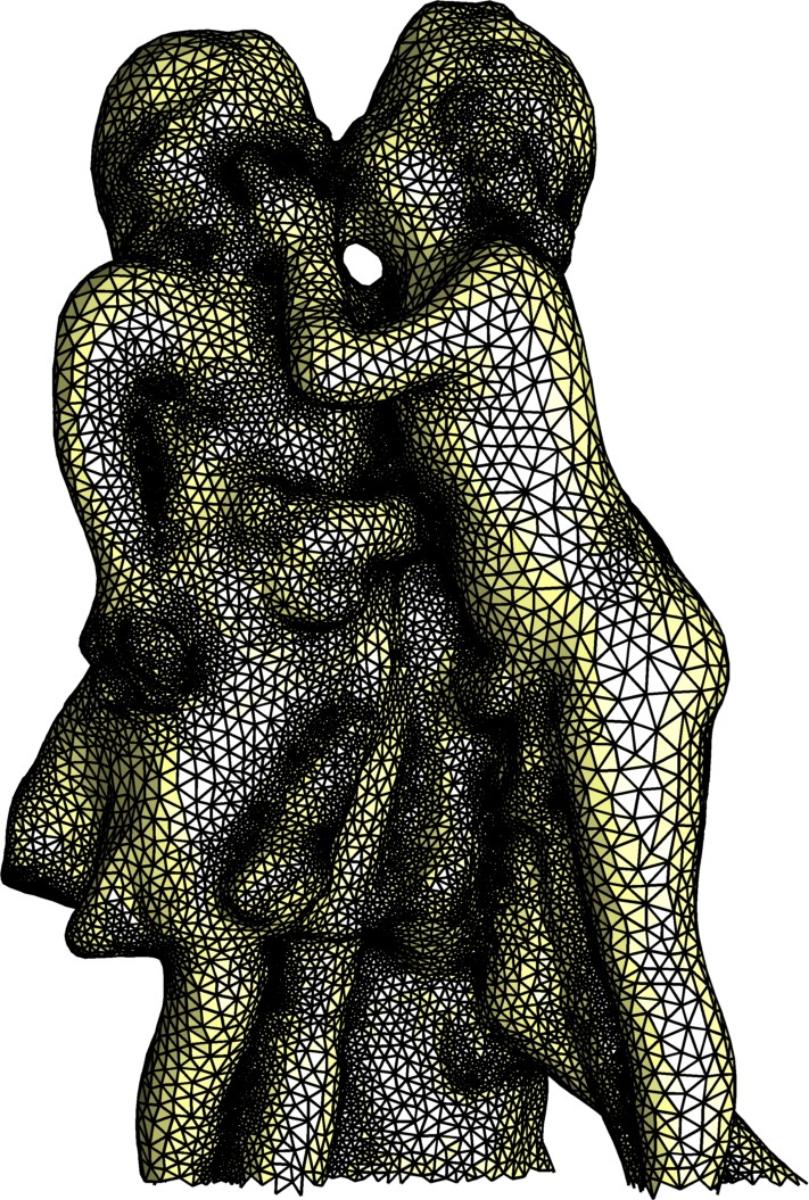}
\end{center}
\end{minipage} &
\begin{minipage}[c]{.300\textwidth}
\begin{center}
\includegraphics[height=5.50cm]{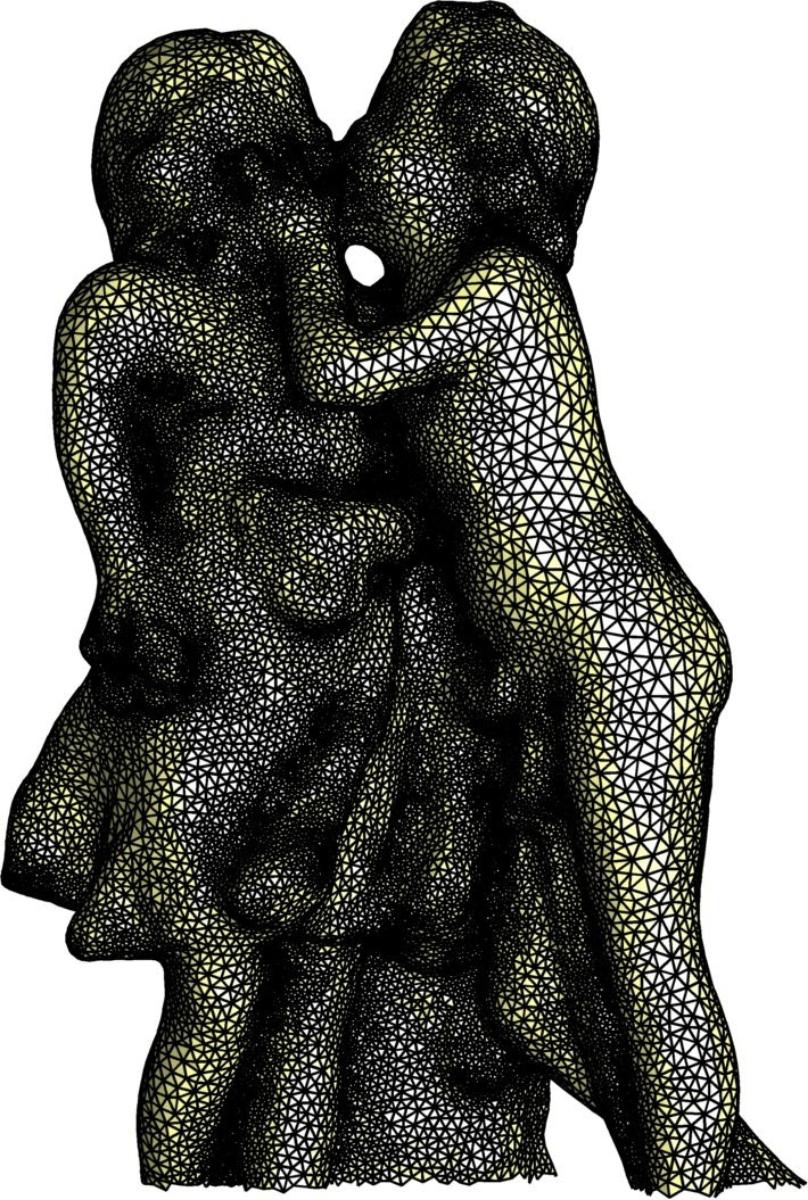}
\end{center}
\end{minipage}
\rule{0pt}{\tablestrutsize}\rule[-\tablestrutsize]{0pt}{\tablestrutsize} \\

\includegraphics[width=5.25cm]{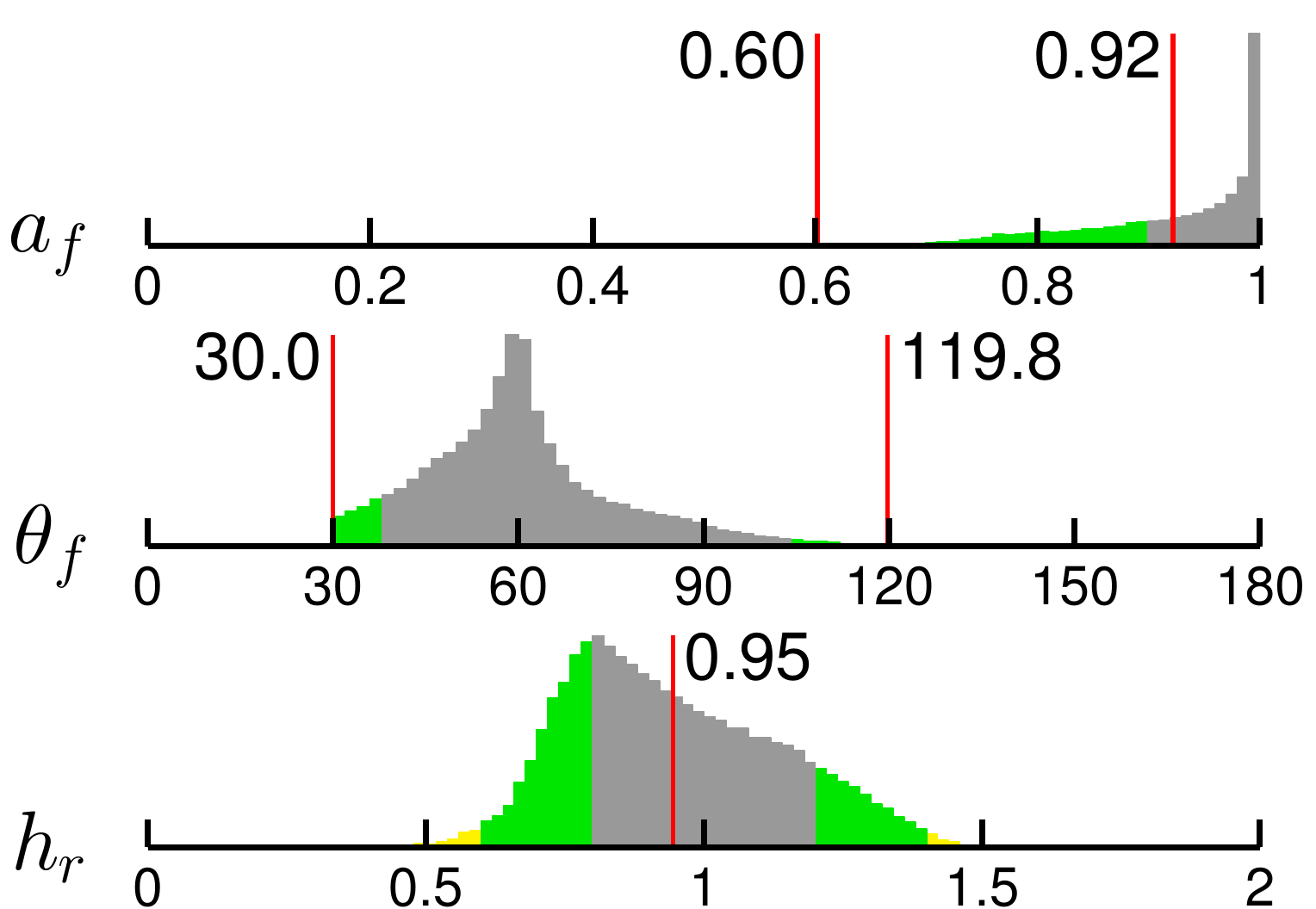} &
\includegraphics[width=5.25cm]{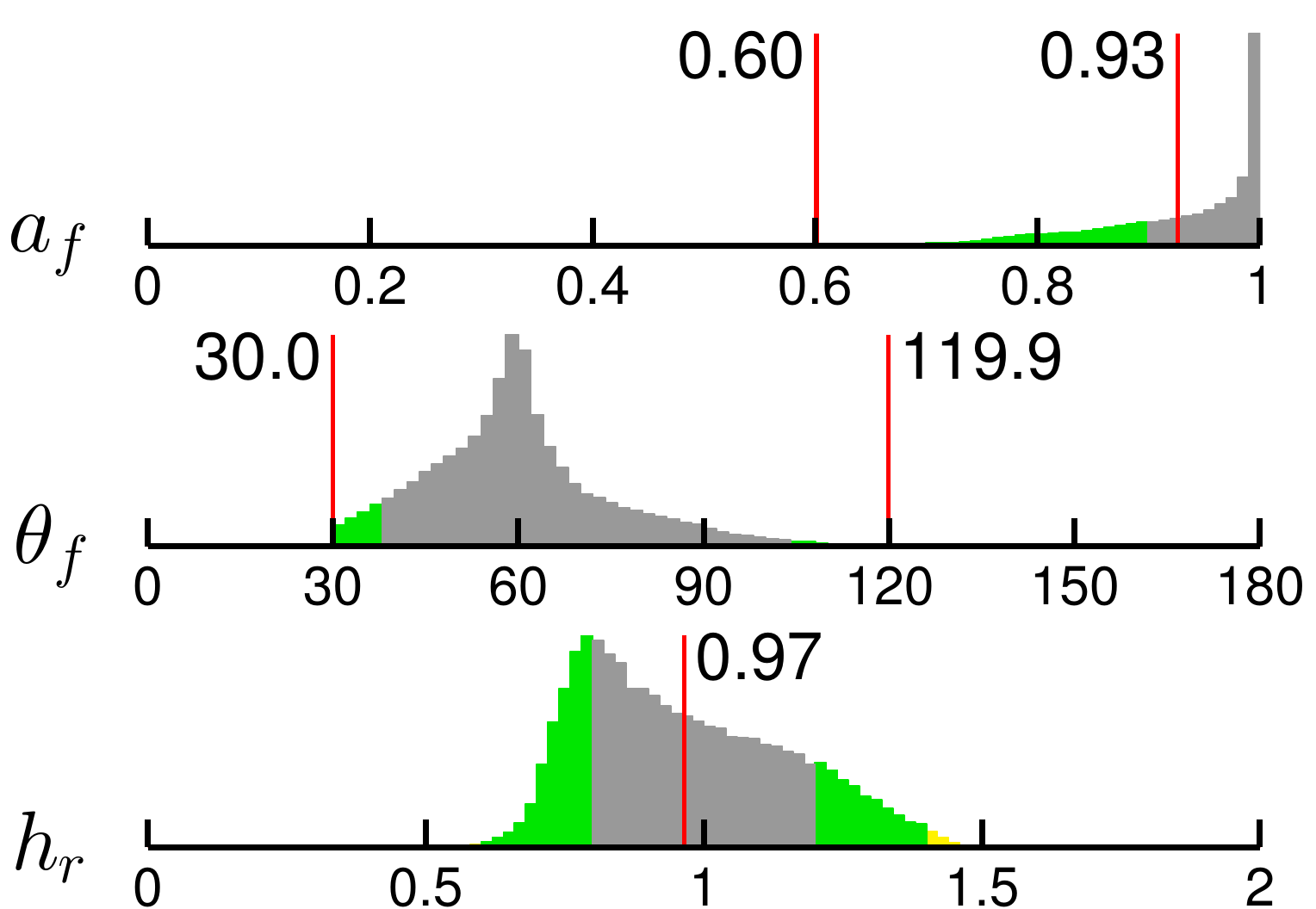} &
\includegraphics[width=5.25cm]{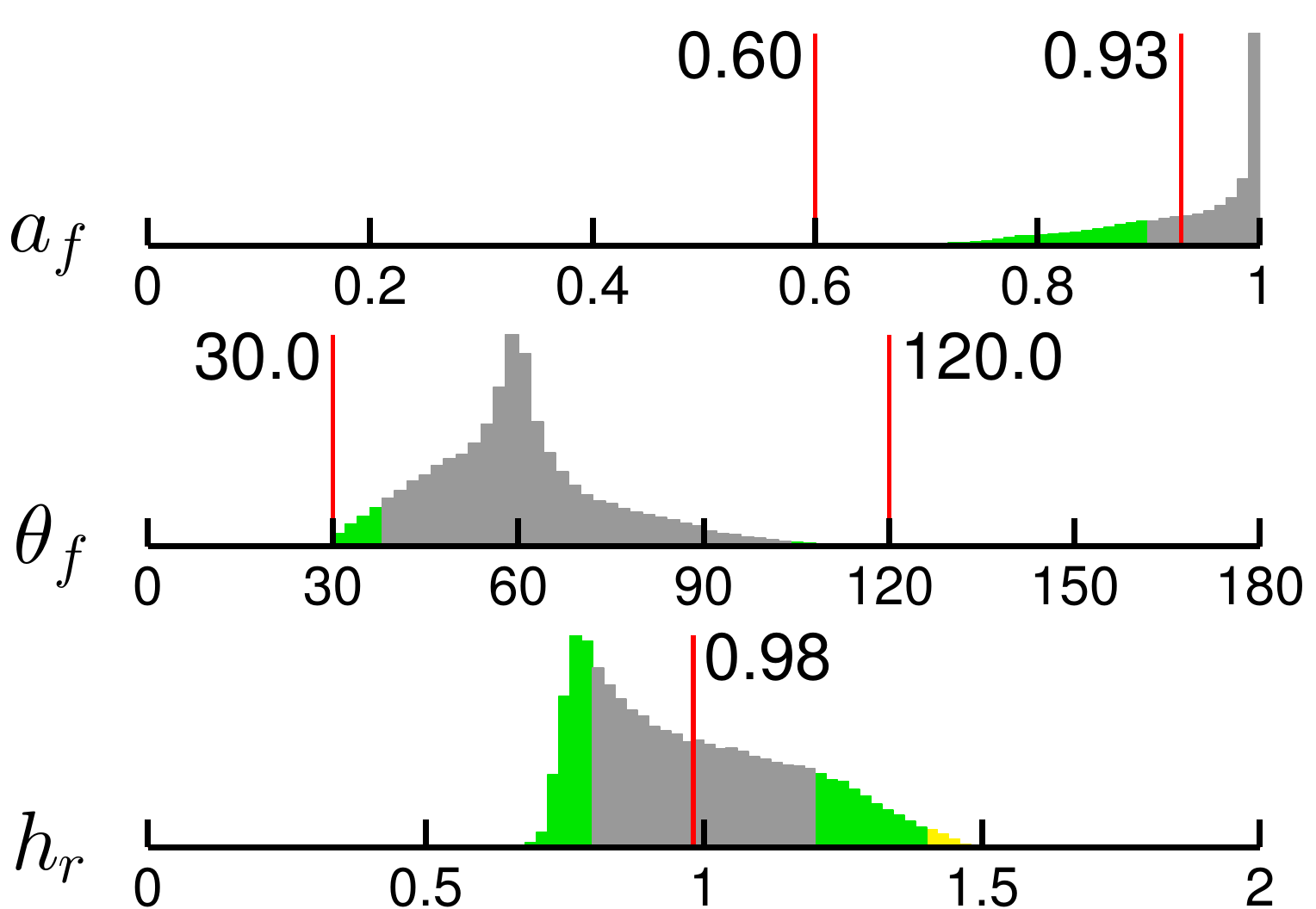}

\end{tabu}}
\end{figure*}

\begin{table*}[t]
\centering
\caption{A comparison of the mean absolute deviation in the element angle distributions $\MAD$ for the various combinations of refinement algorithms and benchmark problems considered in this study. Smaller values correspond to improved mean mesh quality. \textsc{cgal-dr} denotes the Delaunay-refinement algorithm included in the \textsc{cgal} package, and \textsc{jgsw-dr}/\textsc{jgsw-fd} denote the Delaunay-refinement and Frontal-Delaunay algorithms implemented in the current study. The rightmost column indicates the reduction in absolute deviation achieved using the Frontal-Delaunay algorithm.}

\label{table_mad}

{
\tabulinesep=1pt

\smallskip

\begin{tabu} {|c|c||c|c|c|c|}

\hline
& $g$ & \textsc{cgal-dr} & \textsc{jgsw-dr} & \textsc{jgsw-fd} & \textsc{factor} \\
\hline
\textsc{elephant} & -- & $11.96^\circ$ & -- & $4.35^\circ$ & 2.8 \\
\textsc{hip}      & -- & $11.90^\circ$ & -- & $3.89^\circ$ & 3.1 \\
\textsc{bunny}    & -- & $12.02^\circ$ & -- & $4.10^\circ$ & 2.9 \\
\hline
\textsc{ifp2}     & -- & -- & $10.72^\circ$ & $5.14^\circ$ & 2.1 \\
\textsc{femur}    & -- & -- & $10.93^\circ$ & $3.85^\circ$ & 2.8 \\
\textsc{rocker}   & -- & -- & $10.69^\circ$ & $3.92^\circ$ & 2.7 \\
\hline
\textsc{bimba} & $\nicefrac{3}{10}$ & -- & $11.25^\circ$ & $7.83^\circ$ & 1.4 \\
               & $\nicefrac{2}{10}$ & -- & $10.99^\circ$ & $7.04^\circ$ & 1.6 \\
               & $\nicefrac{1}{10}$ & -- & $10.74^\circ$ & $5.68^\circ$ & 1.9 \\
\hline
\textsc{kiss}  & $\nicefrac{3}{10}$ & -- & $11.37^\circ$ & $8.15^\circ$ & 1.4 \\
               & $\nicefrac{2}{10}$ & -- & $10.98^\circ$ & $7.16^\circ$ & 1.5 \\
               & $\nicefrac{1}{10}$ & -- & $10.74^\circ$ & $5.74^\circ$ & 1.9 \\
\hline

\end{tabu}}
\end{table*}

\subsection{Non-uniform Mesh-Size Constraints}

A sequence of meshes for the \textsc{bimba} and \textsc{kiss} problems are presented in Figures~\ref{figure_surf_fdvsdr_bimba} and~\ref{figure_surf_fdvsdr_kiss}, examining the impact of non-uniform mesh-size constraints on algorithm performance. A set of meshes were generated for both test cases using low, medium and high resolution settings, where the associated mesh size functions $\bar{h}\left(\mathbf{x}\right)$ were built using increasingly stringent gradient limits, such that $g_{i}\in\left\{\nicefrac{3}{10},\nicefrac{2}{10},\nicefrac{1}{10}\right\}$ respectively. Analysis of Figures~\ref{figure_surf_fdvsdr_bimba} and \ref{figure_surf_fdvsdr_kiss} show that both the \textsc{jgsw--fd} and \textsc{jgsw--dr} algorithms generate high-quality meshes for all test cases -- satisfying the required element quality, size and surface error thresholds. Analysis of the area-length and plane-angle distributions again shows that the \textsc{jgsw--fd} algorithm consistently outperforms the \textsc{jgsw--dr} scheme, generating meshes with higher mean area-length ratios in all cases. Furthermore, it is evident that the quality of meshes generated using the Frontal-Delaunay algorithm improves with increased uniformity, via $g\rightarrow 0$, as indicated by the increase in mean $a\left(f\right)$ and the narrowing of the $\theta\left(f\right)$ distribution about $60.0^\circ$. Results for the mean absolute deviation in the element angle distribution, $\MAD$ -- included in Table~\ref{table_mad}, shows that use of the \textsc{jgsw--fd} algorithm results in a reduction in the spread of the element angle distributions by a factor ranging from approximately 1.5 to 2.0.

In contrast, similar analysis shows that output generated via the \textsc{jgsw--dr} scheme to be essentially independent of sizing uniformity, with distributions of $a\left(f\right)$ and $\theta\left(f\right)$ showing little variation with $\bar{h}\left(\mathbf{x}\right)$. Visually, the enhanced quality of the meshes generated using the Frontal-Delaunay algorithm is again evident, with all test cases showing a marked increase in both smoothness and sub-structure. Given the tight bounds on radius-edge ratios in these tests, the Delaunay-refinement algorithm appears to achieve a maximum mean area-length ratio $\overline{a}\left(f\right)\simeq 0.92$. The Frontal-Delaunay algorithm is seen to generate consistently better output, with mean area-length metrics reaching $\bar{a}\left(f\right)\simeq 0.97$ at high resolution settings. 

Consistent with previous results, analysis of the distribution of relative-length ratios show that the \textsc{jgsw--fd} algorithm tightly conforms to the imposed non-uniform sizing constraints, with $\reledge$ tightly clustered about $1$. Furthermore, like the element shape-quality results discussed previously, it can be seen that this distribution improves as $g\rightarrow 0$, as indicated by the narrowing of the distribution about $\reledge=1$. In contrast, the Delaunay-refinement scheme is again shown to be consistently associated with significant sizing error, as illustrated by broad distributions of relative-length straddling $\reledge\simeq 1$. Such behaviour appears to be largely independent of the characteristics of the imposed mesh size constraints.

\section{Conclusions}
\label{section_conclusions}

A new Frontal-Delaunay algorithm has been developed to triangulate smooth surfaces embedded in $\mathbb{R}^{3}$. The new algorithm is based on the so-called \textit{restricted} Delaunay paradigm, in which a surface mesh $\DelS{X}$, conforming to an underlying surface description $\Sigma$, is constructed as a subset of a full-dimensional Delaunay tessellation $\Del{X}$. The new restricted Frontal-Delaunay algorithm is based on a generalisation of the work of Rebay \cite{Rebay93FrontalDelaunay} and \"Ung\"or \cite{Ungor09OffCenters} for planar problems, in which generalised `off-centre' Steiner vertices are inserted along edges in the Voronoi diagram. This work has been extended to support surface meshing operations through the development of a new point-placement strategy that positions vertices on facets in the underlying Voronoi complex. This new scheme allows for the insertion of both size- and shape-optimal Steiner vertices, leading to a hybrid approach that combines many of the advantages of conventional advancing-front and Delaunay-refinement techniques. A series of comparative experimental studies confirm the effectiveness of this new approach in practice, demonstrating that an improvement in mesh-quality is typically achieved when compared to conventional Delaunay-refinement schemes. Importantly, it has also been demonstrated that the new Frontal-Delaunay algorithm satisfies the same set of constraints as conventional restricted Delaunay-refinement approaches, adhering to limits on element radius-edge ratios, edge length and surface discretisation error. Results show that the new algorithm is an effective hybridisation of existing mesh generation techniques, combining the high element quality and mesh sub-structure of advancing-front techniques with the theoretical guarantees of Delaunay-refinement schemes. It is expected that applications that place a premium on mesh-quality, including problems in computational fluid dynamics and/or structural analysis, may benefit from the new Frontal-Delaunay technique. Future work should focus on support for an extended class of surface definitions, including domains containing sharp features.

\section*{Acknowledgements} This work was carried out at the University of Sydney with the support of an Australian Postgraduate Award. The authors also wish to thank the anonymous reviewers for their helpful comments and feedback.

\section*{References}

\bibliographystyle{elsarticle-num}
\bibliography{references}

\end{document}